\newcommand{\notecostin}[1]{}
\newcommand{\noteryan}[1]{}
\newcommand{\notejohn}[1]{}
\renewcommand\bra[1]{{\langle{#1}|}}
\renewcommand\ket[1]{%
  \@ifnextchar\bra{\k@t{#1}\!}{\k@t{#1}}%
}
\newcommand\k@t[1]{{|{#1}\rangle}}
\newcommand{\Sym}[1]{\mathfrak{S}_{#1}}
\newcommand{\Srep}{\mathcal{P}}           
\DeclareMathOperator{\RepP}{\mathcal{P}}
\DeclareMathOperator{\RepQ}{\mathcal{Q}}
\newcommand{\on}{{\otimes n}}
\newcommand{\cyc}{\mathrm{cyc}}
\newcommand{\dtv}[2]{\mathrm{d}_{\mathrm{TV}}(#1,#2)}
\newcommand{\delltwo}[2]{\mathrm{d}_{\ell_2}(#1,#2)}
\newcommand{\delltwosq}[2]{\mathrm{d}^2_{\ell_2}(#1,#2)}
\newcommand{\dhell}[2]{\mathrm{d}_{\mathrm{H}}(#1,#2)}
\newcommand{\dhellsq}[2]{\mathrm{d}^2_{\mathrm{H}}(#1,#2)}
\newcommand{\BC}[2]{\mathrm{BC}(#1,#2)}
\DeclarePairedDelimiterX\diverg[2]{(}{)}{#1 \mathrel{}\mathclose{}\delimsize\|\mathopen{}\mathrel{} #2}
\newcommand{\dchisq}[2]{\mathrm{d}_{\chi^2}\diverg{#1}{#2}}
\newcommand{\Dtr}[2]{\mathrm{D}_{\mathrm{tr}}(#1,#2)}
\newcommand{\DHS}[2]{\mathrm{D}_{\mathrm{HS}}(#1,#2)}
\newcommand{\DHSsq}[2]{\mathrm{D}^2_{\mathrm{HS}}(#1,#2)}
\newcommand{\DB}[2]{\mathrm{D}_{\mathrm{B}}(#1,#2)}
\newcommand{\DBsq}[2]{\mathrm{D}^2_{\mathrm{B}}(#1,#2)}
\newcommand{\Fid}[2]{\mathrm{F}(#1,#2)}
\newcommand{\Fidsq}[2]{\mathrm{F}^2(#1,#2)}
\newcommand{\DBchi}[2]{\mathrm{D}_{\chi^2}\diverg{#1}{#2}}
\newcommand{\chiobs}{\mathcal{X}_\sigma}
\newcommand{\chiobsij}[2]{\chiobs^{(#1,#2)}}
\newcommand{\chiobsavg}{\mathcal{O}_{\chi^2}}
\newcommand{\trtwo}[2]{\omega^{(2)}_\sigma(#1,#2)}
\newcommand{\trthree}[3]{\omega^{(3)}_\sigma(#1,#2,#3)}
\newcommand*{\cO}{\mathcal{O}}
\newcommand*{\gS}{\mathfrak{S}}
\newcommand*{\tensor}{\otimes}
\newcommand*{\placeholder}{\,\cdot\,}
\DeclareMathOperator{\Ad}{Ad}
\DeclareMathOperator{\End}{End}
\DeclareMathOperator{\GL}{GL}
\DeclareMathOperator{\U}{U}
\newcommand{\purityestimator}[1]{\mathsf{TN}(#1)}
\newcommand{\Specht}[1]{\mathrm{Sp}_{#1}}
\newcommand{\Weyl}[2]{\mathrm{V}_{#1}^{#2}}
\newcommand{\sirrep}[1]{p_{#1}}
\newcommand{\types}[2]{\mathrm{Types}^{#1}_{#2}}
\newcommand{\SYT}[1]{\mathrm{SYT}_{#1}}
\newcommand{\SSYT}[2]{\mathrm{SSYT}_{#1}^{#2}}
\newcommand{\bmu}{\boldsymbol{\mu}}
\newcommand{\bnu}{\boldsymbol{\nu}}
\newcommand{\btau}{\boldsymbol{\tau}}
\begin{document}

\title{Quantum state certification}
\author{Costin B\u{a}descu\thanks{Computer Science Department, Carnegie Mellon University.  Supported by NSF grant CCF-1618679. \texttt{\{cbadescu,odonnell\}@cs.cmu.edu}} \and Ryan O'Donnell${}^*$ \and John Wright\thanks{Center for Theoretical Physics, Massachusetts Institute of Technology. Supported by NSF grant CCF-6931885. \texttt{jswright@mit.edu}}}

\maketitle

\begin{abstract}
    We consider the problem of \emph{quantum state certification}, where one is given $n$ copies of an unknown $d$-dimensional quantum mixed state~$\rho$, and one wants to test whether $\rho$ is equal to some known mixed state~$\sigma$ or else is $\eps$-far from~$\sigma$.  The goal is to use notably fewer copies than the $\Omega(d^2)$ needed for full tomography on~$\rho$ (i.e., density estimation).  We give two robust state certification algorithms: one with respect to fidelity using $n = O(d/\eps)$ copies, and one with respect to trace distance using $n = O(d/\eps^2)$ copies. The latter algorithm also applies when $\sigma$ is unknown as well.  These copy complexities are optimal up to constant factors.
\end{abstract}

\section{Introduction}

A key step in building quantum devices is verifying that they work as intended. Typically, a quantum device is designed with the intent of outputting some known $d$-dimensional (mixed) state $\sigma \in \C^{d \times d}$, but the possibility of imperfections in the device's construction and noise in the device's operation mean that its actual output state $\rho \in \C^{d \times d}$ is unknown.  \emph{Quantum state certification} refers to the problem of testing whether~$\rho$ equals~$\sigma$ or is far from~$\sigma$, given the ability to produce~$\rho^{\otimes n}$ (i.e.,~$n$ copies of~$\rho$).  This is the quantum (noncommutative) generalization of the classical statistical problem of testing identity of probability distributions~\cite{Can15}.

A standard approach for quantum state certification is to first estimate~$\rho$ from $\rho^{\on}$ using a \emph{quantum state tomography (estimation)} procedure, then to check that the estimate is close to~$\sigma$.
Given that~$\rho$ has $d^2-1$ real parameters, it is natural that the number of copies needed to estimate it should scale roughly as~$d^2$.  This was confirmed in a trio of recent papers~\cite{HHJ+16,OW16,OW17}; among other things, those works show that $n = \wt{\Theta}(d^2/\eps)$ copies of~$\rho$ are necessary and sufficient to produce an estimate~$\wh{\rho}$ satisfying the fidelity bound $\Fid{\rho}{\wh{\rho}} \geq 1-\eps$.  (See \Cref{sec:prior-quantum} for more on prior work, and \Cref{sec:distances} for a review of distance measures such as fidelity, trace distance, $\chi^2$-divergence, etc.)

Unfortunately, even small scale quantum systems can have large dimension; for example, a system of~$q$ qubits has $d = 2^q$ dimensions.  For such systems, the quadratic scaling in~$d$ required by full tomography (density estimation) can be prohibitively expensive.  For example, a 2005 experiment~\cite{HHR+05} designed to produce the entangled $8$-particle $W$-state ($d = 256$) used
$n=656100$ copies to estimate the actually-produced state.  (The fidelity to the target state ended up being estimated as $.85$.)

However for the quantum state certification problem, the goal is not to learn the unknown state~$\rho \in \C^{d \times d}$ but merely to test whether it is close to a target $\sigma$, or far from it.  Learning the entire density matrix might be wasting copies of~$\rho$ to gain irrelevant information. As such, it is natural to ask: can we outperform tomography?

\subsection{Our results}
In this work, we give a unified framework for analyzing the number of copies of~$\rho$ needed to estimate polynomial functions of~$\rho$ and hence perform various quantum state certification tasks.    One of our main results is the following:
\begin{theorem}                                     \label{thm:maybe-main}
    Let $\sigma \in \C^{d \times d}$ be a fixed mixed state, and let $\eps > 0$.  There is an algorithm that, given $n = O(d/\eps)$ copies of $\rho$, performs a measurement and then reports either ``close'' or ``far''.  The algorithm has the following guarantee (with high probability\footnote{Henceforth abbreviated ``whp''.  We may take this to mean probability at least, say,~$2/3$; however, by standard means this probability can be boosted to $1-\delta$ at the expense of multiplying $n$ by $O(\log(1/\delta))$.}):  If it reports ``close'' then we have the fidelity bound $\Fid{\rho}{\sigma} \geq 1-\eps$. If it reports ``far'' then we have the Bures $\chi^2$-divergence\footnote{The Bures $\chi^2$-divergence is reviewed in \Cref{sec:distances}.} bound $\DBchi{\rho}{\sigma} > .49\eps$.
\end{theorem}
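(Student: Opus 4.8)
The plan is to reduce the task to estimating the single scalar $\DBchi{\rho}{\sigma}$ and thresholding it at some $\tau\in(.49\eps,2\eps)$, reporting ``close'' exactly when the estimate lies below $\tau$. The two guarantees collapse to one accuracy requirement via the inequality $1-\Fid{\rho}{\sigma}\le\tfrac12\DBchi{\rho}{\sigma}$ from \Cref{sec:distances} (the Bures $\chi^2$-divergence dominates the squared Bures distance $2(1-\Fid{\rho}{\sigma})$): reporting ``close'' is correct as soon as $\DBchi{\rho}{\sigma}\le2\eps$, and reporting ``far'' is correct as soon as $\DBchi{\rho}{\sigma}>.49\eps$. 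These two acceptable regions overlap in the whole window $(.49\eps,2\eps)$, so the estimator only needs to keep the estimate on the correct side of $\tau$ when the true divergence is $\le.49\eps$ or $\ge2\eps$; a standard deviation that is a small constant multiple of $\eps$ suffices there, and I obtain it by taking $n=Cd/\eps$ for a large constant $C$.

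For the estimator I would exploit that $\DBchi{\rho}{\sigma}$ is quadratic in $\rho$. With $\mathcal{M}_\sigma(X)=\tfrac12(\sigma X+X\sigma)$ the symmetric-logarithmic-derivative map one has $\DBchi{\rho}{\sigma}=\tr[\rho\,\mathcal{M}_\sigma^{-1}(\rho)]-1$, and diagonalizing $\sigma=\sum_i s_i\ket{i}\!\bra{i}$ gives $\tr[\rho\,\mathcal{M}_\sigma^{-1}(\rho)]=\sum_{i,j}\tfrac{2|\rho_{ij}|^2}{s_i+s_j}$ in terms of the entries $\rho_{ij}=\bra{i}\rho\ket{j}$ in the $\sigma$-eigenbasis. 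The crucial observation is that this equals $\tr[W(\rho\otimes\rho)]$ for the fixed, $\sigma$-dependent two-copy observable $W=2K\cdot\mathrm{SWAP}$ with $K=\sum_{i,j}\tfrac{1}{s_i+s_j}\ket{ij}\!\bra{ij}$; since $K$ is symmetric under exchange it commutes with $\mathrm{SWAP}$, so $W$ is Hermitian and measurable. Using permutation-invariance of $\rho^{\on}$, I would then measure the single collective observable $\mathcal{O}=\binom{n}{2}^{-1}\sum_{a<b}W_{ab}$, whose expectation $\tr[\mathcal{O}\rho^{\on}]$ equals $\tr[W(\rho\otimes\rho)]=\DBchi{\rho}{\sigma}+1$; thus $\mathcal{O}-1$ is an unbiased estimator of the divergence.

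The whole theorem then rests on bounding $\mathrm{Var}[\mathcal{O}]=\tr[\mathcal{O}^2\rho^{\on}]-(\DBchi{\rho}{\sigma}+1)^2$, which I expect to be the main obstacle. Expanding $\mathcal{O}^2$ into permutation-operator traces, the pairs $\{a,b\},\{c,d\}$ that are disjoint factor and cancel against the squared mean, so the variance is carried by pairs sharing one or two indices. Using $W^2=4K^2$, the doubly-shared term is $\binom{n}{2}^{-1}\tr[W^2(\rho\otimes\rho)]=\binom{n}{2}^{-1}\sum_{i,j}\tfrac{4\rho_{ii}\rho_{jj}}{(s_i+s_j)^2}$, while the singly-shared term is an $O(1/n)$ multiple of a weighted three-cycle trace $\tr[W_{12}W_{13}\rho^{\otimes3}]$. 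In the benchmark $\sigma=I/d$ one has $\DBchi{\rho}{\sigma}=d\tr[\rho^2]-1$ and $\tr[W^2(\rho\otimes\rho)]=d^2$, so the doubly-shared term is $\tfrac{2d^2}{n^2}(1+o(1))$, which drops below $\eps^2$ exactly when $n=\Omega(d/\eps)$. It is essential here that $\mathcal{O}$ ranges over all $\binom{n}{2}$ pairs: $n/2$ disjoint pairs would only supply the weaker $d^2/\eps^2$.

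For general $\sigma$ the weights $\tfrac{1}{s_i+s_j}$ blow up on the small eigenvalues of $\sigma$, so the variance must be shown to be \emph{self-bounding}, of the form $\mathrm{Var}[\mathcal{O}]\lesssim\eps^2+\eps\cdot\DBchi{\rho}{\sigma}$: near the threshold it is then $O(\eps^2)$, while for large $\DBchi{\rho}{\sigma}$ it stays $o(\DBchi{\rho}{\sigma}^2)$, so Chebyshev still places $\mathcal{O}$ on the correct side of $\tau$. Proving such a bound is the technical heart: via $(s_i+s_j)^2\ge4s_is_j$ and Cauchy--Schwarz one would relate both $\tr[W^2(\rho\otimes\rho)]$ and the three-cycle trace back to the mean $\tr[W(\rho\otimes\rho)]=\DBchi{\rho}{\sigma}+1$, the danger cases being coordinates where $\rho_{ii}\gg s_i$, which already force the mean large. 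If a direct argument stalls on the very smallest eigenvalues of $\sigma$, I would precede it with a truncation step: measure the mass $\rho$ places on the span of eigenvectors with $s_i$ below a threshold of order $\eps/d$, report ``far'' if it is non-negligible (such mass by itself forces $\DBchi{\rho}{\sigma}$ large and the fidelity low), and otherwise run the estimator on the renormalized truncated state. A Chebyshev bound on $\mathcal{O}$ then delivers the required accuracy whp using $n=O(d/\eps)$ copies.
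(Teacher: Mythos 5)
Your estimator is exactly the paper's: your $W=2K\cdot\mathrm{SWAP}$ is the $\chi^2$ observable $\chiobs$ of \Cref{sec:chisq}, your pair-averaged $\cO$ is $\chiobsavg+\Id$, and your reduction of the two-sided guarantee to thresholding an estimate of $\DBchi{\rho}{\sigma}$ in the window $(.49\eps,2\eps)$ via $\DBsq{\rho}{\sigma}\le\DBchi{\rho}{\sigma}$ is also how the paper proceeds. The gap is in the step you yourself flag as the technical heart. The ``self-bounding'' bound $\Var[\cO]\lesssim\eps^2+\eps\cdot\DBchi{\rho}{\sigma}$ that your primary route needs is \emph{false} for general $\sigma$: the singly-shared (three-cycle) term contains $\trthree{\Delta}{\Delta}{\Delta}$, which the paper can only bound by $\sqrt{2d/\delta}\cdot\DBchi{\rho}{\sigma}^{3/2}$ where $\delta$ is the smallest eigenvalue of $\sigma$ (\Cref{prop:term3}), and this dependence is real --- e.g.\ for $d=2$, $\sigma=\diag(1-\eta,\eta)$ and $\Delta=\diag(-\sqrt{\eta},\sqrt{\eta})$ one gets $\DBchi{\rho}{\sigma}\approx 1$ while the three-cycle term is $\approx\eta^{-1/2}$, so with $n=O(d/\eps)$ the estimator does not concentrate as $\eta\to0$. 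Likewise $\tr[W^2(\rho\otimes\rho)]$ carries a $(d/\delta)\cdot\DBchi{\rho}{\sigma}$ term (\Cref{prop:term4}). So some preprocessing of small eigenvalues is not an optional fallback but mandatory.

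Your fallback (measure the mass on the low-eigenvalue subspace, then truncate and renormalize) is in the right spirit but is left entirely unverified, and it has a real obstruction: truncation-plus-renormalization is not a single quantum channel applied identically to both states, so you cannot invoke data processing to conclude that $\DBchi{\rho}{\sigma}\le.49\eps$ implies the truncated divergence is still $\le.49\eps$, nor that small infidelity survives truncation; the cross-block coherences between the kept and discarded subspaces need separate bookkeeping. The paper's route (\Cref{cor:robust-bures}) avoids all of this by applying the depolarizing channel $\Phi_{c\eps}(\nu)=(1-c\eps)\nu+c\eps\,\Id/d$ to \emph{both} $\rho$ and $\sigma$: this is a genuine channel, so the quantum data processing inequality gives the ``close'' direction for free; the ``far'' direction follows from the triangle inequality for the Bures metric together with $\DBsq{\nu}{\Phi_{c\eps}(\nu)}\le 2c\eps$; and $\sigma'=\Phi_{c\eps}(\sigma)$ has all eigenvalues at least $c\eps/d$, so the well-conditioned tester (\Cref{thm:robust-chi-sq}, i.e.\ your estimator plus \Cref{thm:chi-var} and \Cref{lem:chebyshev}) applies at scale $.5\eps$ with $n=O(d/\eps)$. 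To complete your proof you would need either to carry out this depolarization argument or to supply the missing completeness/soundness analysis for your truncation scheme, in addition to actually proving the variance bound whose structure you only describe.
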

To put it another way, if $\DBchi{\rho}{\sigma} \leq .49\eps$ (in particular, if $\rho = \sigma$) then the algorithm reports ``close'' and if  $\Fid{\rho}{\sigma} < 1-\eps$ then the algorithm reports ``far'' (whp).  We remark that the notions of ``close'' and ``far'' in \Cref{thm:maybe-main} are nearly complementary, since it's known that every pair of states $\rho, \sigma$ satisfies either $\Fid{\rho}{\sigma} \geq 1-\eps$ or $\DBchi{\rho}{\sigma} > .5\eps$.

\Cref{thm:maybe-main} is stronger than the usual kind of state certification result in that it is \emph{robust}, meaning that the test ``accepts'' not just if $\rho = \sigma$ but also if $\rho$ is sufficiently close to~$\sigma$.  The simplified (weaker) version would be:
\begin{corollary}                                       \label{cor:main}
    For a fixed mixed state $\sigma \in \C^{d \times d}$  and $\eps > 0$, there is an algorithm that, given $n = O(d/\eps)$ copies of $\rho$, distinguishes (whp) between the cases $\rho = \sigma$ and $\Fid{\rho}{\sigma} < 1-\eps$.
\end{corollary}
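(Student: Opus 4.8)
The plan is to observe that \Cref{cor:main} is an immediate consequence of the robust guarantee in \Cref{thm:maybe-main}: I would run the very same algorithm and simply read off its two cases via contrapositive reasoning, with no additional measurement or analysis required.

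First I would invoke the measurement-based algorithm promised by \Cref{thm:maybe-main}, which uses $n = O(d/\eps)$ copies of $\rho$ and always outputs either ``close'' or ``far''. Its guarantees are that (whp) a report of ``close'' implies $\Fid{\rho}{\sigma} \geq 1-\eps$, while a report of ``far'' implies $\DBchi{\rho}{\sigma} > .49\eps$. I would then treat the two cases of the corollary separately. In the case $\rho = \sigma$, the key observation is that the Bures $\chi^2$-divergence of a state with itself vanishes, so $\DBchi{\rho}{\sigma} = \DBchi{\sigma}{\sigma} = 0 \leq .49\eps$; by the contrapositive of the ``far'' guarantee the algorithm therefore cannot report ``far'' (whp), and since it always returns one of the two answers it must report ``close''. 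In the case $\Fid{\rho}{\sigma} < 1-\eps$, the contrapositive of the ``close'' guarantee shows the algorithm cannot report ``close'' (whp), so it must report ``far''.

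Putting these together, the algorithm returns ``close'' in the first case and ``far'' in the second, distinguishing them whp with copy complexity $n = O(d/\eps)$, as claimed. There is essentially no obstacle beyond \Cref{thm:maybe-main} itself: the only additional fact needed is that a divergence is zero on identical arguments, which places the point $\rho = \sigma$ safely inside the accept region $\DBchi{\rho}{\sigma} \leq .49\eps$. All of the real work lies in establishing the robust theorem; the corollary is merely its non-robust specialization, and I expect to state it in a few lines.
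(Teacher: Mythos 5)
Your proposal is correct and matches the paper's own reasoning exactly: immediately after stating \Cref{thm:maybe-main}, the paper notes that ``if $\DBchi{\rho}{\sigma} \leq .49\eps$ (in particular, if $\rho = \sigma$) then the algorithm reports `close' and if $\Fid{\rho}{\sigma} < 1-\eps$ then the algorithm reports `far','' which is precisely your contrapositive reading combined with $\DBchi{\sigma}{\sigma} = 0$. Nothing further is needed.
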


The stronger version \Cref{thm:maybe-main} is actually an easy consequence (see \Cref{sec:chi2-consequences}) of the following certification procedure for ``well-conditioned'' states, robust with respect to Bures $\chi^2$-divergence:
\begin{theorem}\label{thm:robust-chi-sq}
    Let $c  > 0$ be any small constant. Fix a $d$-dimensional mixed state~$\sigma$ with smallest eigenvalue at least $c \epsilon^2/d$. Then there is an algorithm that, given $n = O(d/\eps^2)$ copies of~$\rho$, (whp) outputs ``close'' if $\DBchi{\rho}{\sigma} \leq .99 \eps^2$ and outputs ``far'' if $\DBchi{\rho}{\sigma} > \eps^2$.
\end{theorem}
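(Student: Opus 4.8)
The plan is to reduce the certification task to estimating a single degree-two polynomial in $\rho$ and thresholding it. Writing $\sigma = \sum_i \lambda_i \ket{i}\bra{i}$ in its eigenbasis, a direct expansion of the Bures $\chi^2$-divergence gives $\DBchi{\rho}{\sigma} = 2Q(\rho) - 1$, where $Q(\rho) = \sum_{i,j} \frac{\rho_{ij}\rho_{ji}}{\lambda_i + \lambda_j} = \tr[(\rho \otimes \rho)\, M_\sigma]$ and $M_\sigma = \sum_{i,j}\frac{1}{\lambda_i+\lambda_j}\ket{j}\bra{i}\otimes\ket{i}\bra{j}$ is a fixed Hermitian, swap-symmetric operator determined by $\sigma$. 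Thus ``close'' ($\DBchi{\rho}{\sigma}\le .99\eps^2$) and ``far'' ($\DBchi{\rho}{\sigma} > \eps^2$) correspond to $Q \le \tfrac12 + .495\eps^2$ versus $Q > \tfrac12 + .5\eps^2$, a gap of $\Theta(\eps^2)$. So it suffices to estimate $Q(\rho)$ to additive error $o(\eps^2)$ and compare against a threshold placed inside this gap.

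To estimate $Q$, I would invoke the paper's polynomial-estimation framework with the two-copy $U$-statistic observable $A = \frac{1}{n(n-1)}\sum_{a\ne b} (M_\sigma)_{ab}$ on $(\C^d)^{\otimes n}$, measured projectively on $\rho^{\on}$. Since each pair contributes $\tr[(\rho\otimes\rho)M_\sigma] = Q$, the measurement outcome is unbiased, $\mathbb{E}[A] = Q$, and the whole analysis reduces to controlling $\mathrm{Var}[A] = \tr[\rho^{\on} A^2] - Q^2$. Expanding $A^2$ and grouping the index pairs $\{a,b\},\{c,d\}$ by the size of their overlap, the disjoint terms reproduce $Q^2$ and cancel it (because $\rho^{\on}$ factorizes), leaving a degree-three contribution $\approx \frac{4}{n}(T_3 - Q^2)$ and a degree-two contribution $\approx \frac{2}{n^2}(T_2 - Q^2)$, where $T_3 = \tr[(\rho^{\otimes 3}) (M_\sigma)_{12}(M_\sigma)_{23}]$ and $T_2 = \tr[(\rho\otimes\rho) M_\sigma^2]$.

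The heart of the proof is bounding these variance terms, and I expect the degree-three term to be the main obstacle. Introduce $B_X$, the Hermitian solution of the Sylvester equation $\sigma B_X + B_X \sigma = \rho - \sigma$, i.e.\ $(B_X)_{ij} = (\rho-\sigma)_{ij}/(\lambda_i+\lambda_j)$. A short computation identifies the leading term as $T_3 - Q^2 = \tr[B_X^2 \rho] - \tfrac14 D^2$ with $D := \DBchi{\rho}{\sigma}$, so everything rests on showing $\tr[B_X^2\rho] = O(d\eps^2)$ in the close regime. The naive bound $\tr[B_X^2\rho] \le \|B_X\|_{\mathrm{op}}^2$ with $\|B_X\|_{\mathrm{op}}\le 1/\lambda_{\min} \le d/(c\eps^2)$ is far too lossy (it only yields variance $O(\eps^2)$). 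The trick I would use instead is: via the Sylvester equation, $\tr[B_X^2\rho] = \tr[B_X^2\sigma] + 2\tr[B_X^3\sigma] = \tfrac14 D + 2\tr[B_X^3\sigma]$, and then $\tr[B_X^3\sigma] \le \|B_X\|_{\mathrm{op}}\,\tr[B_X^2\sigma] \le \|B_X\|_{\mathrm{F}}\cdot\tfrac14 D$, bounding operator norm by Frobenius norm. Crucially $\|B_X\|_{\mathrm F}^2 = \sum_{ij}\frac{|(\rho-\sigma)_{ij}|^2}{(\lambda_i+\lambda_j)^2} \le \frac{1}{2\lambda_{\min}}\cdot\tfrac12 D = \frac{D}{4\lambda_{\min}}$, so the estimate now carries a saving factor of $\sqrt D$, giving $\tr[B_X^2\rho] = O\!\big(D + D^{3/2}/\sqrt{\lambda_{\min}}\big)$. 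Plugging in $\lambda_{\min}\ge c\eps^2/d$ and $D \le \eps^2$ yields $\tr[B_X^2\rho] = O(\eps^2\sqrt{d/c})$, whence the degree-three term is $\frac4n \cdot O(\eps^2\sqrt{d/c}) = O(\eps^4/\sqrt{cd})$ once $n = O(d/\eps^2)$.

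For the degree-two term I would expand $T_2 = \sum_{ij}\frac{\rho_{ii}\rho_{jj}}{(\lambda_i+\lambda_j)^2}$ around $\rho = \sigma$: the base term is $\le d^2/4$ (contributing $O(\eps^4)$), and the correction terms are controlled via Cauchy--Schwarz by $\sqrt{D}$ together with $\sum_i 1/\lambda_i \le d/\lambda_{\min}$, again using the eigenvalue lower bound, keeping this term at $O(\eps^4)$ as well. With both variance terms $O(\eps^4)$ (absorbing the $c$-dependence into constants and taking $n = Kd/\eps^2$ for a large constant $K$), Chebyshev's inequality against the threshold in the $\Theta(\eps^2)$ gap finishes the proof. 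Finally, keeping these bounds as functions of $D$ rather than fixing $D \le \eps^2$ shows $\mathrm{Var}[A] \ll D^2$ throughout the far regime, so the test remains reliable even when $D$ is large.
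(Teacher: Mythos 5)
Your proposal is correct and follows the same overall strategy as the paper's proof: the same unbiased two-copy $U$-statistic for $\DBchi{\rho}{\sigma}$ (your $M_\sigma$ is exactly half of the paper's $\chi^2$ observable $\chiobs$), the same decomposition of its variance into a degree-two term $\approx\frac{2}{n^2}(T_2-Q^2)$ and a degree-three term $\approx\frac{4}{n}(T_3-Q^2)$ according to whether the two index pairs share two or one position, and the same Chebyshev-with-threshold conclusion, keeping the variance bounds as functions of $D$ so the test also works deep in the far regime. All the identities you assert check out ($\DBchi{\rho}{\sigma}=2Q-1$, $T_3-Q^2=\tr[B_X^2\rho]-\tfrac14 D^2$, $\tr[B_X^2\sigma]=D/4$, $\tr[B_X^2\Delta]=2\tr[B_X^3\sigma]$). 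The one step where you genuinely diverge is the cubic term, which you rightly identify as the heart of the matter: the paper bounds $\trthree{\Delta}{\Delta}{\Delta}$ (which equals $8\tr[B_X^3\sigma]$ in your notation) by a direct Cauchy--Schwarz on the triple sum, getting $\sqrt{2d/\lambda_{\min}}\cdot D^{3/2}$, whereas you use $|\tr[B_X^3\sigma]|\le\|B_X\|_{\mathrm{op}}\tr[B_X^2\sigma]$ together with $\|B_X\|_{\mathrm{op}}\le\|B_X\|_{\mathrm{F}}\le\sqrt{D/(4\lambda_{\min})}$, getting $O(D^{3/2}/\sqrt{\lambda_{\min}})$ --- sharper by a factor of $\sqrt{d}$. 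Both bounds yield variance $O(\eps^4)$ once $\lambda_{\min}\ge c\eps^2/d$ and $n=\Theta(d/\eps^2)$, so the improvement does not change the theorem, but your norm-comparison argument is cleaner and would give a real saving if $\sigma$ were better conditioned than the worst case assumed here. Your sketch of the degree-two term matches the paper's treatment (base term $\le d^2/4$, corrections controlled by $\sqrt{D}$ and the eigenvalue floor).
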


We also obtain a new sample-efficient certification algorithm in the case of \emph{two unknown} states.  Here one is given $n$ copies each of mixed states $\rho, \sigma$ and one wants to distinguish whether $\rho = \sigma$ or $\rho$ is far from~$\sigma$.  Our algorithm here is robust with respect to the Hilbert--Schmidt distance:
\begin{theorem}\label{thm:robust-HS}
    There is an algorithm that, given $n = O(1/\eps^2)$ copies each of unknown mixed states $\rho, \sigma \in \C^{d \times d}$, (whp) outputs ``close'' if $\DHS{\rho}{\sigma} \leq .99 \eps$ and outputs ``far'' if $\DHS{\rho}{\sigma} > \eps$.
\end{theorem}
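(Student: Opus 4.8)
The plan is to build a single Hermitian observable on the $2n$-copy state $\rho^{\otimes n} \otimes \sigma^{\otimes n}$ whose measurement is an essentially unbiased estimator of $\DHSsq{\rho}{\sigma} = \tr((\rho-\sigma)^2)$, and then to show it concentrates well enough that $n = O(1/\eps^2)$ copies suffice. Write $\Delta = \rho - \sigma$, and let $S_{ij}$ be the swap operator exchanging copies $i$ and $j$; the one tool I need is the swap identity $\tr(S_{ij}(\tau_i \otimes \tau_j)) = \tr(\tau_i\tau_j)$. I index the $2n$ systems so that the first $n$ hold copies of~$\rho$ and the last~$n$ hold copies of~$\sigma$, and I set signs $\chi_i = +1$ on the $\rho$-copies and $\chi_i = -1$ on the $\sigma$-copies. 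The estimator is the observable $\mathcal{Z} = \sum_{i<j} \chi_i \chi_j\, S_{ij}$, with a minor adjustment to the coefficients of the cross ($\rho$--$\sigma$) pairs so that $\tr(\rho^2)$, $\tr(\sigma^2)$, $\tr(\rho\sigma)$ enter in the ratio $1:1:-2$. (In fact exact unbiasedness is not even essential: the uncorrected bias is only $O(n)$, negligible against the $\Omega(n^2\eps^2)$ separation established below.)

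First I would verify unbiasedness with the crucial quadratic-in-$n$ scaling. Since distinct copies are independent, the reduced state on any pair $(i,j)$ is the product $\tau_i \otimes \tau_j$, so the swap identity gives $\E[\mathcal{Z}] = \binom{n}{2}\bigl(\tr(\rho^2) + \tr(\sigma^2)\bigr) - n(n-1)\tr(\rho\sigma) = \binom{n}{2}\,\DHSsq{\rho}{\sigma}$ after the adjustment. The test then measures $\mathcal{Z}$ once and compares the outcome to the threshold $\binom{n}{2}\cdot 0.99\,\eps^2$, which lies strictly between $\binom{n}{2}(0.99\eps)^2$ and $\binom{n}{2}\eps^2$; it reports ``close'' below the threshold and ``far'' above.

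The heart of the argument is a \emph{dimension-free} variance bound. Expanding $\Var[\mathcal{Z}] = \sum_{P,Q}\chi_P\chi_Q\,\Cov(S_P, S_Q)$ over pairs $P,Q$, disjoint pairs act on independent copies and contribute $0$; pairs sharing both indices satisfy $S_P^2 = I$ and contribute only $O(n^2)$. The dangerous terms are pairs $P = \{i,j\}$, $Q = \{j,l\}$ sharing exactly one index, where $S_P S_Q$ is a $3$-cycle and the swap identity gives $\langle S_P S_Q\rangle = \tr(\tau_i\tau_j\tau_l)$ while the shared index contributes $\chi_j^2 = 1$. Here the key cancellation is $\sum_i \chi_i \tau_i = n\rho - n\sigma = n\Delta$, which collapses the free-index sum to $\sum_{i,l}\chi_i\chi_l\tr(\tau_i\tau_j\tau_l) = \tr\bigl((n\Delta)\,\tau_j\,(n\Delta)\bigr) = n^2\tr(\Delta^2 \tau_j)$; summing over $j$ with $\sum_j \tau_j = n(\rho+\sigma)$ and $\|\rho+\sigma\|_{\mathrm{op}} \le 2$ bounds this by $O\bigl(n^3\,\DHSsq{\rho}{\sigma}\bigr)$, and a matching Cauchy--Schwarz bound controls the subtracted mean-product pieces. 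Altogether $\Var[\mathcal{Z}] = O\bigl(n^3\,\DHSsq{\rho}{\sigma} + n^2\bigr)$, with no dependence on $d$.

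Finally I would close the argument by Chebyshev. With mean $\binom{n}{2}\DHSsq{\rho}{\sigma}$ and standard deviation $O\bigl(n^{3/2}\DHS{\rho}{\sigma} + n\bigr)$, one checks in both regimes that the gap between the mean and the threshold is $\Omega(n^2\eps^2)$ (in the far case even $\Omega(n^2\,\DHSsq{\rho}{\sigma})$), so for $n = C/\eps^2$ the gap-to-deviation ratio is $\Omega(\sqrt{C})$, giving the two-sided error whp once $C$ is a large enough constant. I expect the one-index-overlap variance term to be the main obstacle: naively it is $O(n^3)$ and would force $n = O(1/\eps^4)$, and it is precisely the cancellation $\sum_i\chi_i\tau_i = n\Delta$ that tames it to $O(n^3\,\DHSsq{\rho}{\sigma})$ and yields the optimal $O(1/\eps^2)$ rate.
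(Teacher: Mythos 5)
Your proposal is correct and is essentially the paper's own argument in different clothing: your signed sum of swaps $\mathcal{Z}=\sum_{i<j}\chi_i\chi_j S_{ij}$ (with the cross-pair adjustment) is exactly $\binom{n}{2}$ times the paper's estimator $\cO_{(\rho\,\rho)}+\cO_{(\sigma\,\sigma)}-2\cO_{(\rho\,\sigma)}$, your classification of covariance terms by overlap pattern reproduces the paper's products of conjugacy-class averages, and the decisive cancellation $\sum_{i,l}\chi_i\chi_l\tr(\tau_i\tau_j\tau_l)\rightsquigarrow n^3\tr\bigl((\rho+\sigma)(\rho-\sigma)^2\bigr)\le 2n^3\,\DHSsq{\rho}{\sigma}$ is the same inequality the paper uses in \Cref{prop:variance-hilbert-schmidt-estimator}. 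The concluding Chebyshev step with gap $\Omega(n^2\eps^2)$ matches the paper's \Cref{lem:chebyshev}, so the proof goes through as written.
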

Of course this result may also be used in the simpler case when~$\sigma$ is a known state (as then the algorithm can simply prepare~$n$ copies of $\sigma$ by itself).  We also remark that the sample complexity~$n$ has no dependence on~$d$.

Although the Hilbert--Schmidt distance is arguably not too meaningful, operationally, one can use Cauchy--Schwarz to relate it to the very natural trace distance.  In this way, \Cref{thm:robust-HS} immediately yields the following:
\begin{corollary}                                       \label{cor:basic-trace}
    There is an algorithm that, given $n = O(d/\eps^2)$ copies each of unknown mixed states $\rho, \sigma \in \C^{d \times d}$, (whp) distinguishes between the cases  $\rho = \sigma$ and $\Dtr{\rho}{\sigma} > \eps$.
\end{corollary}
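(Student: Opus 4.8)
The plan is to derive \Cref{cor:basic-trace} directly from the Hilbert--Schmidt guarantee of \Cref{thm:robust-HS} as a black box, together with the single elementary comparison between the two distances alluded to in the text. The one fact I need is the Cauchy--Schwarz bound: for any Hermitian matrix $A \in \C^{d\times d}$ (here $A = \rho - \sigma$), writing $\lambda_1,\dots,\lambda_d$ for its real eigenvalues and pairing $(|\lambda_i|)_i$ against the all-ones vector gives
\[
    \|A\|_1 = \sum_i |\lambda_i| \;\leq\; \sqrt{d}\,\Bigl(\sum_i \lambda_i^2\Bigr)^{1/2} = \sqrt{d}\,\|A\|_2 .
\]
In terms of the normalizations of \Cref{sec:distances} this reads $\DHS{\rho}{\sigma} \geq \tfrac{2}{\sqrt d}\,\Dtr{\rho}{\sigma}$; only the $1/\sqrt d$ scaling will matter, and the precise constant is irrelevant.

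Next I would instantiate \Cref{thm:robust-HS} with the rescaled accuracy parameter $\eps' = 2\eps/\sqrt{d}$ (chosen exactly so that $\Dtr{\rho}{\sigma} > \eps$ forces $\DHS{\rho}{\sigma} > \eps'$). This yields an algorithm using $n = O(1/\eps'^2) = O(d/\eps^2)$ copies each of $\rho$ and $\sigma$, matching the claimed copy complexity. The certification procedure of \Cref{cor:basic-trace} is then simply: run this instantiated Hilbert--Schmidt tester and report its verdict.

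It remains to check that the two cases of the corollary land cleanly inside the two regimes of \Cref{thm:robust-HS}. In the ``far'' case $\Dtr{\rho}{\sigma} > \eps$, the Cauchy--Schwarz bound gives $\DHS{\rho}{\sigma} > \eps'$, so the tester reports ``far'' whp; in the ``close'' case $\rho = \sigma$ we have $\DHS{\rho}{\sigma} = 0 \leq .99\eps'$, so it reports ``close'' whp. There is no genuine obstacle here: since one of the two cases is exact equality, the intermediate regime $(.99\eps', \eps']$ of \Cref{thm:robust-HS} is never entered, and in fact the robustness of that theorem is not even needed for this reduction. The only point requiring a moment's care is verifying that the constant factor in the distance comparison does not eat into the separation, which the explicit choice $\eps' = 2\eps/\sqrt d$ handles directly.
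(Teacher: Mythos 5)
Your proposal is correct and matches the paper's own (one-line) derivation: the paper obtains \Cref{cor:basic-trace} from \Cref{thm:robust-HS} precisely by the Cauchy--Schwarz comparison $\Dtr{\rho}{\sigma} \leq \tfrac12\sqrt{d}\cdot\DHS{\rho}{\sigma}$ and the rescaling $\eps' = \Theta(\eps/\sqrt{d})$, giving $n = O(1/\eps'^2) = O(d/\eps^2)$. Your observations that the exact constant is immaterial and that the robustness of \Cref{thm:robust-HS} is not needed when one case is exact equality are both accurate.
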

We stated the above corollary for simplicity, but with slightly more care (see \Cref{sec:test-conseq1}) one also derive from \Cref{thm:robust-HS} the following much more precise result for trace-distance certification, which has improved sample complexity when one of the states is close to having low rank:
\begin{corollary}                                       \label{cor:pca-trace}
    Assume that one of the two unknown states --- say, $\sigma$ --- is close to having rank at most~$k$, in the sense that the sum of its largest~$k$ eigenvalues is at least $1-\delta$.  Then there is an algorithm that, given $n = O(k/\eps^2)$ copies each  of $\rho, \sigma \in \C^{d \times d}$, (whp) distinguishes between the cases $\DHS{\rho}{\sigma} \leq .58 \eps$ and $\Dtr{\rho}{\sigma} > \delta + \eps$.  (The constant $.58$ can be anything smaller than $2-\sqrt{2}$.)
\end{corollary}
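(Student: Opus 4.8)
The plan is to reduce to the Hilbert--Schmidt tester of \Cref{thm:robust-HS}, just as \Cref{cor:basic-trace} does, but to replace the crude dimension-counting step with a sharper comparison of trace distance and Hilbert--Schmidt distance that exploits the (approximate) low rank of~$\sigma$. Recall that \Cref{cor:basic-trace} relied only on the bound $\Dtr{\rho}{\sigma} \le \tfrac{\sqrt d}{2}\DHS{\rho}{\sigma}$, which holds because $\rho-\sigma$ has rank at most~$d$; this is exactly what produced the factor~$d$ in the sample complexity. The key observation is that when~$\sigma$ has rank at most~$k$, the traceless Hermitian matrix $\Delta = \rho-\sigma$ has at most~$k$ \emph{negative} eigenvalues: since $\rho = \Delta + \sigma \succeq 0$ and~$\sigma$ has at most~$k$ nonzero eigenvalues, subadditivity of negative inertia gives that the negative inertia of~$\Delta$ is at most the rank of~$\sigma$, hence at most~$k$.

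First I would establish the metric lemma. Because $\Delta$ is traceless, $\Dtr{\rho}{\sigma} = \sum_{i:\lambda_i<0}|\lambda_i|$, the sum of the absolute values of the (at most~$k$) negative eigenvalues of~$\Delta$. Applying Cauchy--Schwarz to these at most~$k$ terms yields $\Dtr{\rho}{\sigma} \le \sqrt k\,\bigl(\sum_{i:\lambda_i<0}\lambda_i^2\bigr)^{1/2} \le \sqrt k\,\DHS{\rho}{\sigma}$, so for $\sigma$ of rank exactly~$k$ we obtain the clean inequality $\Dtr{\rho}{\sigma} \le \sqrt k\,\DHS{\rho}{\sigma}$. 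To handle the approximate case I would let~$\Pi$ be the projector onto the top-$k$ eigenspace of~$\sigma$, write $\sigma = \Pi\sigma\Pi + \bar\Pi\sigma\bar\Pi$ with $\tr(\bar\Pi\sigma)\le\delta$, and apply the rank-$k$ argument to $\rho - \Pi\sigma\Pi$, paying an additive~$\delta$ for discarding the tail $\bar\Pi\sigma\bar\Pi$; the target is the bound $\Dtr{\rho}{\sigma} \le \delta + \sqrt k\,\DHS{\rho}{\sigma}$.

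Granting the lemma, the algorithm is simply to run the tester of \Cref{thm:robust-HS} on~$\rho$ and~$\sigma$ with its precision parameter set to $\eps' = \Theta(\eps/\sqrt k)$, which uses $n = O(1/(\eps')^2) = O(k/\eps^2)$ copies of each state. For soundness, if $\Dtr{\rho}{\sigma} > \delta + \eps$ then the lemma forces $\DHS{\rho}{\sigma} > \eps/\sqrt k$ and the tester reports ``far''; for completeness, an input that is Hilbert--Schmidt-close at the scale~$\eps'$ makes the tester report ``close'', and the same lemma certifies that such an input also satisfies $\Dtr{\rho}{\sigma} \le \delta + \eps$, so the two reported cases are genuinely disjoint. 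Note that the tester is basis-independent---it only estimates $\tr(\rho^2)$, $\tr(\rho\sigma)$, and $\tr(\sigma^2)$---so even when~$\sigma$ is unknown no separate (and potentially expensive) estimation of~$\Pi$ is required; the projector~$\Pi$ enters only in the analysis. Carefully tracking the acceptance/rejection gap of \Cref{thm:robust-HS} through this conversion, and optimizing the split between the off-subspace mass and the in-subspace discrepancy, is what yields the sharp constant $2-\sqrt2$.

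I expect the main obstacle to be the approximate-rank bookkeeping rather than the reduction itself. The naive way of discarding the tail~$\bar\Pi\sigma\bar\Pi$ bounds $\|\rho - \Pi\sigma\Pi\|_2$ by $\DHS{\rho}{\sigma} + \|\bar\Pi\sigma\bar\Pi\|_2$ and then multiplies by~$\sqrt k$, which contributes a term of order $\sqrt k\,\delta$ to the trace distance---too large to match the clean additive~$\delta$ in the statement. Getting the tail's contribution down to~$\delta$ requires a more careful accounting of \emph{where} the negative eigenvalues of $\rho - \Pi\sigma\Pi$ come from: they are essentially confined to the rank-$k$ range of~$\Pi$, whereas the tail lives in~$\bar\Pi$, so the tail should enter the trace-norm estimate only through its dimension-free trace~$\delta$ and not through a $\sqrt k$-amplified Hilbert--Schmidt norm. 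This interplay between the eigenvalue-counting argument and the small tail, together with the constant optimization, is the delicate part of the proof.
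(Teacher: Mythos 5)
Your overall reduction is the same as the paper's --- run the tester of \Cref{thm:robust-HS} at scale $\Theta(\eps/\sqrt{k})$ and convert via a deterministic inequality comparing trace distance to Hilbert--Schmidt distance --- and your inertia argument for the \emph{exact} rank-$k$ case is correct and in fact sharper than the paper's: since $\rho-\sigma \succeq -\sigma$, the traceless matrix $\Delta=\rho-\sigma$ has at most $\mathrm{rank}(\sigma)\le k$ negative eigenvalues, $\Dtr{\rho}{\sigma}=\sum_{\lambda_i<0}|\lambda_i|$, and Cauchy--Schwarz over these $\le k$ terms gives $\Dtr{\rho}{\sigma}\le\sqrt{k}\,\DHS{\rho}{\sigma}$, whereas the paper's argument only yields the constant $\tfrac{2+\sqrt2}{2}\approx1.71$ in front of $\sqrt{k}\,\DHS{\rho}{\sigma}$ (which is exactly where its threshold $2-\sqrt2$ comes from). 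However, the corollary is about the \emph{approximately} low-rank case, and there your proof stops exactly where you say it does: you correctly diagnose that the naive step --- bounding $\|\rho-\Pi\sigma\Pi\|_{\mathrm{HS}}$ by $\DHS{\rho}{\sigma}$ plus the tail's Hilbert--Schmidt norm and then multiplying by $\sqrt{k}$ --- produces an unacceptable $O(\sqrt{k}\,\delta)$ term, and you assert that a ``more careful accounting'' repairs this without supplying it. Since the lemma $\Dtr{\rho}{\sigma}\le\delta+\sqrt{k}\,\DHS{\rho}{\sigma}$ is the entire mathematical content of the corollary, this is a genuine gap.

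The gap is closable along the lines you gesture at. Let $\Pi$ project onto the top-$k$ eigenspace of $\sigma$, set $\sigma'=\Pi\sigma\Pi$, so $\sigma-\sigma'=\bar\Pi\sigma\bar\Pi\succeq0$ with $\tr(\sigma-\sigma')=\delta'\le\delta$. The matrix $\Delta'=\rho-\sigma'$ still has at most $k$ negative eigenvalues, and since $\tr(\Delta')=\delta'$ we get $\|\Delta'\|_1=\delta'+2N$, where $N$ is the sum of the absolute values of its negative eigenvalues. The key is to bound $N$ directly against $\DHS{\rho}{\sigma}$ rather than against $\|\Delta'\|_{\mathrm{HS}}$: writing $P$ for the rank-$\le k$ projector onto the negative eigenspace of $\Delta'$,
\[
    N \;=\; -\tr(P\Delta') \;=\; -\tr\bigl(P(\rho-\sigma)\bigr)\;-\;\tr\bigl(P(\sigma-\sigma')\bigr) \;\le\; \sqrt{\tr(P)}\cdot\DHS{\rho}{\sigma} \;\le\; \sqrt{k}\,\DHS{\rho}{\sigma},
\]
because the discarded tail enters with the \emph{favorable} sign ($\sigma-\sigma'\succeq0$ forces $\tr(P(\sigma-\sigma'))\ge0$) and Cauchy--Schwarz handles the other term. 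Then $\|\rho-\sigma\|_1\le\|\Delta'\|_1+\|\sigma-\sigma'\|_1\le2\delta+2\sqrt{k}\,\DHS{\rho}{\sigma}$, which is your lemma. The paper takes a different route that sidesteps the issue entirely: it block-decomposes $\rho-\sigma$ in $\sigma$'s eigenbasis, applies rank-based Cauchy--Schwarz only to the top $k\times k$ block and the off-diagonal blocks, and bounds the bottom $(d-k)\times(d-k)$ block purely through traces, $\|\rho_B-\sigma_B\|_1\le\tr(\rho_B)+\tr(\sigma_B)\le2\delta+\sqrt{k}\,\|\rho_A-\sigma_A\|_{\mathrm{HS}}$, so the tail is never multiplied by $\sqrt{k}$. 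Either route works; yours, once completed as above, gives the better constant.
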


We note that even the simplest versions of our results --- \Cref{cor:main} and \Cref{cor:basic-trace} --- have optimal sample complexity (up to a constant), even when $\sigma$ is promised to be the maximally mixed state~$\Id/d$. This is a consequence of the following lower bound from~\cite{OW15}:
\begin{theorem}[\cite{OW15}]                                     \label{thm:paninski-lower}
    Given even $d$ and $0 \leq \eps \leq 1/2$, let $\sigma = \Id/d$ and let $\calC_\eps$ denote the class of states with eigenvalues $\frac{1+2\eps}{d}, \frac{1-2\eps}{d}, \frac{1+2\eps}{d}, \frac{1-2\eps}{d}, \dots, \frac{1+2\eps}{d}, \frac{1-2\eps}{d}$.  For any $\rho \in \calC_\eps$, one has 
    \[
        \Dtr{\rho}{\sigma} = \eps, \quad \Fid{\rho}{\sigma} = 1 - \tfrac12 \epsilon^2 - O(\eps^4), \quad \DHS{\rho}{\sigma} = 2\eps/\sqrt{d}, \quad \DBchi{\rho}{\sigma} = 4\eps^2 + O(\eps^4).
    \]
    Then any measurement strategy that can distinguish (with probability advantage at least~$1/3$) the case $\rho = \sigma$ from the case $\rho \in \calC_\eps$ using $n$ samples from~$\rho$ must have $n > .15d/\eps^2$.
\end{theorem}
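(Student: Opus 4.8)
This bound is quoted from \cite{OW15}; here is how I would approach it. The four distance identities are a direct eigenvalue computation, since each $\rho \in \calC_\eps$ and $\sigma = \Id/d$ may be taken simultaneously diagonal, with spectra $\tfrac{1\pm 2\eps}{d}$ and $\tfrac1d$. One finds $\Dtr{\rho}{\sigma} = \tfrac12\, d \cdot \tfrac{2\eps}{d} = \eps$, then $\DHS{\rho}{\sigma} = (d\,(2\eps/d)^2)^{1/2} = 2\eps/\sqrt d$, then $\Fid{\rho}{\sigma} = \tfrac12(\sqrt{1+2\eps}+\sqrt{1-2\eps}) = 1 - \tfrac12\eps^2 - O(\eps^4)$ by Taylor expansion, and the Bures $\chi^2$-divergence of \Cref{sec:distances} expands to $4\eps^2 + O(\eps^4)$ (in the commuting case its leading behavior is governed by the classical $\chi^2$-divergence $d\sum_i(\lambda_i - \tfrac1d)^2 = 4\eps^2$). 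I would verify these first; the real content is the sample lower bound.

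For the lower bound I would apply Le Cam's mixture method in its quantum (Holevo--Helstrom) form. Fix a prior $\mathcal{D}$ supported on $\calC_\eps$ and set $\bar\rho_n := \mathbb{E}_{\rho\sim\mathcal{D}}[\rho^{\otimes n}]$. The maximal bias with which \emph{any} measurement (even a joint, entangled measurement on all $n$ copies) can distinguish $\rho = \sigma$ from $\rho \sim \mathcal{D}$ equals $\tfrac12\Dtr{\bar\rho_n}{\sigma^{\otimes n}}$, so it suffices to produce a prior for which $\Dtr{\bar\rho_n}{\sigma^{\otimes n}} < 2/3$ whenever $n \le .15\,d/\eps^2$. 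The crucial choice is the prior: fixing the eigenbasis would make $\bar\rho_n$ and $\sigma^{\otimes n}$ simultaneously diagonal, collapsing the task to classical uniformity testing and yielding only the weaker $\Omega(\sqrt d/\eps^2)$ bound. To reach the full $\Omega(d/\eps^2)$ one must hide the basis, so I would take $\rho = U D U^\dagger$ with $D = \mathrm{diag}(\tfrac{1\pm 2\eps}{d})$ fixed and $U$ Haar-random on $\U(d)$.

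With this prior, $\bar\rho_n = \mathbb{E}_U[U^{\otimes n} D^{\otimes n} (U^\dagger)^{\otimes n}]$ is invariant under $V^{\otimes n}(\cdot)(V^\dagger)^{\otimes n}$ for every $V \in \U(d)$ and, being symmetric in the tensor factors, also commutes with the $\Sym n$-action; the same holds trivially for $\sigma^{\otimes n} = \Id/d^n$. By Schur--Weyl duality both operators therefore lie in the center of $\bigoplus_{\lambda\vdash n,\ \ell(\lambda)\le d}\End(\Specht{\lambda})\otimes\Id_{\Weyl{\lambda}{d}}$, i.e.\ each is a convex combination of the block projectors $\Pi_\lambda$, with weights equal to the Schur--Weyl distributions $p(\lambda) = \dim(\Specht{\lambda})\, s_\lambda(\mathrm{spec}\,\rho)$ and $q(\lambda) = \dim(\Specht{\lambda})\, s_\lambda(\tfrac1d,\dots,\tfrac1d)$ (Schur polynomials at the two spectra). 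Since both operators are diagonalized by the same block decomposition, the trace distance collapses exactly to a classical total-variation distance, $\Dtr{\bar\rho_n}{\sigma^{\otimes n}} = \dtv{p}{q}$, and the entire problem becomes: show $\dtv{p}{q} < 2/3$ when $n \le .15\,d/\eps^2$.

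This total-variation bound between two Schur--Weyl distributions is the genuine obstacle; a crude $\dtv{p}{q} \le \tfrac12\sqrt{\dchisq{p}{q}}$ estimate is far too lossy, since the $\chi^2$-divergence can blow up. To see where the threshold originates, note that $\rho \in \calC_\eps$ and $\sigma$ agree to first order in essentially every respect and differ saliently only in purity: $\tr(\rho^2) = \tfrac{1+4\eps^2}{d}$ versus $\tr(\sigma^2) = \tfrac1d$, a gap of $4\eps^2/d$. Under the Schur--Weyl correspondence this purity is read off from a single low-degree symmetric observable of $\lambda$ (the normalized transposition character, estimated by the swap-based $U$-statistic on $\rho^{\otimes n}$), whose standard deviation in the near-maximally-mixed regime is $\Theta(1/n)$; the gap thus becomes detectable precisely once $n \gtrsim d/\eps^2$. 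The hard part is to prove that purity is the \emph{only} exploitable signal --- that the full law of $\lambda$ under $q$ admits no finer distinguishing statistic below $n \approx d/\eps^2$ --- which requires controlling all of the low-order power-sum observables of the Schur--Weyl distribution simultaneously, and is exactly the technical core carried out in \cite{OW15}.
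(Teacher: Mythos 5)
The paper offers no proof of this statement at all---it is imported verbatim from \cite{OW15}---so there is nothing internal to compare against except the four distance identities, which you verify correctly: with $\rho$ and $\sigma$ simultaneously diagonal one indeed gets $\Dtr{\rho}{\sigma}=\eps$, $\DHS{\rho}{\sigma}=2\eps/\sqrt{d}$, $\Fid{\rho}{\sigma}=\tfrac12(\sqrt{1+2\eps}+\sqrt{1-2\eps})=1-\tfrac12\eps^2-O(\eps^4)$, and $\DBchi{\rho}{\sigma}=d\sum_i(\alpha_i-\tfrac1d)^2=4\eps^2$ exactly. Your outline of the lower bound---the Holevo--Helstrom reduction to $\Dtr{\bar\rho_n}{\sigma^{\otimes n}}$, the Haar-random eigenbasis prior (correctly noting that a fixed basis would only give $\Omega(\sqrt{d}/\eps^2)$), and the Schur--Weyl collapse of the trace distance to $\dtv{p}{q}$ for the two Schur--Weyl distributions $p(\lambda)=\dim(\Specht{\lambda})\,s_\lambda(\mathrm{spec}\,\rho)$ and $q(\lambda)=\dim(\Specht{\lambda})\,s_\lambda(\tfrac1d,\dots,\tfrac1d)$---is an accurate description of the strategy in \cite{OW15}.

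Two caveats. First, the substantive analytic content, namely $\dtv{p}{q}<2/3$ whenever $n\le .15d/\eps^2$, is not proved in your proposal; you correctly isolate it as the core and then defer it. Since the paper itself treats the whole theorem as a black box, this is acceptable as an account of a citation, but it is not a proof. Second, your parenthetical claim that the bound $\dtv{p}{q}\le\tfrac12\sqrt{\dchisq{p}{q}}$ is ``far too lossy'' because the $\chi^2$-divergence ``can blow up'' is unsubstantiated and, as far as I can tell, backwards: the argument in \cite{OW15} is precisely a second-moment computation, evaluating $\sum_\lambda p(\lambda)^2/q(\lambda)$ via character-theoretic identities that convert it into an expectation over random permutations of products of power sums $p_\kappa(\alpha)$, and showing this is $1+O(n^2\eps^4/d^2)$ in the stated regime. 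So the $\chi^2$ route is not an obstacle to be circumvented but the actual mechanism; your heuristic that purity is the dominant distinguishing signal is a good intuition for why the threshold sits at $n\sim d/\eps^2$, but no separate argument ``controlling all low-order observables simultaneously'' is needed beyond that single second-moment bound.
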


Finally, our quantum certification algorithm from \Cref{thm:robust-HS} is not just copy-efficient, it can be carried out by polynomial-sized (i.e., $\poly(n, d)$-gate) quantum circuits.

\subsection{Outline of the remainder of the paper}
In \Cref{sec:prior} we review prior work on quantum tomography and state discrimination, as well as some relevant prior work on classical learning and testing of probability distributions.   In \Cref{sec:distances} we recall various measures of probability distribution distance and quantum state distance that will be important in this work.  \Cref{sec:quantum-probability,sec:rep} are devoted to background on quantum probability and representation theory.   In \Cref{sec:efficiency}, we develop a framework for finding the most efficient (lowest-variance) estimators for symmetric polynomial functions of unknown quantum states.  These results are not strictly necessary for our proof of \Cref{thm:robust-HS} in \Cref{sec:hs}; however, they justify that the estimators used therein are optimal.  \Cref{sec:chisq} contains our proof of \Cref{thm:maybe-main,thm:robust-chi-sq}, as well as a diagonality tester for quantum states.
Finally, in \Cref{sec:implementing} we give efficient implementations for the algorithm in \Cref{thm:robust-HS}.

\section{Prior work on classical and quantum density testing/estimation} \label{sec:prior}

In this section we review some results on learning and testing unknown quantum states, and the analogous classical problem of learning and testing unknown probability distributions.  As these areas are extremely broad, we cannot completely review all known literature; we will simply give pointers to some of the best known and most relevant results.

\subsection{Prior quantum density estimation, testing, and certification}  \label{sec:prior-quantum}

\subsubsection{Tomography (density estimation)}
Before discussing state certification, we start by reviewing the best known results for the baseline problem of \emph{tomography}; i.e., producing an estimate $\wh{\rho}$ of an unknown density matrix $\rho \in \C^{d \times d}$, given $n$ copies $\rho^{\otimes n}$, up to error~$\eps$ (whp) for some notion of ``distance''. We will also let $k$ denote the \emph{rank} of $\rho$, which is $1$ when $\rho$ is a pure state, and in general is at most~$d$.  The best results achievable depend on the ``figure of merit'' --- i.e., distance measure --- chosen (see \Cref{sec:distances} for a review).

In~\cite{HHJ+16} it was shown that  $n = O(kd/\epsilon) \cdot \log(d/\epsilon)$ copies suffice to obtain infidelity~$\eps$ (i.e., $\Fid{\rho}{\wh{\rho}} \leq 1 - \eps$); this also implies that $n = O(kd/\epsilon^2) \cdot \log(d/\epsilon)$ copies suffice to obtain trace distance~$\eps$ (i.e., $\Dtr{\rho}{\wh{\rho}} \leq \eps$).  Those authors also showed that $n = \Omega(kd/\epsilon^2) / \log(d/k\epsilon)$ copies are necessary, with the $\log$ factor being removable in the case $k = d$.  Independently, in~\cite{OW16} it was shown that $n = O(d/\eps^2)$ copies suffice to obtain Hilbert--Schmidt distance~$\eps$ (i.e., $\DHS{\rho}{\wh{\rho}} \leq \eps$); this also implies a copy complexity of $n = O(kd/\epsilon^2)$ for trace distance (slightly better than in~\cite{HHJ+16}).  More generally, \cite{OW16} showed a kind of ``PCA'' result: for $\rho$ of any rank, $n = O(kd/\epsilon^2)$ copies suffice to produce an estimate $\wh{\rho}$ whose trace distance from~$\rho$ is at most $\eps$ more than that of the best rank-$k$ approximator.  Finally, a followup work~\cite{OW17} gave an alternate proof of the $n = O(kd/\epsilon) \cdot \log(d/\epsilon)$ bound for infidelity, showed also an $n = O(k^2d/\epsilon)$ bound, and extended these bounds to the PCA case.

\subsubsection{Density testing}

Tomography results suffer from the inherent issue that $n = \wt{\Theta}(d^2)$ copies are needed in the general case (except when the figure of merit is Hilbert--Schmidt distance, but this metric is not considered to be very meaningful, operationally).  Thus as mentioned, it is natural to focus on restricted problems like state certification, distance estimation, and other \emph{property testing} problems that can potentially be carried out with $n = O(d)$ or better.  Montanaro and de~Wolf~\cite{MdW16} have given an excellent survey on property testing of quantum states; we review a few of the known results here.

A typical quantum property testing problem would involve two disjoint classes $\calC_1$, $\calC_2$ of $d$-dimensional quantum states; given $n$ copies of an unknown $\rho$, promised to be in either~$\calC_1$ or~$\calC_2$, the task is to distinguish which is the case (whp) using few copies of~$\rho$.  In particular, the \emph{quantum state certification} problem for fixed state $\sigma \in \C^{d \times d}$ is the case when $\calC_1 = \{\sigma\}$ and $\calC_2 = \{ \rho : \mathrm{D}(\rho,\sigma) > \eps\}$ for some notion $\mathrm{D}(\cdot, \cdot)$ of distance and some parameter~$\eps$.

When $\sigma$ is a \emph{pure} state, it is straightforward to show (see, e.g.,~\cite{MdW16}) that the associated quantum state certification task, with infidelity as the distance measure, can be done using $n = O(1/\eps)$ copies (and this implies $n = O(1/\eps^2)$ copies suffice for trace distance).  Indeed, the same is possible when both $\rho$ \emph{and} $\sigma$ are unknown pure states, and one is given $n$ copies of each.  For practical purposes, it may be useful to have a state certification algorithm for a known pure~$\sigma$ that only uses simple measurements; e.g., Pauli observables.  For this problem, it has been shown~\cite{FL11,SLP11,AGKE15} that for $\sigma$ known and pure, one can solve the certification problem given $n = O(d/\eps^2)$ copies of an unknown~$\rho$ with infidelity as the distance metric --- indeed, with this many copies one can estimate the fidelity $\Fid{\rho}{\sigma}$ to $\pm \eps$.

For the state certification problem when $\sigma$ is mixed (not pure), not much is known except in one case: when $\sigma = \Id/d$, the ``maximally mixed'' state.  For this problem, it was shown in~\cite{OW15} that $n = \Theta(d/\eps^2)$ copies are necessary and sufficient, when the distance measure is trace distance.  In fact, for the $n = O(d/\eps^2)$ upper bound, \cite{OW15}~effectively show that one can estimate the \emph{purity} $\tr(\rho^2)$ of~$\rho$ sufficiently well so as to distinguish between purity $1/d$ (achieved by the maximally mixed state) and purity exceeding $1/d + \eps^2/d$.  Note that the latter case is equivalent to $\rho$ being $\epsilon/\sqrt{d}$-far from $\Id/d$ in Hilbert--Schmidt distance and $\epsilon^2$-far from $\Id/d$ in Bures $\chi^2$-divergence. The lower bound was mentioned earlier as \Cref{thm:paninski-lower}.

\subsubsection{The asymptotic regime for state discrimination}  \label{sec:asympt-state-discrim}

There is a related class of work that we refer to as the ``asymptotic regime''.  Consider the simplest quantum property testing problem, \emph{state discrimination}, in which $\calC_1 = \{\sigma_1\}$ and $\calC_2 = \{\sigma_2\}$ for two known states $\sigma_1, \sigma_2 \in \C^{d \times d}$.  The perspective we take in this paper involves determining the least number of copies~$n$ such that one can distinguish $\rho = \sigma_1$ from $\rho = \sigma_2$ with high probability --- say, with both ``type I'' and ``type II'' errors having probability at most~$\delta = 1/3$.  One can reduce this~$\delta$ to any small positive constant at the expense of making~$n$ a constant factor larger.  We refer to this perspective as the \emph{non-asymptotic regime}, because we do not consider any limiting error rate as $n \to \infty$; rather, we  wish to find a concrete upper bound on the~$n$ that suffices, depending only on $d$, the distance between $\sigma_1$ and $\sigma_2$, and nothing else.

On the other hand, there is substantial work on the \emph{asymptotic regime}, sometimes going under the name \emph{quantum hypothesis testing}, in which the focus is on how exponentially fast the error rate goes to~$0$ in the limit as $n \to \infty$.  Here one might seek the best (smallest) constant~$R$ such that, given $n$ copies, one can ensure type~I and type~II errors have probability at most $(R+o(1))^n$, where the $o(1)$ refers to $n  \to \infty$.  A downside of such results is that they do not a priori give any information about how large $n$ needs to be before error bounds ``kick in''; e.g., the $o(1)$ function might not be less than, say,~$.1$ until $n$ is larger than some uncontrolled function of~$d$ (e.g., $2^d$) or of some other parameters (e.g., the smallest nonzero eigenvalue of~$\sigma_1$ or $\sigma_2$).

A good survey of the results in the asymptotic regime appears in~\cite{ANSV08}; they review known quantum versions of Stein's Lemma and Sanov's Theorem, and  prove quantum versions of Chernoff's Bound and the the Hoeffding--Blahut--Csisz\'{a}r--Longo bound.  For example, in the basic hypothesis testing problem described above, they prove that the best rate~$R$ is given by $Q_{\text{min}}\diverg{\sigma_1}{\sigma_2} = \min_{0 \leq s \leq 1}\tr(\sigma_1^s \sigma_2^{1-s})$ (a quantity that is within a factor of~$2$ of the infidelity between $\sigma_1$ and~$\sigma_2$).

\subsection{Prior classical density estimation, testing, and certification}

For every quantum problem discussed so far, we get a ``classical'' special case by assuming that all $d$-dimensional density matrices are diagonal.  In this way we obtain basic problems in statistics and property testing: estimation, certification, and identity testing for \emph{probability distributions} $p = (p_1, \dots, p_d)$ on~$[d]$.  Since our results are partly inspired by these classical analogues, we briefly review some known results here.

\subsubsection{Density estimation}
The analogue of quantum tomography is \emph{density estimation}: producing an estimate $\wh{p}$ of an unknown probability distribution $p$ on $[d]$, given $n$ independent samples.  For this problem, the most natural algorithm is simply to let $\wh{p}$ be the empirical distribution of the samples.  One can very easily directly calculate that
\[
    \E[\delltwosq{p}{\wh{p}}] = \frac1n\parens*{1- \littlesum_{i=1}^d p_i^2} \leq \frac1n;
\]
hence Markov's inequality implies that $n = O(1/\eps^2)$ samples suffice to obtain $\delltwosq{p}{\wh{p}} \leq \eps$ whp.  Cauchy--Schwarz then implies that $n = O(d/\eps^2)$ samples suffice to obtain $\dtv{p}{\wh{p}} \leq \eps$ with high probability.  For the stronger $\chi^2$-divergence, one shouldn't let $\wh{p}$ be the empirical distribution because then $\dchisq{p}{\wh{p}} = \infty$ is possible if $p_i  > \wh{p}_i = 0$ for some~$i$.  Instead, standard practice is to take $\wh{p}$ to be the ``add-one'' estimator: $\wh{p}_i = \frac{\bx_i + 1}{n+d}$, where $\bx_i \sim \text{Bin}(n,p_i)$ is the number of~$i$'s in the sample. Again, one can very easily directly calculate (see, e.g.,~\cite[Lemma~4]{KOPS15}):
\begin{equation}    \label{eqn:classical-chisq}
    \E[\dchisq{p}{\wh{p}}] = \frac{d-1}{n+1} - \frac{n+d}{n+1}\parens*{1 - \littlesum_{i=1}^d(1-p_i)^{n+1}}\leq \frac{d-1}{n+1} \leq \frac{d}{n}
\end{equation}
and hence $n = O(d/\eps)$ samples suffice to obtain $\dchisq{p}{\wh{p}} \leq \eps$ whp. Thus for natural measures of discrimination like total variation, Hellinger, and $\chi^2$-divergence, $n = O(d)$ samples suffice for density estimation (for constant~$\eps)$.  Consequently, for the ``distribution certification'' problem (known in property testing problems as ``identity testing''), the goal is to use $o(d)$ samples.

\subsubsection{Identity testing}
Three of the main such property testing problems, in increasing order of difficulty, are the following:
    \begin{enumerate}
        \item[0.] Testing identity of~$p$ to the uniform distribution (which we write as $\Id/d$ in this section).
        \item Testing identity of~$p$ to an arbitrary but known distribution~$q$.
        \item Testing identity of two unknown distributions~$p,q$.
    \end{enumerate}
\paragraph{Uniformity testing.} Historically, property testing researchers considered total variation distance to be the main figure of merit.  But beginning with the earliest work of Goldreich and Ron~\cite{GR00}, it was found that approaching the problems via $\ell_2^2$-distance was more expedient.    For example, Goldreich and Ron originally showed that with $n = O(\sqrt{d}/\eps^2)$ samples, one can (whp) estimate $\|p\|_2^2 = \delltwosq{p}{\Id/d} + 1/d$ to a multiplicative $1 \pm \eps$ factor.  A consequence of this (and Cauchy--Schwarz) is that $n = O(\sqrt{d}/\eps^4)$ samples suffice to distinguish $p = \Id/d$ and $\dtv{p}{\Id/d} > \eps$.  Paninski~\cite{Pan08} improved the latter result by a different method to $n = O(\sqrt{d}/\eps^2)$ (assuming $\eps = \Omega(d^{-1/4})$, a restriction later removed in~\cite{VV17}), and showed a matching lower bound.

In fact, a better analysis of Goldreich and Ron's original method yields the optimal result: one simply estimates $\delltwosq{p}{\Id/d}$ by the natural unbiased estimator (the average number of ``collisions'' among the~$n$ samples, minus $1/d$), computes its variance, and then uses Chebyshev inequality.  A little case analysis is needed when applying Chebyshev, which is perhaps why this natural method was not employed until the very recent work of~\cite{DGPP16} (for a briefer exposition, see~\cite[Sec.~10]{OW17}).  We will use similar methods in the present work, and the needed version of Chebyshev's inequality is packaged up at the end of this section as \Cref{lem:chebyshev}.

\paragraph{Identity testing to a known distribution.} Moving on to Problem~1 above, testing identity of $p$ to an arbitrary known distribution~$q$, Batu et~al.~\cite{BFF+01} showed that $O(\sqrt{d} \log(d) /\eps^2)$ samples from~$p$ suffice to distinguish the case $\dtv{p}{q} \ll \eps^3/\sqrt{d}\log(d)$ (and in particular, $p = q$) from the case $\dtv{p}{q} > \eps$. Valiant and Valiant~\cite{VV17} removed the $\log d$ factor from the sample complexity (though without analyzing ``robustness'').  The analysis in these works showed the importance at looking at ``weighted'' versions of the $\ell_2^2$-distance $\delltwosq{p}{q} = \sum_i (p_i-q_i)^2$ in which the $i$th summand is reweighted by a factor depending on~$q_i$.  Indeed, Acharya~et~al.~\cite{ADK15} improved these results by considering an unbiased estimator for the $\chi^2$-divergence of~$p$ from~$q$ and (implicitly) using a form of \Cref{lem:chebyshev}; they showed that $n = O(\sqrt{d}/\eps^2)$ samples from~$p$ suffice to distinguish $\dchisq{p}{q} \leq \eps^2/10$ from $\dtv{p}{q} > \eps$.  Indeed, although it is not stated this way, a close inspection of their proof shows that they actually obtain a robust tester for $\chi^2$-divergence under the assumption that $q_i \geq \Omega(\eps^2/d)$ for all~$i$.  This observation motivated our result \Cref{thm:robust-chi-sq}.  As a not too difficult consequence, the present authors and others~\cite{DKW17} observed that one can upgrade the~\cite{ADK15} result from ``$\chi^2$-vs.-$\ell_1$'' to the strictly superior ``$\chi^2$-vs.-Hellinger'', \`{a}~la~our \Cref{thm:maybe-main}.

On the subject of testing identity of $p$ to a known distribution~$q$, we should mention the line of work on ``instance-optimal'' results due to Valiant and Valiant~\cite{VV17} and Blais et~al.~\cite{BCG17}.  Stating these is slightly technical, but roughly speaking they show that one can distinguish $p = q$ from $\dtv{p}{q} > \eps$ using just $n = O(\sqrt{k}/\eps^2)$ samples provided the largest $k$ values of~$q$ sum to at least $1 - \Theta(\eps)$.  This can be compared with our \Cref{cor:pca-trace}.

\paragraph{Identity testing with two unknown distributions.} Finally, we discussed Problem~2 mentioned above, testing identity of \emph{two} unknown distributions $p$ and $q$ on $[d]$, given $n$ samples from each.  This problem was first studied by Batu et~al.~\cite{BFR+13}, who used a natural estimator for $\delltwosq{p}{q}$ to show that $n = O(1/\eps^4)$ samples suffice to distinguish $\delltwosq{p}{q} \leq \eps/2$ from $\delltwosq{p}{q} > \eps$.  (This has no dependence on~$d$ but a nonoptimal dependence on~$\eps$; in fact, our \Cref{thm:robust-HS} improves on this, even in the quantum case.)  From this, they were able to derive a total variation tester, using $n = O(d^{2/3} \log(d)/\eps^{8/3})$ samples to distinguish $p = q$ from $\dtv{p}{q} > \eps$ (in fact, they had a robust condition in place of~$p = q$). This was improved by Chan et~al.~\cite{CDVV14} to an optimal bound of $n = O(\max\{d^{2/3} / \eps^{4/3}, \sqrt{d}/\eps^2\})$ by means of an estimator resembling the Le Cam (triangular) discrimination.  The result was later reproved by Diakonikolas and Kane~\cite{DK16}, who also obtained a tester for Hellinger distance in the case of unknown $p$ and~$q$ with near-optimal sample complexity of $n = \wt{O}(\min \{d^{2/3}/\eps^{8/3}, d^{3/4}/\eps^2\})$ (improving on an $n = \wt{O}(d^{2/3} /\eps^8)$ bound of Guha et~al.~\cite{GMV09}).
Subsequently, the tilde on the big-Oh was removed by~\cite{DKW17}, giving the optimal sample complexity for this case.
We remark that obtaining an analogous result in the quantum case is an interesting open problem (specifically, obtaining an identity testing algorithm for two unknown states $\rho, \sigma$ that uses $n = O(d/\eps)$ samples to distinguish $\rho = \sigma$ from $\Fid{\rho}{\sigma} < 1-\eps$).

\bigskip

We end this section by stating and proving the useful version of Chebyshev described earlier.
\begin{lemma}                                       \label{lem:chebyshev}
    Let $\bX^{(n)}$ be a sequence of estimators for a number $\mu \geq 0$, meaning $\E[\bX^{(n)}] = \mu$ for all~$n$.  Suppose we have a variance bound of the form
    \begin{equation}    \label{eqn:variance}
        \Var[\bX^{(n)}] \leq O\parens*{\frac{b(\mu)}{n^2} + \frac{v(\mu)}{n}},
    \end{equation}
    where
    \begin{equation}    \label{eqn:increasing}
        b(\mu), \ v(\mu), \ \frac{\mu^2}{b(\mu)}, \ \frac{\mu^2}{v(\mu)} \ \text{are increasing functions of~$\mu \geq 0$.}
    \end{equation}
    (The $O(\cdot)$ should hide a universal constant.)  Let $\theta > 0$ be a parameter.  Then provided
    \begin{equation}    \label{eqn:how-many}
        n \geq C \max\braces*{\sqrt{\frac{b(\theta)}{\theta^2}}, \frac{v(\theta)}{\theta^2}},
    \end{equation}
    one can use $\bX^{(n)}$ to distinguish (with high probability) whether $\mu \leq .99 \theta$ or $\mu > \theta$.  Here $C$ is another universal constant. (More generally, to achieve $1-\gamma$ in place of~$.99$, one should take $\gamma^2 \theta^2$ in place of $\theta^2$ in the denominators in \eqref{eqn:how-many}.)
\end{lemma}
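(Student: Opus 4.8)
The plan is to run the obvious threshold test: fix a cutoff $\tau$ lying strictly between the two regimes, say $\tau = .995\theta$, compute $\bX^{(n)}$, and output ``$\mu \le .99\theta$'' (close) if $\bX^{(n)} \le \tau$ and ``$\mu > \theta$'' (far) otherwise. Correctness in each of the two promised cases reduces to a one-sided Chebyshev estimate, so the only real work is to verify that the variance bound \eqref{eqn:variance} is small enough in each case once $n$ satisfies \eqref{eqn:how-many}. The value of $\tau$ is dictated by wanting the deviation $|\bX^{(n)} - \mu|$ required to cross $\tau$ to be $\Omega(\theta)$ when $\mu \le .99\theta$ and $\Omega(\mu)$ when $\mu > \theta$; any $\tau \in (.99\theta, .995\theta]$ achieves this, and $\tau = .995\theta$ is a convenient choice.

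First I would treat the ``close'' case $\mu \le .99\theta$. Reporting far here requires $\bX^{(n)} - \mu \ge \tau - \mu \ge .005\theta$, so Chebyshev gives $\Pr[\bX^{(n)} > \tau] \le \Var[\bX^{(n)}]/(.005\theta)^2$. Since $\mu \le \theta$ and $b,v$ are increasing, I may replace $b(\mu),v(\mu)$ by $b(\theta),v(\theta)$ in \eqref{eqn:variance}, giving a bound of order $b(\theta)/(n^2\theta^2) + v(\theta)/(n\theta^2)$. Plugging in $n \ge C\sqrt{b(\theta)/\theta^2}$ makes the first term $O(1/C^2)$ and $n \ge C\,v(\theta)/\theta^2$ makes the second $O(1/C)$, so the failure probability is $O(1/C)$, below any desired constant once $C$ is large.

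The harder case --- the one requiring ``a little case analysis'' --- is the ``far'' case $\mu > \theta$, and this is where I expect the main obstacle. The trouble is that $\mu$ may now be arbitrarily large, so $b(\mu),v(\mu)$ need not be controlled by $b(\theta),v(\theta)$ and the upward monotonicity used above fails. The point I would exploit is that the required deviation also grows with $\mu$: since $\tau = .995\theta < .995\mu$, reporting close requires $\mu - \bX^{(n)} \ge \mu - \tau > .005\mu$, so Chebyshev gives $\Pr[\bX^{(n)} \le \tau] \le \Var[\bX^{(n)}]/(.005\mu)^2$, of order $b(\mu)/(n^2\mu^2) + v(\mu)/(n\mu^2)$. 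Now the hypotheses that $\mu^2/b(\mu)$ and $\mu^2/v(\mu)$ are increasing give exactly $b(\mu)/\mu^2 \le b(\theta)/\theta^2$ and $v(\mu)/\mu^2 \le v(\theta)/\theta^2$, which turn the bound back into the same expression $b(\theta)/(n^2\theta^2) + v(\theta)/(n\theta^2)$ as before, again $O(1/C)$ under \eqref{eqn:how-many}.

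Thus the two monotonicity pairs in \eqref{eqn:increasing} play complementary roles: $b,v$ increasing handles the close case by pushing $\mu$ up to $\theta$ in the numerators, while $\mu^2/b,\mu^2/v$ increasing handles the far case by comparing the \emph{normalized} variances $b(\mu)/\mu^2,\,v(\mu)/\mu^2$ downward to their values at $\theta$. Finally, the stated $1-\gamma$ refinement follows by choosing $\tau$ with gaps $\Theta(\gamma\theta)$ on both sides (e.g.\ $\tau = (1-\gamma/2)\theta$), which replaces the $(.005\theta)^2$ and $(.005\mu)^2$ denominators by $\Theta(\gamma^2\theta^2)$ and $\Theta(\gamma^2\mu^2)$, and hence $\theta^2$ by $\gamma^2\theta^2$ in \eqref{eqn:how-many}.
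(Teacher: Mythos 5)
Your proposal is correct and follows essentially the same route as the paper's proof: the same threshold $.995\theta$, the same two-case analysis, and the same use of the two monotonicity pairs in \eqref{eqn:increasing} (pushing $b,v$ up to $\theta$ in the close case, and comparing the normalized quantities $b(\mu)/\mu^2, v(\mu)/\mu^2$ down to their values at $\theta$ in the far case). The only cosmetic difference is that you bound the Chebyshev failure probability directly while the paper first bounds the standard deviation by $.001\theta$ (resp.\ $.001\mu$) and then applies Chebyshev; these are the same argument.
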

\begin{proof}
    We report ``$\mu \leq .99 \theta$'' if $\bX^{(n)} \leq .995 \theta$ and report ``$\mu > \theta$'' if $\bX^{(n)} > .995 \theta$.

    To analyze the correctness, suppose first that $\mu \leq .99 \theta$.  Then $b(\mu) \leq b(.99 \theta) \leq b(\theta)$ (by~\eqref{eqn:increasing}) and similarly $v(\mu) \leq v(\theta)$. Using these inequalities in~\eqref{eqn:variance} and then substituting in~\eqref{eqn:how-many}, we get $\Var[\bX^{(n)}] \leq O(\frac{1}{C^2} + \frac{1}{C}) \theta^2$.  For $C$ sufficiently large this implies $\stddev[\bX^{(n)}] \leq .001 \theta$, say, and then Chebyshev implies $\bX^{(n)} \leq \mu + .005 \theta \leq .995\theta$ with high probability.

    On the other hand, suppose that $\mu > \theta$.  Then $\frac{b(\theta)}{\theta^2} \geq \frac{b(\mu)}{\mu^2}$ (by~\eqref{eqn:increasing}) and similarly $\frac{v(\theta)}{\theta^2} \geq \frac{v(\mu)}{\mu^2}$.  Using these inequalities in~\eqref{eqn:how-many} and then substituting into~\eqref{eqn:variance}, we get $\Var[\bX^{(n)}] \leq O(\frac{1}{C^2} + \frac{1}{C}) \mu^2$.  For~$C$ sufficiently large this implies $\stddev[\bX^{(n)}] \leq .001\mu$, say, and then Chebyshev implies $\bX^{(n)} \geq \mu - .005\mu > .995\theta$ with high probability.
\end{proof}

\section{Preliminaries}

\subsection{Classical and quantum distances and divergences} \label{sec:distances}

\subsubsection{Distances and divergences for classical probability distributions}
There are many distances and divergences used for comparing discrete probability distributions $p = (p_1, \dots, p_d)$ and $q = (q_1, \dots, q_d)$; see, e.g.,~\cite{GS02,Cro17}.  We review some important ones here.  All of the distances we review will be \emph{permutation invariant}, meaning they are unchanged if the same permutation $\pi \in \Sym{d}$ is simultaneously applied to the outcomes of $p$ and $q$.
\begin{definition}
    The \emph{total variation distance} between $p$ and $q$ is
    \[
        \dtv{p}{q} = \frac12 \sum_{i=1}^d \abs*{p_i - q_i} = \frac12 \|p-q\|_1.
    \]
\end{definition}
\noindent The total variation distance is a metric and has a maximum value of~$1$, occurring when $p$ and $q$ have disjoint support.  It also has an operational meaning: it is the greatest probability with which one can discriminate a draw from $p$ and a draw from $q$; i.e., $\dtv{p}{q} = \max_{A \subseteq [d]}\{\abs{\Pr_{p}[A] - \Pr_q[A]}\}$.
\begin{definition}
    The \emph{$\ell_2$ distance} between $p$ and $q$ is
    \[
        \delltwo{p}{q} = \Bigl(\sum_{i=1}^d \parens*{p_i-q_i}^2\Bigr)^{1/2} = \|p-q\|_2.
    \]
\end{definition}
\noindent 
The $\ell_2$ distance is also a metric; nevertheless we more often consider its square, $\delltwosq{p}{q}$. As a probability metric, the $\ell_2$ distance is somewhat unnatural.  For example, it does not satisfy the ``data processing inequality'', meaning that there is a stochastic operation that \emph{increases} $\ell_2$ distance.  However it is by far the easiest distance to calculate, as $\delltwosq{p}{q}$ is a simple polynomial in~$p$ and~$q$; further, it can be related to the total variation distance via\noteryan{on the left here, I think $\frac12$ can be improved to $\frac{1}{\sqrt{2}}$?} $\frac12 \delltwo{p}{q} \leq \dtv{p}{q} \leq \frac12 \sqrt{d} \cdot \delltwo{p}{q}$, using Cauchy--Schwarz.
\begin{definition}
    The \emph{Hellinger distance} between $p$ and $q$ is
    \[
        \dhell{p}{q} = \Bigl(\sum_{i=1}^d \parens*{\sqrt{p_i}-\sqrt{q_i}}^2\Bigr)^{1/2}.
    \]
    Equivalently, its square may be defined as $\dhellsq{p}{q} = 2(1-\BC{p}{q})$, where
    \[
        \BC{p}{q} = \sum_{i=1}^d \sqrt{p_i}\sqrt{q_i}
    \]
    is the \emph{Bhattacharyya coefficient} (or \emph{Hellinger affinity}) of $p$ and $q$.
\end{definition}
\noindent The Hellinger distance is also a metric; it has a maximum value of~$\sqrt{2}$, occurring when $p$ and $q$ have disjoint support.  One of its main advantages comes from the fact that the Bhattacharyya coefficient satisfies the tensorization property $\BC{p \otimes p'}{q \otimes q'} = \BC{p}{q} \cdot \BC{p'}{q'}$, where $p \otimes p'$ denotes the product distribution on $[d]^2$ arising from $p$ and $p'$.  We have the following relationship between Hellinger distance and total variation distance: $\frac12\dhellsq{p}{q} \leq \dtv{p}{q} \leq \dhell{p}{q}$.  The squared Hellinger distance is also well known to be within a small constant factor of several other popular measures of discrimination, such as the Jensen--Shannon divergence and the Le Cam (triangular) discrimination.
\begin{definition}
    The \emph{$\chi^2$-divergence} of $p$ from $q$ is
    \[
        \dchisq{p}{q} = \sum_{i=1}^d \frac{(p_i - q_i)^2}{q_i},
    \]
    which we take to be $\infty$ if $p$'s support is not a subset of $q$'s support.
\end{definition}
\noindent Unlike our previous distances, the $\chi^2$-divergence is not a metric since it is not even symmetric with respect to interchanging $p$ and $q$. (For simplicity, we may still sometimes call it a ``distance''.) One utility it has is that it bounds the squared Hellinger distance, $\dhellsq{p}{q} \leq \dchisq{p}{q}$, but can be easier to calculate: if $q$ is considered ``fixed'', then the $\chi^2$-divergence is a simple polynomial in~$p$.  Finally, we should mention that the total variation distance, the squared Hellinger distance, and the $\chi^2$-divergence are all ``$f$-divergences'', a consequence of which is that they satisfy the data processing inequality~\cite[Sec.~4]{Wu17}; i.e., none of them increases when the same stochastic operation is applied to $p$ and $q$.

\subsubsection{Distances and divergences for quantum mixed states} \label{sec:quantum-distances}
There are again many distances and divergences used for comparing two quantum states $\rho$ and~$\sigma$; see, e.g.,~\cite{GLN05}, \cite[Chap.~13]{BZ07}, \cite{Aud12} for some surveys.  All of the quantum distances we review will be \emph{unitarily invariant}, meaning that $\mathrm{D}(\rho,\sigma) = \mathrm{D}(U \rho U^\dagger, U \sigma U^\dagger)$ for all unitaries~$U$.

Many classical distances have a quantum analogue, and indeed some have \emph{several} quantum analogues.  Typically, a quantum distance between $\rho$ and $\sigma$ reduces to the analogous classical distance between $p$ and $q$ in the case that $\rho = \diag(p)$ and $\sigma = \diag(q)$ are diagonal.

In particular, for every classical $f$-divergence one can form either the ``standard quantum $f$-divergence'' (introduced by Petz) or the ``measured quantum $f$-divergence'' --- see~\cite{HM17}.  We will only consider the latter.  Given a classical  $f$-divergence $\mathrm{d}_f(\cdot, \cdot)$, one obtains the corresponding measured quantum $f$-divergence $\mathrm{D}_f(\cdot,\cdot)$ as follows:
\begin{equation}    \label{eqn:classical-to-quantum}
    \mathrm{D}_f(\rho,\sigma) = \sup_{\text{POVMs } \{E_i\}_{i=1}^N}
    \braces*{\mathrm{d}_f(p_{\rho}, p_{\sigma})}, \quad \text{where }
    p_{\xi} = (\tr(\xi E_1), \dots, \tr(\xi E_N)).
\end{equation}
In other words, the quantum divergence is defined as the maximum classical divergence that can be achieved when applying the same POVM to both states.  In this section we will encounter the measured quantum $f$-divergence corresponding to total variation distance, squared Hellinger distance, and $\chi^2$-divergence.

\begin{definition}
    The \emph{trace distance} between $\rho$ and $\sigma$ is
    \[
        \Dtr{\rho}{\sigma} = \frac12 \|\rho - \sigma\|_1.
    \]
\end{definition}
\noindent The trace distance is a metric and it has a maximum value
of~$1$, occurring when $\rho$ and $\sigma$ have orthogonal support.
Helstrom~\cite{Hel76} showed that trace distance  is the measured version of  classical total variation distance  in the sense of \Cref{eqn:classical-to-quantum}.  It therefore equals the maximum probability with which the states $\rho$ and $\sigma$ can be discriminated by some measurement.  It also follows that it satisfies the ``quantum data processing inequality''.  In other words, it can never increase when the same quantum channel (completely positive trace-preserving map) is applied to both~$\rho$ and~$\sigma$.
\begin{definition}
    The \emph{Hilbert--Schmidt distance} (or \emph{Frobenius distance})
    between $\rho$ and $\sigma$ is
    \[
        \DHS{\rho}{\sigma} = \|\rho - \sigma\|_{\textrm{HS}} =  \Bigl(\sum_{i,j=1}^d \abs*{\rho_{ij} - \sigma_{ij}}^2\Bigr)^{1/2} = \tr\parens*{(\rho - \sigma)^2}^{1/2}.
    \]
\end{definition}
\noindent This metric can be seen as analogue of the classical $\ell_2$ distance.  It is not, however, a direct analogue in the sense of \Cref{eqn:classical-to-quantum}; this is related to the fact that it does not satisfy the quantum data processing inequality.  Nevertheless, it is useful by virtue of the fact that the squared Hilbert--Schmidt distance, $\DHSsq{\rho}{\sigma} = \tr\parens*{(\rho-\sigma)^2}$, is extremely easy to compute, and that it can be related to the trace distance via Cauchy--Schwarz for matrices: $\frac12 \DHS{\rho}{\sigma} \leq \Dtr{\rho}{\sigma} \leq \frac12 \sqrt{d} \cdot \DHS{\rho}{\sigma}$.\noteryan{same question about the constant on the left}

\begin{definition}
    The \emph{Bures distance} between $\rho$ and $\sigma$ is
    \[
        \DB{\rho}{\sigma} = \bigl(2(1-\Fid{\rho}{\sigma})\bigr)^{1/2},
    \]
    where
    \[
        \Fid{\rho}{\sigma} = \|\sqrt{\rho} \sqrt{\sigma}\|_1
    \]
    is the \emph{fidelity} between $\rho$ and $\sigma$.  (The quantity $1-\Fid{\rho}{\sigma}$ is termed the \emph{infidelity}).
\end{definition}
\noindent The Bures distance is a metric and it has a maximum value of~$\sqrt{2}$, occurring when $\rho$ and $\sigma$ have orthogonal support.  The work of Fuchs and Caves~\cite{FC95} shows that the (squared) Bures distance is the measured version of classical (squared) Hellinger distance  in the sense of \Cref{eqn:classical-to-quantum}.  It follows that  $\frac12\DBsq{p}{q} \leq \Dtr{p}{q} \leq \DB{p}{q}$. It also follows that the Bures distance satisfies the quantum data processing inequality.

We more often consider the square of the Bures distance, $\DBsq{\rho}{\sigma}$, which is simply twice the infidelity.  It is also quite common to consider the squared fidelity, $\Fidsq{\rho}{\sigma}$. The squared fidelity, as shown by Uhlmann~\cite{Uhl76}, is the maximum overlap between purifications of $\rho$ and $\sigma$, where the \emph{overlap} of (mixed) quantum states $\rho'$ and $\sigma'$ is defined to be $\tr(\rho' \sigma')$.\noteryan{In particular, if one state is pure --- likely $\sigma$ in applications --- then $\tr(\rho \sigma)$ \emph{is} the fidelity}

Note that when $\rho$ and $\sigma$ are ``close'', with $\Fid{\rho}{\sigma} = 1 - \eps$, we have that $1-\Fidsq{\rho}{\sigma} \approx 2\eps$. Thus there is not much difference if one defines infidelity as $1 - \Fid{\rho}{\sigma}$ or $1-\Fidsq{\rho}{\sigma}$; these quantities are always within a factor~$2$ of each other, and also of the squared Bures distance.  Also very closely related is the \emph{quantum Hellinger affinity}, $Q_{1/2}(\rho, \sigma) = \tr(\sqrt{\rho}\sqrt{\sigma})$. It satisfies $\Fidsq{\rho}{\sigma} \leq Q_{1/2}(\rho,\sigma) \leq \Fid{\rho}{\sigma}$ and has been used to define a ``quantum Hellinger distance'' by $\mathrm{D}_{\mathrm{H}}^2(\rho,\sigma) = 2(1-Q_{1/2}(\rho,\sigma))$; see~\cite{ANSV08}.  The same bound $\Fidsq{\rho}{\sigma} \leq Q_{\text{min}}(\rho,\sigma) \leq \Fid{\rho}{\sigma}$  also holds~\cite{Aud12} for the quantity $Q_{\text{min}}\diverg{\rho}{\sigma}$ arising in the quantum Chernoff bound mentioned in \Cref{sec:asympt-state-discrim}.

\begin{definition}
    Assume $\sigma$ has full rank. The \emph{Bures $\chi^2$-divergence} of $\rho$ from $\sigma$ is
    \[
        \DBchi{\rho}{\sigma} = \tr\bigl((\rho - \sigma) \cdot \Omega_{\sigma}(\rho - \sigma)\bigr),
    \]
    where $\Omega_\sigma$ is the linear operator whose inverse is defined by $\Omega_\sigma^{-1}(A) = \frac12(\sigma A + A\sigma)$.  (There is a simple generalization to the case where $\sigma$ does not have full rank, so long as $\rho$'s support is a subset of $\sigma$'s; we will not need it, however.)  In case $\sigma = \diag(\beta_1, \dots, \beta_d)$, we obtain the following more explicit formula:
    \[
        \DBchi{\rho}{\sigma} = \sum_{i,j=1}^d \frac{2}{\beta_i + \beta_j} |\Delta_{ij}|^2, \quad \text{where } \Delta = \rho - \sigma.
    \]
\end{definition}
\noindent The Bures $\chi^2$-divergence is the measured version of the classical $\chi^2$-divergence in the sense of \Cref{eqn:classical-to-quantum}, as shown in~\cite{BC94,TV15}).  As such, it satisfies the quantum data processing inequality, and we can infer from the classical case that $\DBsq{\rho}{\sigma} \leq \DBchi{\rho}{\sigma}$. Indeed, it is known~\cite{TKR+10} that the \emph{quantum relative entropy}, $\mathrm{S}\diverg{\rho}{\sigma} = \tr(\rho(\log \rho - \log \sigma))$ is sandwiched in between: $\DBsq{\rho}{\sigma} \leq \mathrm{S}\diverg{\rho}{\sigma} \leq \DBchi{\rho}{\sigma}$. As in the classical case, such bounds are what makes the Bures $\chi^2$-divergence useful, together with its having a relatively simple formula when $\sigma$ is considered to be ``fixed''.

We close this section by commenting that, although we focus on Bures $\chi^2$-divergence, there are many generalizations of $\chi^2$-divergence to the quantum case.  For example, the ``standard quantum $f$-divergence'' version is $\tr((\rho - \sigma)^2 \sigma^{-1})$.  More generally, one may consider $\tr((\rho-\sigma) \sigma^{-\alpha} (\rho - \sigma) \sigma^{\alpha-1})$ for any $\alpha \in [0,1/2]$, and there are further possibilities. See, e.g.,~\cite{Pet96,TKR+10}, wherein it is explained that the Bures $\chi^2$-divergence takes on the smallest value among a wide family of generalizations.

\subsection{Quantum probability}\label{sec:quantum-probability}

Let $V$ be a finite-dimensional vector space over $\C$ and let $\End(V)$
denote the algebra of linear operators on $V$. An operator
$X \in \End(V)$ is self-adjoint or Hermitian if $X^\dag = X$, where
$X^\dag$ denotes the conjugate-transpose of $X$; $X$ is positive if
there exists an operator $Y \in \End(V)$ such that $X = Y^\dag Y$. For
self-adjoint operators $X, Y \in \End(V)$ we write $X \preceq Y$ provided $Y - X$ is positive. The identity operator is denoted by $\Id$, with the dimension of the underlying vector space
being inferred from the context.

\begin{definition}
  A \emph{quantum state} $\varrho$ is defined to be a positive operator
  $\varrho \in \End(V)$ with $\tr(\varrho) = 1$.
\end{definition}
\begin{definition}  \label{def:POVM}
    A \emph{positive-operator valued measurement} (POVM) $\calM$ consists of a set of positive operators that sum to the identity operator $\Id$. When a measurement $\calM = \braces*{E_1, \dotsc, E_k}$ is applied to a quantum state~$\varrho$, the \emph{outcome} is $i \in [k]$ with probability $p_i =  \tr(\varrho E_i)$.
\end{definition}
\begin{definition}  \label{def:observable}
    An \emph{observable} $\cO$ is a self-adjoint operator  $\cO \in \End(V)$.  It has a unique spectral decomposition $\cO = \lambda_1 \Pi_1 + \dotsb + \lambda_k \Pi_k$, where the $\lambda_i$'s are the distinct real eigenvalues of $\cO$, and the $\Pi_i$'s are the orthogonal projections onto the associated eigenspaces.  The projections $\{\Pi_i : i \in [k]\}$ form a POVM.

    Suppose we perform this POVM on a quantum state $\varrho \in \End(V)$ and then report the eigenvalue~$\lambda_i$ upon receiving outcome~$i$.  Then we obtain a discrete real-valued random variable~$\bx$, which takes value $\lambda_i$ with probability $\tr(\varrho \Pi_i)$ for $i = 1, \dotsc, k$.
\end{definition}

\begin{fact}
    Given an observable $\cO$ and associated real-valued random variable~$\bx$, it holds that $\E[\bx] = \tr(\varrho \cO)$.  It also holds that the observable $\cO^2$ is associated to the random variable~$\bx^2$. Thus we can compute $\Var[\bx]$ as $\tr(\varrho \cO^2) - \tr(\varrho \cO)^2$.
\end{fact}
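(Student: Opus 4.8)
The plan is to verify each of the three assertions by direct computation from the spectral data of $\cO$, leaning entirely on linearity of the trace together with the orthogonality relations among the spectral projections.

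First I would establish the expectation formula. By \Cref{def:observable}, $\bx$ takes the value $\lambda_i$ with probability $\tr(\varrho \Pi_i)$, so
\[
    \E[\bx] = \sum_{i=1}^k \lambda_i \tr(\varrho \Pi_i) = \tr\Bigl(\varrho \sum_{i=1}^k \lambda_i \Pi_i\Bigr) = \tr(\varrho \cO),
\]
where the middle equality uses linearity of the trace and the last uses the spectral decomposition $\cO = \sum_i \lambda_i \Pi_i$.

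Next I would identify the random variable associated to $\cO^2$. Since the $\Pi_i$ are the orthogonal projections onto the \emph{distinct} eigenspaces of $\cO$, they satisfy $\Pi_i \Pi_j = \delta_{ij}\Pi_i$, whence
\[
    \cO^2 = \Bigl(\sum_{i=1}^k \lambda_i \Pi_i\Bigr)^2 = \sum_{i=1}^k \lambda_i^2 \Pi_i.
\]
The only subtlety is that the values $\lambda_i^2$ need not be distinct; grouping together the indices sharing a common value $\mu = \lambda_i^2$ yields the genuine spectral decomposition of $\cO^2$, with projection $\sum_{i : \lambda_i^2 = \mu} \Pi_i$ onto the $\mu$-eigenspace. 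Performing this POVM and reporting the eigenvalue $\mu$ therefore produces $\mu$ with probability $\sum_{i : \lambda_i^2 = \mu} \tr(\varrho \Pi_i)$, which is exactly $\Pr[\bx^2 = \mu]$, since $\bx^2 = \mu$ precisely when $\bx \in \{\lambda_i : \lambda_i^2 = \mu\}$. Hence $\cO^2$ is associated to $\bx^2$.

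Finally, the variance formula follows by combining the two preceding facts: applying the expectation formula to the observable $\cO^2$ and its associated variable $\bx^2$ gives $\E[\bx^2] = \tr(\varrho \cO^2)$, so
\[
    \Var[\bx] = \E[\bx^2] - \E[\bx]^2 = \tr(\varrho \cO^2) - \tr(\varrho \cO)^2.
\]
The computation is entirely routine; the one point demanding a moment's care is the coarsening of the spectral decomposition when distinct eigenvalues of $\cO$ square to a common value, which I expect to be the sole (and very mild) obstacle.
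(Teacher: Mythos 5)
Your proof is correct, and it is exactly the standard computation the paper implicitly relies on: the paper states this as a \texttt{Fact} without proof, so there is no competing argument to compare against. Your one point of care---coarsening the spectral decomposition of $\cO^2$ when distinct eigenvalues of $\cO$ square to the same value---is handled properly and is the only place where anything could go wrong.
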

In light of these facts, it is reasonable to define the notation $\E_{\varrho}[\cO]$ and $\Var_{\varrho}[\cO]$.  In fact, we will extend this notation to all operators, not just self-adjoint ones.

\begin{definition}
  The \emph{expectation} of operator $X \in \End(V)$ with respect to state $\varrho$ is defined
  by
  \begin{align*}
    \E_\varrho[X]
    &= \tr(\varrho X).
  \end{align*}
\end{definition}
  \noindent Since $\E_\varrho[\Id] = 1$,
  $\E_\varrho[X^\dag] = \overline{\E_\varrho[X]}$, and
  $\E_\varrho[X^\dag X] \ge 0$ for all $X \in \End(V)$, the map
  $\E_\varrho[\placeholder]$ defines a positive linear functional of
  norm $1$ on $\End(V)$. Moreover,
  $\E_{\varrho \tensor \varrho'}[\placeholder]$ satisfies the following
  tensorization property:
  $\E_{\varrho \tensor \varrho'} [\cO \tensor \cO'] = \E_\varrho[\cO]
  \cdot \E_{\varrho'}[\cO']$ for all observables
  $\cO, \cO' \in \End(V)$. The following straightforward fact says that
  $\E_\varrho[\placeholder]$ is also monotone with respect to the
  L\"{o}wner partial order.

  \begin{fact}\label{fact:expectation-monotonicity}
    If $\cO_1, \cO_2 \in \End(V)$ are observables, then
    $\cO_1 \preceq \cO_2$ if and only if
    $\E_\varrho[\cO_1] \leq \E_\varrho[\cO_2]$ for all states
    $\varrho \in \End(V)$.
  \end{fact}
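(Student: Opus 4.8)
The plan is to reduce the biconditional to a statement about the single self-adjoint operator $Y = \cO_2 - \cO_1$. By linearity of the trace, for every state $\varrho$ we have
\[
    \E_\varrho[\cO_2] - \E_\varrho[\cO_1] = \tr(\varrho \cO_2) - \tr(\varrho \cO_1) = \tr(\varrho Y),
\]
and $Y$ is self-adjoint since $\cO_1, \cO_2$ are. Thus the claimed equivalence is exactly: $Y \succeq 0$ (i.e.\ $\cO_1 \preceq \cO_2$) if and only if $\tr(\varrho Y) \geq 0$ for all states $\varrho$. I would prove the two directions separately.

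For the forward direction, suppose $Y \succeq 0$, so that $Y = Z^\dag Z$ for some $Z \in \End(V)$ by the definition of positivity. Fix any state $\varrho$; since $\varrho$ is positive it has a positive square root $\sqrt{\varrho}$. Using cyclicity of the trace,
\[
    \tr(\varrho Y) = \tr\bigl(\sqrt{\varrho}\, Y \sqrt{\varrho}\bigr) = \tr\bigl((Z\sqrt{\varrho})^\dag (Z \sqrt{\varrho})\bigr) \geq 0,
\]
the last inequality holding because $(Z\sqrt{\varrho})^\dag (Z\sqrt{\varrho})$ is positive and hence has nonnegative trace. The only subtlety here is that $\varrho$ and $Y$ need not commute, which is why one routes the argument through $\sqrt{\varrho}$ and cyclicity rather than naively multiplying eigenvalues.

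For the reverse direction, suppose $\tr(\varrho Y) \geq 0$ for every state $\varrho$. The key observation is that it suffices to test against pure states: for any unit vector $v \in V$, the rank-one projector $\varrho = \ket{v}\bra{v}$ is a legitimate state (positive with trace $1$), and $\tr(\varrho Y) = \bra{v} Y \ket{v}$. Hence $\bra{v} Y \ket{v} \geq 0$ for every unit vector $v$, which is precisely the statement that the self-adjoint operator $Y$ is positive, i.e.\ $\cO_1 \preceq \cO_2$.

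There is no serious obstacle in this fact; the two points worth getting right are (i) the noncommutativity issue in the forward direction, handled by passing to $\sqrt{\varrho}$ and invoking cyclicity of the trace, and (ii) recognizing in the reverse direction that pure states already exhaust the positivity condition $\bra{v} Y \ket{v} \geq 0$, so that no richer family of test states is needed to recover the full L\"{o}wner order.
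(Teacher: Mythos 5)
Your proof is correct. The paper states this as a ``straightforward fact'' and gives no proof at all, so there is nothing to compare against; your argument is the standard one and handles the two directions properly. In the forward direction, passing through $\sqrt{\varrho}$ and cyclicity is a clean way to deal with noncommutativity (one could equivalently expand $\varrho$ in its eigenbasis and use $\bra{v_i}Y\ket{v_i} \geq 0$ termwise), and in the reverse direction testing against the pure states $\ket{v}\bra{v}$ is exactly the right family; the only implicit step is the spectral-theorem equivalence between the paper's definition of positivity ($Y = Z^\dag Z$) and the quadratic-form condition $\bra{v}Y\ket{v} \geq 0$ for self-adjoint $Y$, which is standard in finite dimensions.
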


\begin{definition}
  The \emph{covariance} of two operators $X_1,X_2
  \in \End(V)$ with respect to state $\varrho$ is the sesquilinear form defined by
  \begin{align*}
    \Cov_\varrho[X_1,X_2]
    &= \E_\varrho[(X_1 - \mu_1 \Id)^\dag(X_2 -
      \mu_2 \Id)] = \E_\varrho[X_1^\dagger X_2] - \mu_1^\dagger \mu_2, \quad \text{where $\mu_i = \E_\varrho[X_i]$.}
  \end{align*}
\end{definition}
\noindent Since $\Cov_\varrho[\Id,\cdot\,] = \Cov_\varrho[\placeholder, \Id] = 0$, it
  follows that $\Cov_\varrho[\placeholder,\cdot\,]$ is also
  translation-invariant in each argument; i.e.,
  $\Cov_\varrho[\cO_1 + a \Id, \cO_2 + b \Id] = \Cov_\varrho[\cO_1,
  \cO_2]$ for all $a, b \in \C$. Furthermore,
  $\Cov_{\varrho \tensor \varrho'}[\placeholder,\cdot\,]$ satisfies the
  following tensorization property,
  \begin{align*}
    \Cov_{\varrho \tensor \varrho'}[X_1 \tensor Y_1, X_2
    \tensor Y_2]
    &= \Cov_\varrho[X_1, X_2] \cdot
      \Cov_{\varrho'}[Y_1, Y_2],
  \end{align*}
  for all operators $X_1, X_2, Y_1, Y_2 \in \End(V)$. Hence,
  \begin{align}  \label{eqn:indep-analogue}
    \Cov_{\varrho \tensor \varrho'}[X_1 \tensor \Id, \Id
    \tensor X_2]
    &= 0.
  \end{align}
  When $X_1$ and $X_2$ are observables, the equality above is a quantum
  analogue of the classical fact that the covariance of independent
  random variables is zero.
\
\begin{definition}
  The \emph{variance} of operator $X \in \End(V)$ with respect to state $\varrho$ is defined by
  \begin{align*}
    \Var_\varrho[X]
    &= \Cov_\varrho[X, X].
  \end{align*}
\end{definition}

\noindent It holds that $\Var_\varrho[X] \ge 0$ for all~$X$,
  $\Var_\varrho[c \cO] = |c|^2 \Var_\varrho[\cO]$ for all $c \in \C$, and
  \begin{align*}
    \Var_\varrho \left[ \sum_{i=1}^k X_i \right]
    &= \sum_{i=1}^k \Var_\varrho[X_i] + \sum_{\substack{i,j=1 \\ i \not=
    j}}^k \Cov_\varrho[X_i, X_j]
  \end{align*}
  for all operators $X_1, \dots, X_k \in \End(V)$.

\begin{remark}
    We will ultimately only be concerned about $\E_\varrho$ and $\Var_\varrho$ as applied to observables, since our state certification algorithms will involve measuring according to observables, and then applying Chebyshev's inequality to the reported random variable~$\bx$.  Nevertheless, it will be useful in intermediate calculations  to allow $\E_\varrho$, $\Var_\varrho$, and $\Cov_\varrho$ to be applied to \emph{all} operators in $\End(V)$, even though there is not an immediate connection to classical probability when non-normal operators are involved.
\end{remark}

We end this section with a definition and lemma that will assist us in finding observables with low variance. Let $V_1$ and $V_2$ denote finite-dimensional vector spaces over $\C$
and let $\Phi : \End(V_1) \to \End(V_2)$ be a linear map.
\begin{definition}
  $\Phi : \End(V_1) \to \End(V_2)$ is \emph{positive} if $\Phi(X) \succeq 0$
  for all $X \in \End(V_1)$ with $X \succeq 0$. And, $\Phi$ is \emph{unital} if
  $\Phi(\Id) = \Id$.
\end{definition}

Suppose that $V_1 = V_2 = V$ and $\Phi$ is positive and unital. Then the
following result holds:
\begin{lemma}\label{lem:positive-unital-variance}
  If $\E_\varrho \circ \, \Phi = \E_\varrho$, then
  $\Var_\varrho[\Phi(\cO)] \le \Var_\varrho[\cO]$ for all observables
  $\cO \in \End(V)$.
\end{lemma}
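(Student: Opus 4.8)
The plan is to reduce the variance comparison to a single operator inequality — the Kadison–Schwarz inequality — and then let the hypothesis $\E_\varrho \circ \Phi = \E_\varrho$ do the rest. First I would record that, since $\Phi$ is positive, it preserves adjoints, so $\Phi(\cO)$ is again self-adjoint and $\Var_\varrho[\Phi(\cO)] = \E_\varrho[\Phi(\cO)^2] - \E_\varrho[\Phi(\cO)]^2$ is an ordinary variance. The two means already agree: by hypothesis $\E_\varrho[\Phi(\cO)] = \E_\varrho[\cO]$, so the squared-mean terms in $\Var_\varrho[\Phi(\cO)]$ and $\Var_\varrho[\cO]$ cancel. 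Hence it suffices to compare the second moments, and, applying the hypothesis once more to the observable $\cO^2$, it suffices to prove
\[
    \E_\varrho[\Phi(\cO)^2] \;\le\; \E_\varrho[\Phi(\cO^2)] \;=\; \E_\varrho[\cO^2].
\]

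By the forward direction of \Cref{fact:expectation-monotonicity} (monotonicity of $\E_\varrho$ under $\preceq$, which holds for each fixed~$\varrho$), this expectation inequality follows immediately from the operator inequality
\[
    \Phi(\cO)^2 \preceq \Phi(\cO^2).
\]
This is the crux of the argument, and is exactly the Kadison–Schwarz inequality for a positive unital map applied to a self-adjoint operator. The way I would establish it is the standard $2\times 2$ block argument: the operator $M$ on $V \oplus V$ with diagonal blocks $\Id$ and $\cO^2$ and off-diagonal blocks $\cO$ is positive semidefinite, being of the Gram form $M = CC^\dagger$ where $C \colon V \to V \oplus V$ sends $v \mapsto (v, \cO v)$ (using $\cO^\dagger = \cO$). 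Applying $\Phi$ to each block (i.e.\ applying $\Phi \tensor \mathrm{id}_2$) and using unitality $\Phi(\Id) = \Id$ turns $M$ into the block operator with diagonal blocks $\Id$ and $\Phi(\cO^2)$ and off-diagonal blocks $\Phi(\cO)$, which remains positive semidefinite; taking the Schur complement of the invertible top-left block~$\Id$ then yields $\Phi(\cO^2) - \Phi(\cO)^2 \succeq 0$, as desired. Chaining the two displays completes the proof.

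The main obstacle is the operator inequality $\Phi(\cO)^2 \preceq \Phi(\cO^2)$ itself, together with the question of how much positivity it truly needs. The block-matrix proof just sketched uses that $\Phi$ is $2$-positive, which is automatic for the completely positive maps arising from measurements and channels in our applications; for a map that is merely positive and unital the inequality still holds on self-adjoint operators, but that is the genuinely nontrivial content of Kadison's theorem rather than a one-line Schur-complement computation. (Note that the weaker scalar inequality $\langle \psi | \Phi(\cO^2) | \psi \rangle \ge \langle \psi | \Phi(\cO) | \psi \rangle^2$, obtainable from a discriminant argument, does \emph{not} suffice, since the Cauchy–Schwarz bound $\langle \psi | \Phi(\cO) | \psi \rangle^2 \le \langle \psi | \Phi(\cO)^2 | \psi \rangle$ points the wrong way; the full operator inequality is really needed.) Everything else — the reduction to second moments, the cancellation of means, and the passage from the operator inequality to its expectation — is routine bookkeeping with the definitions of $\Var_\varrho$ and $\E_\varrho$ and with the hypothesis $\E_\varrho \circ \Phi = \E_\varrho$.
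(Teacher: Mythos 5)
Your proposal is correct and follows essentially the same route as the paper: the hypothesis reduces the claim to comparing second moments, the Kadison--Schwarz inequality $\Phi(\cO)^2 \preceq \Phi(\cO^2)$ for a positive unital map on a self-adjoint operator is the crux, and monotonicity of $\E_\varrho$ finishes it. The paper simply cites Kadison's theorem where you additionally sketch the $2\times 2$ block argument (correctly flagging that this sketch needs $2$-positivity, whereas the general positive-unital case is Kadison's deeper result), so the extra discussion is accurate but not a departure in method.
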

\begin{proof}
  Let $\cO \in \End(V)$ be an observable. Since
  $\E_\varrho[\Phi(\cO)] = \E_\varrho[\cO]$, it suffices to show that
  $\E_\varrho[\Phi(\cO)^2] \le \E_\varrho[\cO^2]$. Since $\cO$ is
  self-adjoint and $\Phi$ is positive and unital,
  $\Phi(\cO)^2 \preceq \Phi(\cO^2)$, by the Kadison--Schwarz
  inequality~\cite{Kad52}. Hence, by
  \Cref{fact:expectation-monotonicity},
  $\E_\varrho[\Phi(\cO)^2] \le \E_\varrho[\Phi(\cO^2)]$. Since
  $\E_\varrho \circ \,\Phi = \E_\varrho$, it follows that
  $\E_\varrho[\Phi(\cO)^2] \le \E_\varrho[\cO^2]$, as needed.
\end{proof}
Thus, the class of mean-preserving positive unital maps is
variance-nonincreasing.

\begin{remark}
  Although there are other measurements that can be associated with an
  observable $\cO \in \End(V)$ apart from its spectral decomposition,
  the variance of the resulting random variables is at least
  $\Var_\varrho[\cO]$. Indeed, suppose $\calM = \set{E_1, \dotsc, E_k}$
  is a POVM and $x_1, \dotsc, x_k$ are real coefficients such that
  $\cO = x_1 E_1 + \dotsb + x_k E_k$. Let
  $\Phi : \End(\C^k) \to \End(\C^k)$ denote the map defined by
  $\Phi(A) = A_{11} E_1 + \dotsb + A_{kk} E_k$ for all
  $A \in \End(\C^k)$. Since $\calM$ is a POVM, the map $\Phi$ is
  positive and unital. Hence, by the Kadison--Schwarz
  inequality~\cite{Kad52},
  \begin{align*}
    \cO^2
    &= \Phi(\diag(x_1, \dotsc, x_k))^2
      \preceq \Phi(\diag(x_1, \dotsc, x_k)^2)
      = x_1^2 E_1 + \dotsb + x_k^2 E_k
  \end{align*}
  and the result now follows from~\Cref{fact:expectation-monotonicity}.
\end{remark}
\subsection{Representation theory}  \label{sec:rep}

Let $\gS_n$ denote the symmetric group on the alphabet $[n]$ and let
$\U(d)$ denote the group of $d \times d$ unitary matrices.

\begin{definition}
  A \emph{partition} $\lambda$ is a nonincreasing sequence of
  nonnegative integers of finite support. If
  $\lambda_1 + \lambda_2 + \dotsb = n$, then $\lambda$ is said to be a
  partition of $n$, denoted by $\lambda \vdash n$. The size of the
  support of~$\lambda$ is called the \emph{length} of the partition and
  is denoted by $\ell(\lambda)$. The \emph{power sum} symmetric
  polynomial in $d$ variables $p_\lambda(x_1, \dotsc, x_d)$ associated
  to a partition $\lambda$ of length $k$ is defined by
  $p_\lambda = p_{\lambda_1} p_{\lambda_2} \dotsm p_{\lambda_k}$, where
  $p_r(x_1, \dotsc, x_d) = x_1^r + \dotsb + x_d^r$ for all $r \ge 0$.
\end{definition}

The cycle type of a permutation $\pi \in \gS_n$ is denoted by
$\cyc(\pi)$. Sorted in nonincreasing order, $\cyc(\pi)$ is a partition
of $n$. Thus, the partitions of $n$ index the conjugacy classes of
$\gS_n$.

\begin{definition}
  Let $\RepP$ denote the unitary representation of $\gS_n$ on
  $(\C^d)^\on$ defined by
  \begin{align*}
    \RepP(\pi) \, \ket{x_1} \tensor \dotsb \tensor \ket{x_n}
    &= \ket{x_{\pi^{-1}(1)}} \tensor \dotsb \tensor \ket{x_{\pi^{-1}(n)}},
  \end{align*}
  for all $\ket{x_1}, \dotsc, \ket{x_n} \in \C^d$ and $\pi \in
  \gS_n$. Furthermore, let $\Ad_U$ be the linear map on observables
  defined by $\Ad_U(X) = (U^\on) X (U^\on)^\dag$ for all $U \in \U(d)$.
\end{definition}

\begin{definition}
  The \emph{symmetric group algebra} $\C \gS_n$ is the algebra of
  functions $f : \gS_n \to \C$. The functions $1_\pi : \gS_n \to \C$
  with $\pi \in \gS_n$ form a basis of $\C \gS_n$, where $1_\pi$ is
  defined by
  \begin{align*}
    1_\pi(\tau) =
    \begin{cases}
      1, & \pi = \tau, \\
      0, & \pi \not= \tau.
    \end{cases}
  \end{align*}
  With a slight abuse of notation, we use $\pi$ to denote the function
  $1_\pi$ and think of elements of $\C \gS_n$ as linear combinations of
  permutations $\pi \in \gS_n$. Thus, the product in $\C \gS_n$ is
  obtained by extending the product in $\gS_n$ to a bilinear map.
  $\C \gS_n$ also admits a conjugate-linear involution
  $X \mapsto X^\dag$ defined by $\pi^\dag = \pi^{-1}$ for all
  $\pi \in \gS_n$.

  The representation $\RepP$ of $\gS_n$ extends to a $*$-representation
  of the $*$-algebra $\C \gS_n$ as follows:
  \begin{align*}
    X = \sum_{\pi \in \gS_n} a_\pi \pi
    &\mapsto \sum_{\pi \in \gS_n} a_\pi \RepP(\pi) = \RepP(X).
  \end{align*}
  Since the representation $\RepP$ is unitary, it follows that
  $\RepP(X^\dag) = \RepP(X)^\dag$ for all $X \in \C \gS_n$.
\end{definition}

The center of $\C \gS_n$, denoted by $Z(\C \gS_n)$, is the set of
elements $X \in \C \gS_n$ with the property that $XY = YX$ for all
$Y \in \C \gS_n$. For all partitions $\kappa \vdash n$, let
$\cO_\kappa \in \C \gS_n$ be defined by
\begin{align*}
  \cO_\kappa
  &= \avg_{\substack{\pi \in \gS_n \\ \cyc(\pi) = \kappa}} \{\pi\}.
\end{align*}
In other words, $\cO_\kappa$ is the normalized indicator function of the
conjugacy class of permutations of cycle type~$\kappa$. The following
elementary result relates the elements $\cO_\kappa$ to the center of
$\C \gS_n$.
\begin{proposition}\label{prop:real-basis-center}
  $\{\cO_\kappa \mid \kappa \vdash n\}$ is a linear basis for
  $Z(\C \gS_n)$.
\end{proposition}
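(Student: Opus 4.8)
The plan is to verify the three defining properties of a basis separately: that each $\cO_\kappa$ lies in the center, that the family $\{\cO_\kappa \mid \kappa \vdash n\}$ is linearly independent, and that it spans $Z(\C\gS_n)$.

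First I would show $\cO_\kappa \in Z(\C\gS_n)$ for every $\kappa \vdash n$. Since the group elements $\{\tau : \tau \in \gS_n\}$ span $\C\gS_n$, an element $X$ is central if and only if $\tau X \tau^{-1} = X$ for all $\tau \in \gS_n$. Taking $X = \cO_\kappa$, conjugation by $\tau$ sends each summand $\pi$ (with $\cyc(\pi) = \kappa$) to $\tau\pi\tau^{-1}$, which again has cycle type $\kappa$ because cycle type is a conjugacy invariant; moreover $\pi \mapsto \tau\pi\tau^{-1}$ is a bijection of the conjugacy class onto itself. Hence conjugation merely permutes the equally weighted terms of $\cO_\kappa$, leaving the normalized sum unchanged, so $\tau \cO_\kappa \tau^{-1} = \cO_\kappa$ and $\cO_\kappa$ is central.

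Second, linear independence is immediate: the distinct conjugacy classes partition $\gS_n$ into disjoint sets, so the elements $\cO_\kappa$ have pairwise disjoint supports in the permutation basis $\{1_\pi\}$, and each is nonzero; a vanishing linear combination therefore forces every coefficient to vanish.

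The main content is the spanning step. Given an arbitrary $X = \sum_{\pi} a_\pi \pi \in Z(\C\gS_n)$, I would exploit centrality in the form $\tau X \tau^{-1} = X$ for all $\tau$. Expanding the left-hand side and reindexing the sum via the substitution $\pi \mapsto \tau^{-1}\pi\tau$ yields $\sum_\pi a_{\tau^{-1}\pi\tau}\,\pi = \sum_\pi a_\pi\,\pi$, and comparing coefficients of the basis element $\pi$ gives $a_{\tau^{-1}\pi\tau} = a_\pi$ for all $\tau,\pi$. Thus the coefficient function $\pi \mapsto a_\pi$ is constant on each conjugacy class, so $X = \sum_\kappa c_\kappa\, \cO_\kappa$, where $c_\kappa$ records the common coefficient value (times the class size) on the class of type $\kappa$. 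Combining the three steps shows $\{\cO_\kappa \mid \kappa \vdash n\}$ is a basis for $Z(\C\gS_n)$. The only place requiring genuine care is the reindexing in the spanning step: one must use that $\pi \mapsto \tau^{-1}\pi\tau$ is a bijection of $\gS_n$ to justify relabeling the summation index, and that $\cyc$ is conjugation-invariant to conclude class-constancy of the coefficients.
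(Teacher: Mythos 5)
Your proof is correct and complete: centrality of each $\cO_\kappa$ because conjugation permutes a conjugacy class, linear independence from disjoint supports in the permutation basis, and spanning by showing the coefficient function of any central element is constant on conjugacy classes. The paper itself omits the argument and simply cites \cite[Proposition 4.3.7]{GW09}; what you have written is exactly the standard proof that reference supplies, so there is nothing further to reconcile.
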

For a proof, see~\cite[Proposition 4.3.7]{GW09}. Since
$\cO_\kappa^\dag = \cO_\kappa$ for all $\kappa \vdash n$, it follows
that $\{ \cO_\kappa \mid \kappa \vdash n\}$ is also a basis for the real
vector space of self-adjoint elements of $Z(\C \gS_n)$.

\section{Efficient quantum estimators}\label{sec:efficiency}

The connection between observables and random variables presented in
\Cref{sec:quantum-probability} allows us to import notions from
classical statistics into the quantum setting. In this section, this
connection is used to define quantum estimators and introduce the notion
of statistical efficiency of a quantum estimator. These notions are used
to formulate a structure theorem for efficient quantum estimators in
situations where the statistic of interest is unitarily invariant.

As before, let $V$ be a finite-dimensional vector space over $\C$. Let
$S$ denote a set of quantum states on $V$ and let $f : S \to \R$ be a
statistic on $S$. The set $S$ serves to restrict an estimation problem
to a particular class of quantum states. $S$ will be gradually
restricted, as needed, from an arbitrary set of quantum states to a set
of multipartite quantum states of the form $\rho^\on$ or
$\rho^{\tensor m} \tensor \sigma^\on$, where $\rho$ and $\sigma$ are
quantum states on $\C^d$.

\begin{definition}
  An \emph{estimator} for $f$ is an observable $\cO \in \End(V)$ such
  that $\E_\varrho[\cO] = f(\varrho)$ for all $\varrho \in S$. An
  estimator $\cO$ is \emph{efficient} if
  $\Var_\varrho[\cO] \le \Var_\varrho[\cO']$ for all estimators
  $\cO' \in \End(V)$ for $f$.
\end{definition}

Henceforth, fix $V = (\C^d)^\on$ and let $S$ denote the set of states of
the form $\rho_1 \tensor \cdots \tensor \rho_n$, where
$\rho_1, \dots, \rho_n$ are quantum states on $\C^d$.

\begin{definition}
  A statistic $f : S \to \R$ is \emph{unitarily invariant} if
  $f \circ \Ad_U = f$ for all $U \in \U(d)$. An observable
  $\cO \in \End(V)$ is \emph{unitarily invariant} if $\Ad_U(\cO) = \cO$
  for all $U \in \U(d)$.
\end{definition}

Let $\Phi$ be the map on observables $\cO \in \End(V)$ defined by
\begin{align*}
  \Phi(\cO)
  &= \int\limits_{\U(d)} \Ad_U(\cO) \, dU,
\end{align*}
where $dU$ denotes Haar measure.  Note that $\Phi$ preserves self-adjointness and, hence, maps observables
to observables.

\begin{proposition}
  If $\cO$ is an estimator for a unitarily invariant statistic $f$, then
  $\Phi(\cO)$ is also an estimator for $f$, and
  $\Var_\varrho[\Phi(\cO)] \le \Var_\varrho[\cO]$ for all
  $\varrho \in S$.
\end{proposition}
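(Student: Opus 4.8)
The plan is to treat the two assertions separately, using throughout that $\Phi$ is the Haar average of the maps $\Ad_U$, each of which is a $*$-automorphism of $\End(V)$ (hence positive and unital) and carries $S$ into $S$; consequently $\Phi$ itself is positive, unital, and self-adjointness-preserving.

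\emph{That $\Phi(\cO)$ is an estimator.} First I would check that each individual $\Ad_U(\cO)$ is already an estimator for $f$. By cyclicity of the trace,
\[
  \E_\varrho[\Ad_U(\cO)] = \tr\bigl(\varrho\,(U^\on)\cO(U^\on)^\dag\bigr) = \tr\bigl(\Ad_{U^{-1}}(\varrho)\,\cO\bigr) = \E_{\Ad_{U^{-1}}(\varrho)}[\cO].
\]
Since $\Ad_{U^{-1}}(\varrho) \in S$ and $\cO$ estimates $f$, this equals $f(\Ad_{U^{-1}}(\varrho))$, which in turn equals $f(\varrho)$ by the unitary invariance of $f$. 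Pulling the trace inside the Haar integral (and using that Haar measure has total mass $1$) then gives $\E_\varrho[\Phi(\cO)] = f(\varrho)$ for all $\varrho \in S$, so $\Phi(\cO)$ is an estimator.

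\emph{The variance bound.} The tempting route is to apply \Cref{lem:positive-unital-variance} to the positive unital map $\Phi$, but this does not directly apply: one computes $\E_\varrho \circ \Phi = \E_{\Phi(\varrho)}$, and $\Phi(\varrho) \neq \varrho$ in general, so $\Phi$ is not mean-preserving at $\varrho$. Instead I would exploit that $\cO$ \emph{and} $\Phi(\cO)$ are both estimators for $f$. Put $N = \cO - \Phi(\cO)$; then $\tr(\varrho' N) = 0$ for every $\varrho' \in S$, hence $\tr(MN) = 0$ for every $M$ in the linear span of $S$. Because the two estimators share the mean $f(\varrho)$, the variance gap reduces to a second-moment gap, and expanding $\cO^2 - \Phi(\cO)^2 = \Phi(\cO)N + N\Phi(\cO) + N^2$ gives
\[
  \Var_\varrho[\cO] - \Var_\varrho[\Phi(\cO)] = \tr\bigl((\varrho\Phi(\cO) + \Phi(\cO)\varrho)N\bigr) + \tr(\varrho N^2).
\]
The cross term vanishes because $\varrho\Phi(\cO) + \Phi(\cO)\varrho$ lies in the span of $S$ and is therefore annihilated against $N$, while $\tr(\varrho N^2) \ge 0$ since $N = N^\dag$ and $\varrho \succeq 0$. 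This yields $\Var_\varrho[\Phi(\cO)] \le \Var_\varrho[\cO]$.

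I expect the variance step to be the crux. Both natural ``black-box'' approaches --- invoking \Cref{lem:positive-unital-variance}, or using convexity of the quadratic form $\cO \mapsto \Var_\varrho[\cO]$ together with Jensen's inequality --- stall for the same reason: $\Phi$ is mean-preserving only at unitarily invariant states, so they leave one comparing $\Var_\varrho[\cO]$ with $\Var_{\Phi(\varrho)}[\cO]$, a difference one cannot sign. The decomposition above sidesteps this by converting the estimator property of \emph{both} observables into Hilbert--Schmidt orthogonality of $N$ against everything $S$ can detect; the hypothesis that $S$ is the set of all product states, so that its span is all of $\End(V)$, is exactly what forces the cross term to disappear. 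I would also note that in this spanning regime that same orthogonality in fact forces $N = 0$, so the inequality is an equality here; the point of organizing the argument around the gap formula $\tr(\varrho N^2) \ge 0$ is that it is the robust statement, and a variant of this decomposition --- combined with symmetrizing over $\Sym{n}$ so that the averaged observable becomes central in $\C\gS_n$ --- is what will be needed once $S$ is cut down to states of the form $\rho^\on$, where $N$ need no longer vanish.
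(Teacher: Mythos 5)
Your proof is correct, and its second half takes a genuinely different route from the paper's. The first part (that $\Phi(\cO)$ is an estimator) is exactly the paper's computation: move $\Ad_U$ onto the state and use unitary invariance of $f$ together with the fact that $\Ad_{U^{-1}}$ preserves $S$. For the variance bound the paper simply cites \Cref{lem:positive-unital-variance}, and your objection to that step is well taken: the hypothesis of that lemma is the identity of functionals $\E_\varrho \circ \Phi = \E_\varrho$ on all of $\End(V)$, whereas here $\E_\varrho \circ \Phi = \E_{\Phi(\varrho)}$ and $\Phi(\varrho) \neq \varrho$ for a generic product state; the paper's displayed computation only verifies mean-preservation on estimators of $f$, while the lemma's proof needs it on $\cO^2$ as well. (Run literally, the Kadison--Schwarz route yields $\Var_\varrho[\Phi(\cO)] \le \int_{\U(d)} \Var_{\Ad_U(\varrho)}[\cO]\,dU$, an average over the unitary orbit of $\varrho$ rather than the claimed pointwise bound.) Your replacement argument --- writing $\cO = \Phi(\cO) + N$ with $N$ Hermitian and trace-orthogonal to $\mathrm{span}(S)$, so that the variance gap equals $\tr\bigl((\varrho\Phi(\cO)+\Phi(\cO)\varrho)N\bigr) + \tr(\varrho N^2)$ --- is sound, and you are right that for the $S$ in force at this point in the paper (all product states, whose span is all of $\End(V)$) it degenerates to $N = 0$: the proposition holds for the trivial reason that the estimator is unique. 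What your approach buys is this diagnosis plus a template that survives when $S$ is later restricted; what the paper's approach buys is brevity, at the cost of a step that is not justified as written. One caveat on your closing remark: for restricted $S$ such as $\{\rho^{\on}\}$, the membership of $\varrho\Phi(\cO)+\Phi(\cO)\varrho$ in $\mathrm{span}(S)$ is no longer automatic for an arbitrary $\Phi(\cO) \in \RepP(\C\gS_n)$, so the vanishing of the cross term really does require the additional $\Sym{n}$-symmetrization you allude to (which is what the paper's subsequent $\Gamma$-averaging and centrality arguments supply).
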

\begin{proof}
  The map $\Phi$ is positive and unital. Since $f$ is unitarily
  invariant,
  \begin{align*}
    \E_\varrho[\Phi(\cO)]
    &= \int\limits_{\U(d)} \tr(\Ad_{U^\dag}(\varrho) \cO) \, dU
      = \int\limits_{\U(d)} f(\Ad_{U^\dag}(\varrho)) \, dU
      = \int\limits_{\U(d)} f(\varrho) \, dU
      =  \E_\varrho[\cO].
  \end{align*}
  Hence, by \Cref{lem:positive-unital-variance},
  $\Var_\varrho[\Phi(\cO)] \le \Var_\varrho[\cO]$.
\end{proof}

The following result relates the image of the map $\Phi$ to the
symmetric group algebra $\C \gS_n$ and the representation $\RepP$.  It uses the Schur--Weyl duality theorem.  For a proof, see e.g.~\cite[Proposition 2.2]{CS09}.
\begin{proposition}\label{prop:projection-rep}
  The map $\Phi$ is a projection into $\RepP(\C \gS_n)$.
\end{proposition}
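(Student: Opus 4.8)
The plan is to establish two facts: that $\Phi$ is idempotent (so that it is a genuine projection, not merely a positive unital map as in the preceding proposition), and that its image is exactly $\RepP(\C\gS_n)$. Both follow by recognizing $\Phi$ as the averaging of the conjugation action $\Ad$ of $\U(d)$ against Haar measure, which is the standard device for projecting onto a space of invariants, and then invoking Schur--Weyl duality to identify those invariants.

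First I would record that $\Phi$ is a projection onto the subspace of $\Ad$-invariant observables. For any $\cO$, the averaged observable $\Phi(\cO)$ is itself $\Ad$-invariant: for each $V \in \U(d)$,
\[
  \Ad_V(\Phi(\cO)) = \int_{\U(d)} \Ad_V \Ad_U(\cO) \, dU = \int_{\U(d)} \Ad_{VU}(\cO) \, dU = \Phi(\cO),
\]
where $\Ad_V \Ad_U = \Ad_{VU}$ because $U \mapsto U^{\on}$ is a homomorphism, and the final equality is left-invariance of Haar measure. Conversely, if $\cO$ is already $\Ad$-invariant then $\Phi(\cO) = \int_{\U(d)} \cO \, dU = \cO$, since $\U(d)$ has total Haar mass~$1$. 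Together these give $\Phi \circ \Phi = \Phi$, so $\Phi$ is a projection whose image equals its fixed-point set, namely the space of $\Ad$-invariant observables.

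It then remains to identify this invariant space with $\RepP(\C\gS_n)$. By definition, $\cO$ is $\Ad$-invariant if and only if $(U^{\on})\,\cO = \cO\,(U^{\on})$ for all $U \in \U(d)$, i.e.\ if and only if $\cO$ lies in the commutant inside $\End(V)$ of the algebra generated by $\{U^{\on} : U \in \U(d)\}$. This is exactly where Schur--Weyl duality enters: in its double-commutant form it asserts that the algebra generated by the diagonal tensor-power unitaries and the algebra $\RepP(\C\gS_n)$ are each other's commutants in $\End((\C^d)^{\on})$. Hence the commutant of $\{U^{\on}\}$ equals $\RepP(\C\gS_n)$, which is therefore precisely the image of $\Phi$.

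The only genuinely nontrivial input is the commutant statement from Schur--Weyl duality; everything else is the routine ``averaging-is-projection-onto-invariants'' computation. Since the excerpt permits citing Schur--Weyl (and points to~\cite{CS09}), I would invoke it directly rather than reprove it, taking care to use the double-commutant version (commutant of the tensor-power unitaries $= \RepP(\C\gS_n)$) and not merely the weaker statement that the two algebras act on complementary isotypic blocks of the Schur--Weyl decomposition of~$V$.
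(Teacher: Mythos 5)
Your proof is correct, and it establishes the standard fact in the standard way: Haar averaging of the conjugation action is an idempotent whose image is the space of $\Ad$-invariant operators, and the double-commutant form of Schur--Weyl duality identifies that space with $\RepP(\C \gS_n)$. The paper does not prove this proposition itself but defers to \cite[Proposition 2.2]{CS09}, and your argument is essentially the one given there, so there is nothing to flag.
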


Thus, if $\cO$ is an efficient estimator for a unitarily invariant
statistic $f$, then $\Phi(\cO)$ is also an efficient estimator
for~$f$. Hence, the next corollary follows immediately from
\Cref{prop:projection-rep}.

\begin{corollary}\label{cor:symmetric-estimators}
  To find an efficient estimator for a unitarily invariant statistic
  $f : S \to \R$, it suffices to consider estimators of the form
  $\RepP(X)$ with $X \in \C \gS_n$.
\end{corollary}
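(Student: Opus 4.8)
The plan is to show that any efficient estimator can be \emph{symmetrized} into an equally good estimator lying in $\RepP(\C \gS_n)$, so that restricting the search to this form loses nothing. Concretely, I would start from an arbitrary efficient estimator $\cO$ for $f$, apply the averaging map $\Phi$, and then argue that $\Phi(\cO)$ is again efficient and already has the desired form. This reduces the entire statement to the two preceding propositions, so the corollary is obtained by chaining them together.

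First I would invoke the proposition stated just above \Cref{prop:projection-rep}, which guarantees that $\Phi(\cO)$ is still an estimator for $f$ and satisfies $\Var_\varrho[\Phi(\cO)] \le \Var_\varrho[\cO]$ for every $\varrho \in S$. Next, since $\cO$ is efficient and $\Phi(\cO)$ is itself an estimator for $f$, the definition of efficiency supplies the reverse inequality $\Var_\varrho[\cO] \le \Var_\varrho[\Phi(\cO)]$. Combining the two forces $\Var_\varrho[\Phi(\cO)] = \Var_\varrho[\cO]$ for all $\varrho \in S$, so $\Phi(\cO)$ is also efficient. Finally, \Cref{prop:projection-rep} says that $\Phi$ projects into $\RepP(\C \gS_n)$, so $\Phi(\cO) = \RepP(X)$ for some $X \in \C \gS_n$; replacing $X$ by $\tfrac12(X + X^\dag)$ if necessary (which leaves $\RepP(X)$ unchanged, since $\Phi(\cO)$ is self-adjoint and $\RepP$ is a $*$-representation), I may take $X = X^\dag$ so that $\RepP(X)$ is a genuine observable. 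This exhibits an efficient estimator of the claimed form.

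There is essentially no hard step here: the result is immediate once the symmetrization proposition and \Cref{prop:projection-rep} are in hand, which is exactly why it is phrased as a corollary. The only points requiring a modicum of care are bookkeeping. One must read the equality of variances as holding simultaneously for all $\varrho \in S$, matching the quantifier in the definition of efficiency, and one should confirm that $X$ can be chosen self-adjoint so that $\RepP(X)$ is an observable rather than a general operator. Implicit throughout is that the corollary presupposes the existence of some efficient estimator; the argument above shows that \emph{every} efficient estimator gives rise to an efficient one of the form $\RepP(X)$, so nothing efficient is excluded by the restriction to this form.
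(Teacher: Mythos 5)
Your proposal is correct and follows exactly the paper's route: symmetrize an efficient estimator $\cO$ via $\Phi$, use the preceding proposition together with the definition of efficiency to conclude $\Phi(\cO)$ is still efficient, and invoke \Cref{prop:projection-rep} to place it in $\RepP(\C \gS_n)$. The extra remark about replacing $X$ with $\tfrac12(X + X^\dag)$ is a harmless bookkeeping refinement the paper leaves implicit.
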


In light of \Cref{cor:symmetric-estimators}, we introduce the
following notation:

\begin{notation}
  Let $\E_\varrho$ be extended to a map on elements $X \in \C \gS_n$
  defined by $\E_\varrho[X] = \E_\varrho[\RepP(X)]$. Thus, $\E_\varrho$,
  $\Cov_\varrho$, and $\Var_\varrho$ are defined directly on elements of
  $\C \gS_n$ via the representation $\RepP$.
\end{notation}

If $\gamma = (i_1 \ i_2 \ \cdots \ i_\ell) \in \gS_n$, let $\tr_\gamma$
be defined by
$\tr_\gamma(\varrho) = \tr(\rho_{i_1} \rho_{i_2} \dotsm
\rho_{i_\ell})$. The following proposition establishes a formula for the
expectation $\E_\varrho[\pi]$ of a permutation $\pi \in \gS_n$ with
respect to a state $\varrho \in S$.  (Caution: $\pi$ is not in general
an observable.)

\begin{proposition}\label{prop:trace-formula}
  Let $\pi \in \gS_n$ be an arbitrary permutation. If
  $\pi = \gamma_1 \dots \gamma_k$ is a decomposition of~$\pi$ into
  disjoint cycles, including cycles of length~$1$, then
  \begin{align*}
    \E_\varrho[\pi^{-1}]
    &= \prod_{i=1}^k \tr_{\gamma_i}(\varrho).
  \end{align*}
\end{proposition}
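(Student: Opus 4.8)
The plan is to unwind both sides of the claimed identity through the definitions and then match them term by term. First I would recall that $\E_\varrho[\pi^{-1}] = \E_\varrho[\RepP(\pi^{-1})] = \tr(\varrho \, \RepP(\pi^{-1}))$ by the extended notation and the definition of $\E_\varrho$. Since $\varrho = \rho_1 \tensor \cdots \tensor \rho_n$ by our standing assumption on $S$, the whole computation reduces to evaluating $\tr\bigl((\rho_1 \tensor \cdots \tensor \rho_n)\,\RepP(\pi^{-1})\bigr)$, where $\RepP(\pi^{-1})$ permutes tensor factors according to the definition of $\RepP$. The key is that this trace factorizes along the disjoint cycles of~$\pi$, and each cyclic factor produces exactly the trace of a product of the corresponding $\rho_i$'s, which is the definition of $\tr_{\gamma_i}(\varrho)$.

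The main technical step is the single-cycle computation, so I would first prove the claim for $\pi = \gamma = (i_1 \ i_2 \ \cdots \ i_\ell)$ a single $\ell$-cycle (ignoring the fixed points, i.e.\ working on the $\ell$ tensor slots it acts on). Here I would expand the trace in the standard product basis $\ket{x_{i_1}} \tensor \cdots \tensor \ket{x_{i_\ell}}$ and use
\[
    \RepP(\gamma^{-1}) \, \ket{x_{i_1}} \tensor \cdots \tensor \ket{x_{i_\ell}}
    = \ket{x_{\gamma(i_1)}} \tensor \cdots \tensor \ket{x_{\gamma(i_\ell)}},
\]
which follows from the definition $\RepP(\pi)\ket{y_1}\tensor\cdots = \ket{y_{\pi^{-1}(1)}}\tensor\cdots$ applied to $\pi = \gamma^{-1}$. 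Writing the trace as $\sum \bra{x_{i_1}} \cdots \, \rho_{i_1}\tensor\cdots\tensor\rho_{i_\ell}\,\RepP(\gamma^{-1}) \, \ket{x_{i_1}}\cdots$ and pushing each $\rho_{i_j}$ against the shifted basis vector, the summation collapses into a single ``cyclic'' chain of matrix entries $\sum (\rho_{i_1})_{x_{i_1}, x_{i_2}} (\rho_{i_2})_{x_{i_2}, x_{i_3}} \cdots (\rho_{i_\ell})_{x_{i_\ell}, x_{i_1}}$, which is precisely $\tr(\rho_{i_1} \rho_{i_2} \cdots \rho_{i_\ell}) = \tr_\gamma(\varrho)$. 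Getting the index bookkeeping right, in particular tracking where the inverse lands so that the product comes out in the correct order $\rho_{i_1}\rho_{i_2}\cdots\rho_{i_\ell}$ rather than its reverse, is the part I expect to be fiddliest.

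Having settled one cycle, the general case follows by tensorization. Since $\pi = \gamma_1 \cdots \gamma_k$ acts on disjoint sets of tensor slots, $\RepP(\pi^{-1})$ factors as a tensor product of the single-cycle operators $\RepP(\gamma_i^{-1})$ acting on their respective slots (with the identity on any slot each $\gamma_i$ fixes, consistent with including length-$1$ cycles). Because $\varrho$ is itself a tensor product, the overall trace splits as $\prod_{i=1}^k$ of the per-cycle traces by multiplicativity of $\tr$ over tensor products; equivalently, this is an instance of the tensorization property $\E_{\varrho \tensor \varrho'}[\cO \tensor \cO'] = \E_\varrho[\cO]\cdot\E_{\varrho'}[\cO']$ recorded earlier. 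Applying the single-cycle result to each factor yields $\prod_{i=1}^k \tr_{\gamma_i}(\varrho)$, completing the proof.
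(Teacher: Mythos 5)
Your proposal is correct and follows essentially the same route as the paper's proof: reduce to a single cycle via the tensorization property, then expand $\tr(\varrho\,\RepP(\pi^{-1}))$ in the standard product basis so that the sum collapses to the cyclic chain of matrix entries giving $\tr(\rho_{i_1}\cdots\rho_{i_\ell})$. The paper merely runs the same chain of equalities in the opposite direction, starting from $\tr(\rho_1\cdots\rho_n)$ and ending at $\E_\varrho[\pi^{-1}]$, and your index bookkeeping (with $\RepP(\gamma^{-1})$ sending slot $j$ to $x_{\gamma(j)}$) comes out correctly.
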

\begin{proof}
  In light of the tensorization property of $\E_\varrho$ and the fact
  that $\varrho$ is an $n$-partite quantum state, the problem reduces
  immediately to the case when $\pi$ is an $n$-cycle. Without loss of
  generality, suppose $\pi = (1 \ 2 \ \dotso \ n)$. Thus,
  \begin{align*}
    \tr(\rho_1 \rho_2 \dotsm \rho_n)
    &= \sum_{v \in [d]^n} \bra{v_1} \rho_1 \ket{v_2}
      \dotsm \bra{v_n} \rho_n \ket{v_1} \\
    &= \sum_{v \in [d]^n} \bra{v_1} \rho_1 \ket{\pi(v)_1}
      \dotsm \bra{v_n} \rho_n \ket{\pi(v)_n} \\
    &= \sum_{v \in [d]^n} \bra{v} \varrho \ket{\pi(v)} \\
    &= \tr(\varrho \RepP(\pi^{-1})) \\
    &= \E_\varrho[\pi^{-1}]. \qedhere
  \end{align*}
\end{proof}

\begin{remark}  \label{rem:okay}
    In describing the cycle type of a permutation $\pi \in \Sym{n}$, it is common to omit mentioning $1$-cycles.  Conveniently, this would have no effect in \Cref{prop:trace-formula}, since $\tr(\rho_i) = 1$ anyway for all~$i$.
\end{remark}

\begin{definition}
  The group $\Gamma$ of \emph{permutation invariants} of the set of
  states $S$ is defined by
  \begin{align*}
    \Gamma
    &= \{\pi \in \gS_n \mid \forall \varrho \in S, \; \forall X \in \C
      \gS_n, \; \E_\varrho[\pi^{-1} X \pi] = \E_\varrho[X]\}.
  \end{align*}
  Note that the definition of $\Gamma$ depends on $S$. For all
  $X \in \C \gS_n$, let $X^\Gamma \in \C \gS_n$ be defined by
  \begin{align*}
    X^\Gamma
    &= \frac{1}{|\Gamma|} \sum_{\pi \in \Gamma} \pi^{-1} X \pi.
  \end{align*}
  Thus, $X^\Gamma \tau = \tau X^\Gamma$ for all $\tau \in \Gamma$ and
  $X \in \C \gS_n$.
\end{definition}

\begin{proposition}
  For all self-adjoint elements $\cO \in \C \gS_n$,
  $\Var_\varrho[\cO^\Gamma] \le \Var_\varrho[\cO]$.
\end{proposition}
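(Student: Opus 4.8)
The plan is to exhibit $\cO^\Gamma$ as the image of $\cO$ under a mean-preserving positive unital map and then run the argument of \Cref{lem:positive-unital-variance}. Concretely, I would define $\Phi : \End(V) \to \End(V)$ by
\[
  \Phi(A) = \frac{1}{|\Gamma|}\sum_{\pi \in \Gamma} \RepP(\pi)^\dag A\, \RepP(\pi).
\]
Since $\RepP$ is a unitary $*$-representation we have $\RepP(\pi^{-1}) = \RepP(\pi)^\dag$, so $\RepP(\cO^\Gamma) = \Phi(\RepP(\cO))$; thus $\Var_\varrho[\cO^\Gamma] = \Var_\varrho[\Phi(\RepP(\cO))]$ and it suffices to compare this with $\Var_\varrho[\RepP(\cO)]$.

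First I would record the structural properties of $\Phi$. Each summand $A \mapsto \RepP(\pi)^\dag A\, \RepP(\pi)$ is conjugation by the unitary $\RepP(\pi)$, hence positive (it sends $B^\dag B$ to $(B\RepP(\pi))^\dag(B\RepP(\pi))$) and unital; an average of positive unital maps is again positive and unital. In particular $\Phi$ preserves self-adjointness, so $\RepP(\cO)$ being self-adjoint (as $\cO = \cO^\dag$) guarantees $\Phi(\RepP(\cO))$ is a genuine observable.

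Next comes the mean-preservation step, which is where the definition of $\Gamma$ enters and which I expect to be the main obstacle. For any $X \in \C\gS_n$,
\[
  \E_\varrho[\Phi(\RepP(X))] = \frac{1}{|\Gamma|}\sum_{\pi \in \Gamma} \E_\varrho[\RepP(\pi^{-1} X \pi)] = \frac{1}{|\Gamma|}\sum_{\pi \in \Gamma} \E_\varrho[\pi^{-1} X \pi] = \E_\varrho[X] = \E_\varrho[\RepP(X)],
\]
applying the defining property of $\Gamma$ term by term. The subtlety to flag is that this identity holds only on the subalgebra $\RepP(\C\gS_n)$ and \emph{not} on all of $\End(V)$: for $\varrho = \rho_1 \tensor \cdots \tensor \rho_n$ the individual conjugations need not fix $\varrho$ itself, so one cannot invoke \Cref{lem:positive-unital-variance} verbatim (its hypothesis $\E_\varrho \circ \Phi = \E_\varrho$ would be false here). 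Fortunately the proof of that lemma only ever applies the hypothesis to the observable and to its square, and both $\RepP(\cO)$ and $\RepP(\cO)^2 = \RepP(\cO^2)$ lie in $\RepP(\C\gS_n)$ since $\cO^2 \in \C\gS_n$.

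Finally I would reproduce that short argument in place. Because $\E_\varrho[\Phi(\RepP(\cO))] = \E_\varrho[\RepP(\cO)]$, it suffices to show $\E_\varrho[\Phi(\RepP(\cO))^2] \le \E_\varrho[\RepP(\cO)^2]$. The Kadison--Schwarz inequality~\cite{Kad52} applied to the positive unital map $\Phi$ and the self-adjoint operator $\RepP(\cO)$ gives $\Phi(\RepP(\cO))^2 \preceq \Phi(\RepP(\cO)^2)$, whence $\E_\varrho[\Phi(\RepP(\cO))^2] \le \E_\varrho[\Phi(\RepP(\cO)^2)]$ by \Cref{fact:expectation-monotonicity}. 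Since $\RepP(\cO)^2 = \RepP(\cO^2)$ lies in $\RepP(\C\gS_n)$, the mean-preservation identity above yields $\E_\varrho[\Phi(\RepP(\cO)^2)] = \E_\varrho[\RepP(\cO)^2]$, completing the chain and proving $\Var_\varrho[\cO^\Gamma] \le \Var_\varrho[\cO]$. I expect the only delicate point to be this restriction of mean-preservation to the group algebra; the remainder is a direct transcription of the positive-unital-variance lemma.
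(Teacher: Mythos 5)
Your proof is correct and follows essentially the same route as the paper's: the paper simply observes that $\cO \mapsto \cO^\Gamma$ is a positive, unital, mean-preserving map and invokes \Cref{lem:positive-unital-variance}. The domain subtlety you flag --- that mean-preservation is only guaranteed on $\RepP(\C \gS_n)$, so the lemma's argument must be rerun on $\RepP(\cO)$ and $\RepP(\cO^2)$ rather than cited verbatim for all of $\End(V)$ --- is a legitimate refinement that the paper's one-line proof elides, but it does not change the substance of the argument.
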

\begin{proof}
  The map $\cO \mapsto \cO^\Gamma$ is positive and unital. Moreover,
  $\E_\varrho[\cO^\Gamma] = \E_\varrho[\cO]$ for all $\cO \in \C
  \gS_n$. Hence, by \Cref{lem:positive-unital-variance},
  $\Var_\varrho[\cO^\Gamma] \le \Var_\varrho[\cO]$.
\end{proof}

\begin{corollary}\label{cor:gamma-invariant}
  To find an efficient estimator for a unitarily invariant statistic
  $f : S \to \R$, it suffices to consider estimators of the form
  $\RepP(X)$ with $X \in \C \gS_n$ and $X \tau = \tau X$ for all
  $\tau \in \Gamma$.
\end{corollary}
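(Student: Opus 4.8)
The plan is to chain \Cref{cor:symmetric-estimators} together with the variance-nonincreasing $\Gamma$-averaging operator $X \mapsto X^\Gamma$, exactly mirroring the way \Cref{cor:symmetric-estimators} itself was deduced from \Cref{prop:projection-rep}. Corollary~\ref{cor:symmetric-estimators} already restricts attention to estimators of the form $\RepP(X)$ with $X \in \C\gS_n$; the only remaining work is to upgrade such an $X$ to one commuting with every $\tau \in \Gamma$, without increasing the variance and without destroying the estimator property.

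First I would take an efficient estimator $\cO$ for $f$; by \Cref{cor:symmetric-estimators} we may assume $\cO = \RepP(X)$ for some $X \in \C\gS_n$. Since an estimator is an observable, $\cO$ is self-adjoint, and because $\RepP(X^\dag) = \RepP(X)^\dag$, replacing $X$ by $\tfrac12(X + X^\dag)$ leaves $\RepP(X) = \cO$ unchanged; hence I may further assume $X = X^\dag$ in $\C\gS_n$. I would then pass to $X^\Gamma = \frac{1}{|\Gamma|}\sum_{\pi \in \Gamma} \pi^{-1} X \pi$ and verify three short points. (i) $X^\Gamma$ is self-adjoint: each summand satisfies $(\pi^{-1} X \pi)^\dag = \pi^{-1} X^\dag \pi = \pi^{-1} X \pi$, so $\RepP(X^\Gamma)$ is a legitimate observable. (ii) $X^\Gamma$ is still an estimator for $f$: by the definition of $\Gamma$ one has $\E_\varrho[\pi^{-1} X \pi] = \E_\varrho[X]$ for all $\pi \in \Gamma$, whence $\E_\varrho[X^\Gamma] = \E_\varrho[X] = f(\varrho)$ for every $\varrho \in S$. (iii) $X^\Gamma$ commutes with $\Gamma$, i.e. $X^\Gamma \tau = \tau X^\Gamma$ for all $\tau \in \Gamma$, which is precisely the commutation property recorded when $X^\Gamma$ was defined.

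Finally, the proposition immediately preceding this corollary gives $\Var_\varrho[X^\Gamma] \le \Var_\varrho[X] = \Var_\varrho[\cO]$ for every $\varrho \in S$. Since $\cO$ was assumed efficient, the reverse inequality $\Var_\varrho[\cO] \le \Var_\varrho[X^\Gamma]$ also holds, so in fact $\Var_\varrho[X^\Gamma] = \Var_\varrho[\cO]$ and $\RepP(X^\Gamma)$ is itself an efficient estimator of the desired $\Gamma$-commuting form. Thus every efficient estimator is matched by one of the restricted form, which is exactly the assertion of the corollary.

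There is no genuine obstacle here, as the statement is a formal consequence of results already in hand; the only points demanding a moment's care are the self-adjointness reduction on $X$ (needed so that the preceding proposition, stated only for self-adjoint elements of $\C\gS_n$, applies) and the observation that $\Gamma$-invariance of the expectation is built directly into the definition of $\Gamma$, so that the estimator property survives the averaging $X \mapsto X^\Gamma$.
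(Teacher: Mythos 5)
Your proof is correct and follows essentially the same route as the paper: the corollary is an immediate consequence of Corollary~\ref{cor:symmetric-estimators} together with the proposition that $\Gamma$-averaging is mean-preserving and variance-nonincreasing. Your extra care in replacing $X$ by $\tfrac12(X+X^\dag)$ (so that the preceding proposition, stated for self-adjoint elements of $\C\gS_n$, genuinely applies even though $\RepP$ need not be injective) is a small refinement the paper leaves implicit, but it does not change the argument.
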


The group $\Gamma$ acts on $\gS_n$ by conjugation, viz.~$\tau \in \Gamma$ acts on $\gS_n$ by $\pi \mapsto \tau^{-1} \pi
\tau$. This action partitions the group $\gS_n$ into disjoint orbits:
$\gS_n = O_1 \cup \dotsb \cup O_\ell$, where two permutations $\pi_1$
and~$\pi_2$ belong to the same orbit $O_i$ for $i \in [\ell]$ if and
only if there exists $\tau \in \Gamma$ such that
$\tau^{-1} \pi_1 \tau = \pi_2$. It is easy to see that an element
$X \in \C \gS_{m+n}$ commutes with all elements of $\Gamma$ if and only
if $X$ is constant on the orbits $O_1, \dotsc, O_\ell$ defined by
$\Gamma$. Let $\phi_i \in \C \gS_{m+n}$ denote the indicator function of
the orbit $O_i$ for $i \in [\ell]$. Thus, the set
$\braces*{\phi_1, \dotsc, \phi_\ell}$ forms a basis for the elements
$X \in \C \gS_n$ that are constant on the orbits $O_1, \dotsc,
O_\ell$. Therefore, by \Cref{cor:gamma-invariant}, it holds that:

\begin{proposition}\label{prop:orbit-estimators}
  To find an efficient estimator for a unitarily invariant statistic
  $f : S \to \R$, it suffices to consider estimators of the form
  $\RepP(X)$ with $X = a_1 \phi_1 + \dotsb + a_\ell \phi_\ell$, where
  $a_1, \dots, a_\ell \in \C$.
\end{proposition}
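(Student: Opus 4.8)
The plan is to combine \Cref{cor:gamma-invariant} with the orbit analysis sketched in the preceding paragraph; almost all of the substantive work has already been done, so what remains is to exhibit an explicit basis for the relevant invariant subspace. By \Cref{cor:gamma-invariant}, it suffices to restrict attention to estimators $\RepP(X)$ where $X \in \C\gS_n$ satisfies $X\tau = \tau X$ for all $\tau \in \Gamma$, equivalently $\tau^{-1} X \tau = X$ for all $\tau \in \Gamma$. Thus the entire task reduces to describing the subspace of $\C\gS_n$ fixed by the conjugation action of $\Gamma$, and showing that it is spanned by the orbit indicators $\phi_1, \dots, \phi_\ell$.

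The key step is the elementary structural fact asserted just before the statement. Writing $X = \sum_{\pi \in \gS_n} a_\pi \pi$, conjugation by $\tau \in \Gamma$ merely permutes the standard basis of $\C\gS_n$ via $\pi \mapsto \tau^{-1}\pi\tau$, so that $\tau^{-1} X \tau = \sum_{\pi} a_{\tau \pi \tau^{-1}}\, \pi$ after reindexing. Hence $\tau^{-1} X \tau = X$ for every $\tau \in \Gamma$ holds precisely when the coefficient function $\pi \mapsto a_\pi$ is constant along each $\Gamma$-conjugation orbit $O_i$. I would verify this equivalence with the one-line reindexing computation above, and then observe that the characteristic functions $\phi_1, \dots, \phi_\ell$ of the orbits $O_1, \dots, O_\ell$ form a basis for exactly the space of functions that are constant on orbits. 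Consequently every $\Gamma$-central $X$ may be written uniquely as $X = a_1\phi_1 + \dots + a_\ell \phi_\ell$, and the proposition follows.

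There is no genuine obstacle at this point: the real content lies in the earlier variance-monotonicity reductions (the positive, unital, mean-preserving map $\cO \mapsto \cO^\Gamma$ underlying \Cref{cor:gamma-invariant}), and here one is only making the invariant subspace concrete. The single point worth a moment's care is the direction of the conjugation bookkeeping, i.e.\ whether to reindex by $\tau\pi\tau^{-1}$ or $\tau^{-1}\pi\tau$; but since $\tau$ ranges over the whole group $\Gamma$, the orbits under $\pi \mapsto \tau^{-1}\pi\tau$ coincide with those under $\pi \mapsto \tau\pi\tau^{-1}$, so this ambiguity does not affect the conclusion.
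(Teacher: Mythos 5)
Your proposal is correct and follows exactly the paper's route: the paper states this proposition as an immediate consequence of \Cref{cor:gamma-invariant} together with the observation (made inline in the preceding paragraph) that an element of $\C\gS_n$ commutes with all of $\Gamma$ if and only if its coefficient function is constant on the conjugation orbits, for which the $\phi_i$ form a basis. Your reindexing computation just makes that ``easy to see'' step explicit, so there is nothing to add.
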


\subsubsection{Case: $\varrho = \rho^\on$}

Let $S$ denote the set of states of the form $\varrho = \rho^\on$, where
$\rho$ is a quantum state on $\C^d$. Let $\alpha \in \R^d$ denote the
spectrum of $\rho$ (taken in some arbitrary order).

When $\varrho$ is a state of the form $\rho^\on$, the expectation
$\E_\varrho[\pi]$ of $\pi \in \gS_n$ has a particularly simple formula:

\begin{proposition}\label{prop:trace-power-sum}
  For all $\pi \in \gS_n$ with $\cyc(\pi) = \kappa$,
  $\E_\varrho[\pi] = p_\kappa(\alpha)$.
\end{proposition}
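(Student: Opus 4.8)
The plan is to derive this as an immediate specialization of the general trace formula already established in \Cref{prop:trace-formula}. That proposition expresses $\E_\varrho[\pi^{-1}]$ as a product of cycle-traces $\tr_{\gamma_i}(\varrho)$ over the disjoint cycles of $\pi$; here the only extra structure to exploit is that $\varrho = \rho^{\on}$ is a tensor power of a \emph{single} state, so that every factor $\rho_{i_j}$ appearing in a cycle-trace is the same operator $\rho$.

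First I would record the elementary identity $\tr(\rho^{\ell}) = p_{\ell}(\alpha)$, where $\alpha$ is the spectrum of $\rho$: writing $\rho = \sum_{j=1}^d \alpha_j \ket{v_j}\bra{v_j}$ in its spectral decomposition gives $\rho^\ell = \sum_{j} \alpha_j^\ell \ket{v_j}\bra{v_j}$, whence $\tr(\rho^\ell) = \sum_{j=1}^d \alpha_j^\ell$, the $\ell$-th power sum. Consequently, for any $\ell$-cycle $\gamma$ one has $\tr_\gamma(\rho^{\on}) = \tr(\rho^{\ell}) = p_\ell(\alpha)$, since all of the states occurring in the cycle-trace coincide with $\rho$.

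Next I would assemble the factors. I would apply \Cref{prop:trace-formula} to the permutation $\pi^{-1}$ (which has the same cycle type $\kappa$ as $\pi$): if $\pi^{-1} = \gamma_1 \dotsm \gamma_k$ is its disjoint cycle decomposition, including $1$-cycles, then $\E_\varrho[\pi] = \prod_{i=1}^k \tr_{\gamma_i}(\rho^{\on})$. The lengths of the $\gamma_i$ are exactly the parts $\kappa_1, \dotsc, \kappa_k$ of $\kappa$, so by the previous step each factor equals $p_{\kappa_i}(\alpha)$, and the product is $\prod_{i=1}^k p_{\kappa_i}(\alpha) = p_\kappa(\alpha)$ by the very definition of the power-sum symmetric polynomial indexed by a partition.

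There is essentially no obstacle: all the content lives in \Cref{prop:trace-formula}, and the present proposition is just its restriction to the i.i.d.\ case $\varrho = \rho^{\on}$. The only points requiring a moment's care are bookkeeping ones --- using $\pi^{-1}$ rather than $\pi$ so that the trace formula outputs $\E_\varrho[\pi]$, and noting that inversion preserves cycle type, so the parts of $\kappa$ are unaffected --- together with the harmless convention (cf.\ \Cref{rem:okay}) that $1$-cycles contribute factors $\tr(\rho) = 1 = p_1(\alpha)$ and so may be freely included or omitted.
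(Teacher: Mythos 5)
Your proposal is correct and follows essentially the same route as the paper: specialize \Cref{prop:trace-formula} to $\varrho = \rho^{\on}$, identify each cycle-trace with $\tr(\rho^{\kappa_i}) = p_{\kappa_i}(\alpha)$, and multiply. Your explicit handling of the $\pi$ versus $\pi^{-1}$ bookkeeping (noting that inversion preserves cycle type) is a point the paper's one-line proof glosses over, but the argument is the same.
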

\begin{proof}
  Let $\ell$ denote the number of disjoint cycles in the decomposition
  of $\pi$. By \Cref{prop:trace-formula},
  \begin{align*}
    \E_\varrho[\pi]
    &= \tr(\rho^{\kappa_1}) \dotsm \tr(\rho^{\kappa_\ell})
      = p_{\kappa_1}(\alpha) \dotsm p_{\kappa_\ell}(\alpha)
      = p_\kappa(\alpha). \qedhere
  \end{align*}
\end{proof}

Thus, $\E_\varrho[\pi]$ depends only on the cycle type of $\pi$. Since
the cycle types of $\pi_1 \pi_2$ and $\pi_2 \pi_1$ are equal for all
$\pi_1, \pi_2 \in \gS_n$, the following result holds:

\begin{proposition}\label{prop:expectation-commutative}
  For all $X, Y \in \C \gS_n$, $\E_\varrho[XY] = \E_\varrho[YX]$.
\end{proposition}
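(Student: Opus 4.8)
The plan is to reduce the claimed identity to the case of single permutations using linearity, and then to invoke the fact that $\E_\varrho[\pi]$ depends only on the cycle type of $\pi$ (\Cref{prop:trace-power-sum}).

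First I would record that $\E_\varrho[\placeholder]$ is linear on $\C\gS_n$: by definition $\E_\varrho[X] = \tr(\varrho\,\RepP(X))$, and both $\RepP$ and $\tr$ are linear. Writing $X = \sum_{\pi} a_\pi \pi$ and $Y = \sum_\tau b_\tau \tau$ in the permutation basis, bilinearity of the product in $\C\gS_n$ gives $XY = \sum_{\pi,\tau} a_\pi b_\tau\,(\pi\tau)$ and $YX = \sum_{\pi,\tau} a_\pi b_\tau\,(\tau\pi)$. Hence it suffices to prove $\E_\varrho[\pi\tau] = \E_\varrho[\tau\pi]$ for every pair of permutations $\pi,\tau \in \gS_n$.

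Next I would note that $\pi\tau$ and $\tau\pi$ are conjugate in $\gS_n$, since $\tau\pi = \pi^{-1}(\pi\tau)\pi$. Conjugate permutations have the same cycle type, so $\cyc(\pi\tau) = \cyc(\tau\pi) =: \kappa$. Applying \Cref{prop:trace-power-sum} twice then yields $\E_\varrho[\pi\tau] = p_\kappa(\alpha) = \E_\varrho[\tau\pi]$, completing the reduction and the proof.

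There is essentially no hard step here: the content is carried entirely by \Cref{prop:trace-power-sum}, which makes $\E_\varrho[\placeholder]$ a class function on $\gS_n$, together with the elementary observation that $\pi\tau$ and $\tau\pi$ lie in the same conjugacy class. The only point needing any care is to confirm that linearity (not merely additivity) of $\E_\varrho$ permits passing from the permutation basis to arbitrary $X,Y$, which is immediate from $\E_\varrho[X] = \tr(\varrho\,\RepP(X))$.
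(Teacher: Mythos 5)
Your proof is correct and follows essentially the same route as the paper's: reduce by bilinearity to single permutations, observe that $\pi\tau$ and $\tau\pi$ share a cycle type (you supply the conjugation argument the paper leaves implicit), and conclude via \Cref{prop:trace-power-sum}.
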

\begin{proof}
  For all $\pi_1, \pi_2 \in \gS_n$,
  $\cyc(\pi_1 \pi_2) = \cyc(\pi_2 \pi_1)$. Hence, by
  \Cref{prop:trace-power-sum},
  $\E_\varrho[\pi_1 \pi_2] = \E_\varrho[\pi_2 \pi_1] = p_\kappa(\alpha)$,
  where $\kappa = \cyc(\pi_1 \pi_2)$. It follows by linearity that
  $\E_\varrho[XY] = \E_\varrho[YX]$ for all $X, Y \in \C \gS_n$.
\end{proof}

Thus, we obtain the following strengthening
of~\Cref{cor:symmetric-estimators}:

\begin{proposition}\label{prop:estimator-sufficiency-center}
  To find an efficient estimator for a unitarily invariant statistic
  $f : S \to \R$, it suffices to consider estimators of the form
  $\RepP(X)$ with $X \in Z(\C \gS_n)$.
\end{proposition}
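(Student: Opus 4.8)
The plan is to obtain this as an immediate specialization of \Cref{cor:gamma-invariant}, by showing that in the present case $\varrho = \rho^\on$ the group $\Gamma$ of permutation invariants of $S$ is \emph{all} of $\gS_n$. Recall that \Cref{cor:gamma-invariant} already tells us an efficient estimator may be taken of the form $\RepP(X)$ with $X\tau = \tau X$ for every $\tau \in \Gamma$, so once we identify $\Gamma = \gS_n$ the set of admissible $X$ becomes exactly the commutant of $\gS_n$ inside $\C\gS_n$, which is $Z(\C\gS_n)$.

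First I would verify the containment $\gS_n \subseteq \Gamma$. Fix $\pi \in \gS_n$ and an arbitrary $X \in \C\gS_n$. Applying \Cref{prop:expectation-commutative} with the factorization $A = \pi^{-1}$ and $B = X\pi$ gives $\E_\varrho[\pi^{-1} X \pi] = \E_\varrho[AB] = \E_\varrho[BA] = \E_\varrho[X \pi \pi^{-1}] = \E_\varrho[X]$, and this holds for every $\varrho \in S$. Thus $\pi$ meets the defining condition of $\Gamma$, so $\Gamma = \gS_n$. (This is precisely the point where the special structure $\varrho = \rho^\on$ is used: it is what makes $\E_\varrho$ tracial, via \Cref{prop:trace-power-sum,prop:expectation-commutative}.)

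Second I would translate ``$X$ commutes with all $\tau \in \Gamma = \gS_n$'' into membership in the center. Since the permutations $\{\pi : \pi \in \gS_n\}$ form a linear basis of $\C\gS_n$ and the commutator $[X, \,\cdot\,]$ is linear in its second argument, $X\tau = \tau X$ for all $\tau \in \gS_n$ is equivalent to $XY = YX$ for all $Y \in \C\gS_n$, i.e.\ $X \in Z(\C\gS_n)$. Combining this with \Cref{cor:gamma-invariant} yields the claim: to find an efficient estimator it suffices to consider $\RepP(X)$ with $X \in Z(\C\gS_n)$, and by \Cref{prop:real-basis-center} such $X$ may in turn be written as real combinations of the class averages $\cO_\kappa$.

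I do not expect a genuine obstacle here; the substantive work has already been done in establishing the tracial identity \Cref{prop:expectation-commutative}, which I am free to assume. The only points requiring a little care are getting the conjugation identity in the first step in the right order (so that the two ``extra'' factors $\pi$ and $\pi^{-1}$ cancel under the cyclic property) and noting that commuting with a spanning set is the same as commuting with the whole algebra. As an alternative, self-contained route that avoids explicitly invoking $\Gamma$, one could instead average $X$ directly over conjugation by the full group, setting $\overline{X} = \tfrac{1}{n!}\sum_{\tau \in \gS_n} \tau^{-1} X \tau \in Z(\C\gS_n)$; this map is positive, unital, and (again by \Cref{prop:expectation-commutative}) mean-preserving, so \Cref{lem:positive-unital-variance} gives $\Var_\varrho[\RepP(\overline{X})] \le \Var_\varrho[\RepP(X)]$, showing that projecting any efficient estimator onto the center preserves efficiency. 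Either way the conclusion is the same.
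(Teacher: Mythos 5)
Your proof is correct and follows essentially the same route as the paper: the paper likewise deduces $\Gamma = \gS_n$ from \Cref{prop:expectation-commutative} and then invokes \Cref{cor:gamma-invariant}, with your write-up merely spelling out the cyclic-trace cancellation and the (standard) observation that commuting with the spanning set $\gS_n$ is the same as lying in $Z(\C\gS_n)$. No gaps.
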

\begin{proof}
  By \Cref{prop:expectation-commutative}, $\Gamma =
  \gS_n$. The statement follows immediately from
  \Cref{cor:gamma-invariant}.
\end{proof}

The expectation $\E_\varrho[X]$ of an estimator $X \in Z(\C \gS_n)$ can
be expressed as a linear combination of $p_\kappa(\alpha)$ with
$\kappa \vdash n$ where, recall, $\alpha$ is the spectrum of
$\rho$. By~\Cref{prop:real-basis-center}, the elements
$\cO_\kappa \in \C \gS_n$ with $\kappa \vdash n$ form real a basis for
the real vector space of self-adjoint elements of $Z(\C \gS_n)$. Hence,
an estimator $X \in Z(\C \gS_n)$ can be expressed uniquely as a linear
combination of the form
\begin{align*}
  X
  &= \sum_{\kappa \vdash n} a_\kappa \cO_\kappa,
\end{align*}
where $a_\kappa \in \R$ for all $\kappa \vdash n$. Thus, by
\Cref{prop:trace-power-sum},
\begin{align*}
  \E_\varrho[X]
  &= \sum_{\kappa \vdash n} a_\kappa \E_\varrho[\cO_\kappa]
    = \sum_{\kappa \vdash n} a_\kappa p_\kappa(\alpha).
\end{align*}

Moreover, an estimator $X \in Z(\C \gS_n)$ is unique, as the following
result shows.

\begin{proposition}\label{prop:estimator-uniqueness}
  If $X_1, X_2 \in Z(\C \gS_n)$ are estimators for $f : S \to \R$, then
  $X_1 = X_2$.
\end{proposition}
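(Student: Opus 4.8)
The plan is to reduce the equality $X_1 = X_2$ to a statement about linear independence of power‑sum symmetric polynomials. As recorded just before the proposition, any estimator lying in $Z(\C\gS_n)$ is self‑adjoint and hence expands with \emph{real} coefficients in the basis $\{\cO_\kappa : \kappa \vdash n\}$; so I would write $X_1 = \sum_{\kappa \vdash n} a_\kappa \cO_\kappa$ and $X_2 = \sum_{\kappa \vdash n} b_\kappa \cO_\kappa$ with $a_\kappa, b_\kappa \in \R$, and set $c_\kappa = a_\kappa - b_\kappa$. Since $X_1$ and $X_2$ estimate the same $f$, for every $\varrho = \rho^\on \in S$ we have $\E_\varrho[X_1] = \E_\varrho[X_2]$, and \Cref{prop:trace-power-sum} turns this into
\[
    \sum_{\kappa \vdash n} c_\kappa\, p_\kappa(\alpha) = 0,
\]
where $\alpha$ is the spectrum of $\rho$. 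As $\rho$ ranges over all $d$‑dimensional states, $\alpha$ ranges over the whole probability simplex $\Delta = \{\alpha \in \R^d : \alpha_i \ge 0,\ \sum_i \alpha_i = 1\}$, so the goal becomes showing that the functions $\{p_\kappa : \kappa \vdash n\}$ are linearly independent on $\Delta$, whence every $c_\kappa = 0$ and $X_1 = X_2$ as elements of $Z(\C\gS_n)$.

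Next I would remove the restriction to the simplex. Every $p_\kappa$ with $\kappa \vdash n$ is homogeneous of degree $n$, hence so is $P := \sum_\kappa c_\kappa p_\kappa$. Because $\Delta$ is full‑dimensional inside the affine hyperplane $H = \{\sum_i \alpha_i = 1\}$, a polynomial vanishing on $\Delta$ vanishes on all of $H$; and for any $\beta$ with $s := \sum_i \beta_i \ne 0$ we have $\beta/s \in H$, so homogeneity gives $P(\beta) = s^n P(\beta/s) = 0$. Since $\{s \ne 0\}$ is dense in $\R^d$, this forces $P \equiv 0$ identically. It then remains to invoke the classical fact that in characteristic zero the power sums $\{p_\kappa : \kappa \vdash n\}$ are a basis of the degree‑$n$ homogeneous symmetric polynomials, so $P \equiv 0$ yields $c_\kappa = 0$ for all $\kappa$.

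The hard part is exactly this final linear‑independence step, and it is where the ambient dimension genuinely matters. One might worry that restricting to $\Delta$ (where $p_1 \equiv 1$) introduces relations, but the homogenization above sidesteps that concern; the real constraint is that $\{p_\kappa : \kappa \vdash n\}$ are independent as polynomials only when there are at least as many variables as the degree, i.e.\ when $d \ge n$. For $d < n$ there are honest relations --- for instance $2p_3 - 3p_2 p_1 + p_1^3 \equiv 0$ in two variables, so that $2\cO_{(3)} - 3\cO_{(2,1)} + \cO_{(1,1,1)}$ is a nonzero self‑adjoint central element annihilated by $\RepP$ --- and then distinct $X_1, X_2$ can estimate the same $f$. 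Accordingly I would carry out the argument in the regime $d \ge n$ (where $\RepP$ is faithful on $\C\gS_n$), and I would flag this hypothesis as essential: it is the single place where the proof, and the proposition itself, can break down.
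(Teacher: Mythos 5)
Your proof is correct and follows essentially the same route as the paper's: expand $X_1 - X_2$ in the basis $\{\cO_\kappa : \kappa \vdash n\}$, use \Cref{prop:trace-power-sum} to reduce to the vanishing of $h = \sum_\kappa c_\kappa p_\kappa$ on the probability simplex, exploit homogeneity of degree~$n$ to upgrade this to $h \equiv 0$ as a polynomial, and conclude by linear independence of the power sums. (The paper scales within $\R_+^d$ rather than passing through the hyperplane $\{\sum_i \alpha_i = 1\}$, an immaterial difference.)

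Where you genuinely add value is the final caveat: the paper's proof ends with ``Therefore, $a_\kappa = b_\kappa$ for all $\kappa$,'' silently invoking the linear independence of $\{p_\kappa : \kappa \vdash n\}$ in $d$ variables, which holds only when $d \geq n$. Your counterexample is sound: for $d = 2$, $n = 3$, the Newton/Cayley--Hamilton relation $2p_3 - 3p_2p_1 + p_1^3 \equiv 0$ in two variables makes $Y = 2\cO_{(3)} - 3\cO_{(2,1)} + \cO_{(1,1,1)}$ a nonzero self-adjoint central element with $\RepP(Y) = 0$ (its expectation vanishes against all product states, which span $\End((\C^2)^{\otimes 3})$), so $X_1$ and $X_1 + Y$ are distinct central estimators for the same statistic, refuting the proposition as literally stated in the group algebra. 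Worth noting: the \emph{operational} conclusions the paper draws (\Cref{cor:efficient-estimator} and its uses) survive for $d < n$, since the same argument run with Schur polynomials --- $\E_{\rho^\on}[Z] = \sum_{\ell(\lambda) \leq d} \hat{Z}(\lambda)\dim(\Specht{\lambda}) s_\lambda(\alpha)$, and the $s_\lambda$ with $\ell(\lambda) \leq d$ \emph{are} independent in $d$ variables --- shows that the \emph{operator} $\RepP(X)$ is still unique even when $X$ itself is not; the uniqueness should be asserted modulo $\ker\RepP$.
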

\begin{proof}
  Suppose $X_1 = \sum a_\kappa \cO_\kappa$ and
  $X_2 = \sum b_\kappa \cO_\kappa$. Since $X_1$ and $X_2$ are estimators
  for $f : S \to \R$, it follows that
  \begin{align*}
    \sum_{\kappa \vdash n} a_\kappa p_\kappa(\alpha)
    &= \E_\varrho[X_1]
      = \E_\varrho[X_2]
      = \sum_{\kappa \vdash n} b_\kappa p_\kappa(\alpha).
  \end{align*}
  Thus, if $h(\alpha)$ is defined by
  \begin{align*}
    h(\alpha)
    &= \sum_{\kappa \vdash n} (a_\kappa - b_\kappa) p_\kappa(\alpha),
  \end{align*}
  then $h(\alpha) = 0$ for all $\alpha \in \R_+^d$ with
  $\norm{\alpha}_1 = 1$. Note that $h$ is a homogeneous polynomial of
  degree~$n$ in~$\alpha$. Hence, if $x \in \R_+^d$ with
  $\norm{x}_1 > 0$, then
  \begin{align*}
    h(x)
    &= h \left( \norm{x}_1 \cdot \frac{x}{\norm{x}_1} \right)
      = \norm{x}_1^n \cdot h \left( \frac{x}{\norm{x}_1} \right)
      = 0.
  \end{align*}
  Thus, $h(x) = 0$ for all $x \in \R_+^d$. Since $h$ is a polynomial, it
  follows that $h \equiv 0$. Therefore, $a_\kappa = b_\kappa$ for all
  $\kappa \vdash n$, so $X_1 = X_2$.
\end{proof}

Therefore, all observables in the center of $\C \gS_n$ are efficient
estimators:

\begin{corollary}\label{cor:efficient-estimator}
  If $X \in Z(\C \gS_n)$ is an estimator for $f : S \to \R$, then $X$ is
  efficient.
\end{corollary}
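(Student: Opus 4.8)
The plan is to obtain this corollary by playing the existence result \Cref{prop:estimator-sufficiency-center} against the uniqueness result \Cref{prop:estimator-uniqueness}. The former guarantees that \emph{some} efficient estimator can be realized as $\RepP(X_0)$ with $X_0 \in Z(\C\gS_n)$; the latter says that a center estimator is completely determined by the statistic it computes. Together these pin the efficient estimator down as the unique center estimator, so any $X \in Z(\C\gS_n)$ that happens to be an estimator for $f$ must coincide with it and hence inherit its efficiency.

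First I would note that the hypothesis itself forces $f$ to be unitarily invariant on $S$, which is exactly what is needed to apply \Cref{prop:estimator-sufficiency-center}. Writing $X = \sum_{\kappa \vdash n} a_\kappa \cO_\kappa$ and using \Cref{prop:trace-power-sum}, one has $f(\varrho) = \E_\varrho[X] = \sum_{\kappa \vdash n} a_\kappa \, p_\kappa(\alpha)$, which depends only on the spectrum $\alpha$ of $\rho$. Since $\Ad_U(\varrho) = (U \rho U^\dag)^\on$ again lies in $S$ and has the same spectrum as $\varrho$, applying the estimator identity to $\Ad_U(\varrho)$ gives $f(\Ad_U(\varrho)) = \E_{\Ad_U(\varrho)}[X] = \sum_{\kappa} a_\kappa \, p_\kappa(\alpha) = f(\varrho)$, so $f \circ \Ad_U = f$. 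With unitary invariance in hand, \Cref{prop:estimator-sufficiency-center} supplies an efficient estimator $\RepP(X_0)$ with $X_0 \in Z(\C\gS_n)$. Both $X$ and $X_0$ are estimators for $f$ lying in the center, so \Cref{prop:estimator-uniqueness} yields $X = X_0$; therefore $\RepP(X) = \RepP(X_0)$ is efficient, as claimed.

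I do not expect a genuine obstacle, since the substantive work is already carried out in the two cited propositions and the argument reduces to a short deduction. The only point demanding care is the implicit invariance hypothesis: the efficiency comparison underlying \Cref{prop:estimator-sufficiency-center} is stated only for unitarily invariant statistics, so rather than assume invariance I must first certify it from the mere existence of a center estimator, as above. Everything after that is immediate.
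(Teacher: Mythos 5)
Your proof is correct and takes essentially the same route as the paper: the paper's entire proof of this corollary is the single sentence that it follows from \Cref{prop:estimator-sufficiency-center} and \Cref{prop:estimator-uniqueness}, exactly the two results you play against each other. Your extra step certifying that $f$ must be unitarily invariant (deduced from the existence of a center estimator via \Cref{prop:trace-power-sum}) is a careful touch that the paper leaves implicit, but it does not change the argument.
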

\begin{proof}
  The result follows from
  \Cref{prop:estimator-sufficiency-center} and
  \Cref{prop:estimator-uniqueness}.
\end{proof}

\begin{example}\label{ex:power-sum-estimator}
  By \Cref{cor:efficient-estimator}, $\cO_\kappa$ is an efficient
  estimator for $f(\varrho) = p_\kappa(\alpha)$. In particular, suppose
  $\kappa = (k, 1, 1, \dotsc)$, which we will denote simply as~$(k)$
  (recalling \Cref{rem:okay}).  Then $\cO_{(k)}$ is an efficient
  estimator of $f(\varrho) = \tr(\rho^k)$.
\end{example}

\subsubsection{Case: $\varrho = \rho^{\tensor m} \tensor \sigma^{\tensor n}$}

Let $S$ denote the set of states of the form
$\varrho = \rho^{\tensor m} \tensor \sigma^\on$, where $\rho$ and
$\sigma$ are quantum states on $\C^d$. Let $\alpha \in \R^d$ and
$\beta \in \R^d$ denote the spectra of $\rho$ and $\sigma$,
respectively. The group $\Gamma$ of permutation invariants of $S$ can be
described as follows:

\begin{proposition}
  $\Gamma \cong \gS_m \times \gS_n$, where $(\pi_1, \pi_2) \in \Gamma$
  embeds in $\gS_{m+n}$ in the natural way, viz.~by applying $\pi_1$ to
  $\braces*{1, \dotsc, m}$ and applying $\pi_2$ to $\braces*{m, \dotsc, m + n}$.
\end{proposition}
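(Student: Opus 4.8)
The plan is to prove the two inclusions $\gS_m \times \gS_n \subseteq \Gamma$ and $\Gamma \subseteq \gS_m \times \gS_n$ separately, identifying $\gS_m \times \gS_n$ with the subgroup of $\gS_{m+n}$ stabilizing each block of the partition $[m+n] = B_1 \sqcup B_2$, where $B_1 = \{1,\dots,m\}$ and $B_2 = \{m+1,\dots,m+n\}$. Throughout I use that for $\varrho = \rho^{\tensor m}\tensor\sigma^{\tensor n}$ the operator $\rho_i$ appearing in \Cref{prop:trace-formula} equals $\rho$ when $i \in B_1$ and $\sigma$ when $i \in B_2$; I think of each index $i$ as carrying a ``color'' in $\{1,2\}$ recording its block.

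For the inclusion $\gS_m \times \gS_n \subseteq \Gamma$, by linearity it suffices to check $\E_\varrho[\pi^{-1}\tau\pi] = \E_\varrho[\tau]$ for a single permutation $\tau \in \gS_{m+n}$. By \Cref{prop:trace-formula}, $\E_\varrho[\tau]$ factors over the disjoint cycles of $\tau$ as a product of traces $\tr(\rho_{j_1}\dotsm\rho_{j_\ell})$, one per cycle $(j_1\ \cdots\ j_\ell)$. Conjugating $\tau$ by $\pi$ replaces each such cycle by $(\pi^{-1}(j_1)\ \cdots\ \pi^{-1}(j_\ell))$. If $\pi$ stabilizes each block then so does $\pi^{-1}$, hence $\pi^{-1}$ preserves colors and $\rho_{\pi^{-1}(j_t)} = \rho_{j_t}$ for every $t$; so every trace factor, and therefore the entire product, is unchanged. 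Thus $\pi \in \Gamma$.

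For the reverse inclusion I use transposition observables to ``read off'' colors. Fix $a \ne b$; since $(a\ b)^{-1} = (a\ b)$ has cycle type consisting of one $2$-cycle (plus $1$-cycles, harmless by \Cref{rem:okay}), \Cref{prop:trace-formula} gives $\E_\varrho[(a\ b)] = \tr(\rho_a\rho_b)$, which equals $\tr(\rho^2)$, $\tr(\sigma^2)$, or $\tr(\rho\sigma)$ according to whether $\{a,b\}$ lies in $B_1$, lies in $B_2$, or straddles the two blocks. Since conjugation sends $(a\ b)$ to $(\pi^{-1}(a)\ \pi^{-1}(b))$, membership $\pi \in \Gamma$ forces $\tr(\rho_a\rho_b) = \tr(\rho_{\pi^{-1}(a)}\rho_{\pi^{-1}(b)})$ for every choice of states $\rho,\sigma$. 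Choosing (for $d \ge 2$) states with $\tr(\rho^2)$, $\tr(\sigma^2)$, $\tr(\rho\sigma)$ pairwise distinct, this equality forces $\{\pi^{-1}(a),\pi^{-1}(b)\}$ to have the same ``type'' as $\{a,b\}$. Letting $a,b$ range over $B_1$ then shows $\pi^{-1}(B_1) \subseteq B_1$, hence $\pi^{-1}(B_1) = B_1$ by cardinality, and likewise $\pi^{-1}(B_2) = B_2$; so $\pi$ stabilizes both blocks. Finally, the stated map $\gS_m\times\gS_n \to \gS_{m+n}$ is an injective homomorphism with image exactly this block-stabilizer subgroup, yielding the isomorphism.

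The main obstacle — the only genuinely non-routine step — is the reverse inclusion: I must verify that a single transposition's expectation distinguishes the three color-patterns of a pair, which reduces to exhibiting states with $\tr(\rho^2)$, $\tr(\sigma^2)$, $\tr(\rho\sigma)$ all distinct (immediate for $d \ge 2$, e.g.\ via two generic diagonal states). The argument also requires $\max(m,n) \ge 2$, so that $B_1$ or $B_2$ contains a pair to test with; the degenerate case $m = n = 1$, where $\C\gS_2$ is abelian and conjugation acts trivially, must be excluded or handled separately, and I would flag this explicitly.
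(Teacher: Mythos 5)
Your proof is correct and follows essentially the same route as the paper's: the forward inclusion via the trace formula (\Cref{prop:trace-formula}) and color preservation, and the reverse inclusion by testing a single transposition's expectation against the three values $\tr(\rho^2)$, $\tr(\sigma^2)$, $\tr(\rho\sigma)$. You are in fact somewhat more careful than the paper, which silently glosses over the need for $d \ge 2$ and for states making those three traces distinct, as well as the degenerate case $m = n = 1$ (where $\gS_2$ is abelian and the claim fails as literally stated).
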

\begin{proof}
  Let $\Gamma$ be as in the statement of the proposition and let
  $\tau \in \Gamma$. The conjugation $\pi \mapsto \tau^{-1} \pi \tau$
  applies $\tau$ to each index in the cycle decomposition of
  $\pi$. Hence, if $\tau$ acts as in the statement of the proposition,
  then, by \Cref{prop:trace-formula},
  $\E_\varrho[\pi] = \E_\varrho[\tau^{-1} \pi \tau]$.

  Conversely, let $\tau \in \gS_{m+n}$ and suppose there exists an index
  $i \in \braces*{1, \dotsc, m}$ such that
  $\tau(i) \in \braces*{m + 1, \dotsc, m + n}$. Thus, if $\pi = (1 \ i)$,
  then $\E_\varrho[\pi] = \tr(\rho^2)$ and
  \begin{align*}
    \E_\varrho[\tau^{-1} \pi \tau] =
    \begin{cases}
      \tr(\rho \sigma), & \tau(1) \in \braces*{1, \dotsc, m}, \\
      \tr(\sigma^2), & \tau(1) \in \braces*{m + 1, \dotsc, m + n}.
    \end{cases}
  \end{align*}
  Since $\rho$ and $\sigma$ are arbitrary quantum states, it follows
  that $\tau \not\in \Gamma$.
\end{proof}

To find an efficient estimator with respect to $S$, it is sufficient, by
\Cref{prop:orbit-estimators}, to consider functions $X \in \C \gS_{m+n}$
which are constant on the orbits defined by the action of $\Gamma$ on
$\gS_{m+n}$.

\begin{notation}
  Since $\Gamma$ acts on $\gS_{m+n}$ by conjugation, the orbits of
  $\Gamma$ refine the conjugacy classes of $\gS_{m+n}$. An orbit of
  $\Gamma$ is uniquely determined by a signature consisting of a cycle
  type and a map that associates each index in the cycle type with
  either $\rho$ or $\sigma$. For instance, the signature
  $(\rho \, \sigma)$ identifies the orbit of $\Gamma$ which consists of
  all transpositions that exchange an index in $\braces*{1, \dotsc, m}$ with
  an index in $\braces*{m + 1, \dotsc, m + n}$. Note that
  $(\rho \, \sigma) = (\sigma \, \rho)$. Similarly,
  $(\rho \, \rho \, \sigma)$ denotes the set of $3$-cycles with two
  indices in $\braces*{1, \dotsc, m}$ and one index in
  $\braces*{m + 1, \dotsc, m + n}$.

  If $\mathfrak{s}$ is the signature of an orbit of $\Gamma$, let
  $\cO_{\mathfrak{s}} \in \C \gS_{m+n}$ denote the average of all
  elements in the orbit. For example, $\cO_{(\rho \, \sigma)}$ denotes
  the average of all transpositions in the $(\rho \, \sigma)$~orbit
  described above.
\end{notation}

\begin{example}
  By \Cref{prop:trace-formula}, $\cO_{(\rho \, \sigma)}$ is an estimator
  for $f(\varrho) = \tr(\rho \sigma)$.
\end{example}

Moreover, $\cO_{(\rho \, \sigma)}$ satisfies the following uniqueness
property:

\begin{proposition}
  If $X \in \C \gS_{m+n}$ is an estimator for the statistic
  $f : S \to \R$ defined by $f(\varrho) = \tr(\rho \sigma)$ and $X$ is
  of the form presented in \Cref{prop:orbit-estimators}, then
  $X = \cO_{(\rho \, \sigma)}$.
\end{proposition}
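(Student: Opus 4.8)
The plan is to mimic the uniqueness argument of \Cref{prop:estimator-uniqueness}, but now keeping track of which matrix ($\rho$ or $\sigma$) sits at each index. By \Cref{prop:orbit-estimators} I may assume $X = \sum_{\mathfrak{s}} a_{\mathfrak{s}}\, \cO_{\mathfrak{s}}$, the sum ranging over all signatures $\mathfrak{s}$ of $\Gamma$-orbits in $\gS_{m+n}$, with $a_{\mathfrak{s}} \in \C$. By \Cref{prop:trace-formula}, for each signature $\mathfrak{s}$ the quantity $\E_\varrho[\cO_{\mathfrak{s}}]$ is a product of traces of words in $\rho,\sigma$ determined by $\mathfrak{s}$: each nontrivial cycle of $\mathfrak{s}$ contributes a factor $\tr(\xi_{1}\cdots\xi_{\ell})$ with $\xi_j \in \{\rho,\sigma\}$ read off from the $\rho/\sigma$-labels around the cycle, while each $1$-cycle contributes a factor $\tr(\rho)=1$ or $\tr(\sigma)=1$. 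Writing $P_{\mathfrak{s}}(\rho,\sigma)$ for the product over the nontrivial cycles, the estimator condition $\E_\varrho[X] = \tr(\rho\sigma)$ becomes the identity
\[
    \sum_{\mathfrak{s}} a_{\mathfrak{s}}\, P_{\mathfrak{s}}(\rho,\sigma) = \tr(\rho\sigma),
\]
required to hold for all $d$-dimensional states $\rho,\sigma$.

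Next I would pass to a homogeneous polynomial identity valid for arbitrary matrices. Each $P_{\mathfrak{s}}$ is a polynomial in the entries of $\rho,\sigma$, so substituting $\rho = A/\tr(A)$ and $\sigma = B/\tr(B)$ for matrices $A,B$ with nonzero trace and clearing denominators turns the display into
\[
    \sum_{\mathfrak{s}} a_{\mathfrak{s}}\, \tilde P_{\mathfrak{s}}(A,B) = (\tr A)^{m-1}(\tr B)^{n-1}\,\tr(AB),
\]
where now $\tilde P_{\mathfrak{s}}$ is the full trace-monomial of $\mathfrak{s}$, including the $1$-cycle factors $\tr A,\tr B$, and is homogeneous of degree $m$ in $A$ and $n$ in $B$. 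Both sides agree on the states and, by homogeneity together with Zariski density of the Hermitian matrices in $\C^{d \times d}$, the identity extends to all $A,B \in \C^{d \times d}$. Crucially, the right-hand side is itself the trace-monomial $\tilde P_{\mathfrak{s}_0}$ attached to the signature $\mathfrak{s}_0 = (\rho\,\sigma)$, whose only nontrivial cycle is the transposition carrying one $\rho$-index and one $\sigma$-index.

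It then remains to match coefficients, for which I would invoke linear independence of distinct trace-monomials. Distinct $\Gamma$-orbits correspond to distinct multisets of cyclic $\rho/\sigma$-words, and hence to distinct trace-monomials $\tilde P_{\mathfrak{s}}$; these are linearly independent as functions on pairs of $d \times d$ matrices (for $d$ at least $m+n$, where $\RepP$ is faithful on $\C \gS_{m+n}$ by Schur--Weyl duality, so the expectation functionals separate the orbit-class elements). Matching coefficients then forces $a_{\mathfrak{s}_0} = 1$ and $a_{\mathfrak{s}} = 0$ for all $\mathfrak{s} \neq \mathfrak{s}_0$. Since $(\rho\,\sigma)$ is the unique orbit whose reduced trace-monomial equals $\tr(AB)$ --- it is the only orbit with a single nontrivial cycle, namely a $\rho$--$\sigma$ transposition --- we conclude $X = \cO_{(\rho\,\sigma)}$.

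The main obstacle is precisely this last linear-independence input: one must be sure that the trace-monomials produced by distinct orbits are genuinely independent functions, which for fixed small $d$ can fail on account of Cayley--Hamilton-type relations among traces of $d \times d$ matrices. I would handle it either by working in the regime $d \geq m+n$ (where the faithfulness of $\RepP$ makes the functionals $\E_\varrho[\,\cdot\,]$ separate the relevant group-algebra elements, exactly as the power sums $p_\kappa$ are independent for $d$ large in \Cref{prop:estimator-uniqueness}), or, more carefully, by verifying directly that the handful of low-degree monomials relevant to the bilinear statistic $\tr(\rho\sigma)$ remain independent in the fixed dimension at hand.
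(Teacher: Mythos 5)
Your proposal reaches the right conclusion but by a genuinely different and heavier route than the paper, and the step you yourself identify as the crux is where it is incomplete. The paper's proof avoids any independence statement about two-letter trace monomials: it first specializes to $\rho = \sigma$, under which $X$ becomes an estimator for $\tr(\rho^2)$ and $\E_\varrho[X]$ collapses to a combination $\sum_{\kappa \vdash m+n} a_\kappa p_\kappa(\alpha)$ of power sums; the argument of \Cref{prop:estimator-uniqueness} then kills every cycle type except $(2)$, leaving $X = a\,\cO_{(\rho\,\rho)} + b\,\cO_{(\sigma\,\sigma)} + c\,\cO_{(\rho\,\sigma)}$ with $a+b+c=1$. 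Returning to general $\rho,\sigma$, the identity $a\tr(\rho^2)+b\tr(\sigma^2)+c\tr(\rho\sigma)=\tr(\rho\sigma)$ forces $a=b=0$ and $c=1$. The only independence inputs are the power sums (already invoked earlier in the paper) and the obvious independence of the three quadratic functions $\tr(\rho^2),\tr(\sigma^2),\tr(\rho\sigma)$.

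Your route instead demands that \emph{all} trace monomials $\tilde P_{\mathfrak{s}}$ attached to distinct $\Gamma$-orbits --- distinct multisets of cyclic $\rho/\sigma$-words of total length $m+n$ --- be linearly independent. This is essentially the assertion that there are no trace identities among two generic $d\times d$ matrices in degree $m+n$; it is true for $d \ge m+n$ (and your Schur--Weyl remark can be fleshed out: the span of $\{A^{\otimes m}\otimes B^{\otimes n}\}$ is the commutant of $\RepP(\Gamma)$, on which the trace form is nondegenerate, so the functionals separate $\Gamma$-invariant elements when $\RepP$ is faithful), but it is false for $d < m+n$ because of Cayley--Hamilton-type relations, and in the paper's applications $n$ is typically much larger than $d$. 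Your proposed fallback of ``verifying the handful of low-degree monomials relevant to the bilinear statistic'' does not apply: although $\tr(\rho\sigma)$ is bilinear, $X$ lives in $\C\gS_{m+n}$ and $\E_\varrho[X]$ is a polynomial of degree $m+n$, so every orbit, including those with long cycles, contributes a high-degree monomial whose coefficient must be shown to vanish. (To be fair, the paper's own argument inherits a milder version of the same large-$d$ caveat through \Cref{prop:estimator-uniqueness}, since the $p_\kappa$ with $\kappa\vdash m+n$ are dependent in fewer than $m+n$ variables; but the paper's reduction shows you never need the full necklace-independence you are invoking.) If you want to complete your argument cleanly, do what the paper does: set $\rho=\sigma$ first to reduce to the one-letter case, and only then separate the three surviving transposition orbits.
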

\begin{proof}
  In the case when $\rho = \sigma$, $X$ becomes an estimator for
  $\tr(\rho^2)$. Then, by \Cref{prop:trace-power-sum}, $\E_\varrho[X]$ can be
  expressed as follows:
  \begin{align*}
    \E_\varrho[X]
    &= \sum_{\kappa \vdash m+n} a_\kappa p_\kappa(\alpha),
  \end{align*}
  where $\alpha$ is the spectrum of $\rho$ and $a_\kappa \in \R$ for all
  $\kappa \vdash m + n$. Since $\E_\varrho[X] - p_2(\alpha) = 0$ for all
  $\alpha \in \R^d$ with $\norm{\alpha}_1 = 1$, it follows, as in the
  proof of \Cref{prop:estimator-uniqueness}, that $a_\kappa = 0$ for all
  $\kappa \vdash m + n$ with $\kappa \not= (2)$ and $a_{(2)} = 1$. Thus,
  in general,
  $X = a \cO_{(\rho \, \rho)} + b \cO_{(\sigma \, \sigma)} + c
  \cO_{(\rho \, \sigma)}$ with $a + b + c = 1$. Since
  $\E_\varrho[X] = \tr(\rho \sigma)$, it follows that $c = 1$ and
  $a = b = 0$.
\end{proof}

A similar argument proves the following:
\begin{proposition}
  If $X \in \C \gS_{m+n}$ is an estimator for the statistic
  $f : S \to \R$ defined by $f(\varrho) = \DHSsq{\rho}{\sigma}$ and $X$
  is of the form presented in \Cref{prop:orbit-estimators}, then
  $X = \cO_{(\rho \, \rho)} + \cO_{(\sigma \, \sigma)} - 2\cO_{(\rho \,
    \sigma)}$.
\end{proposition}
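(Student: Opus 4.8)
The plan is to first verify that the proposed element is a valid estimator, and then to argue it is the \emph{unique} estimator of the form in \Cref{prop:orbit-estimators}, closely mirroring the preceding argument for $\tr(\rho\sigma)$.

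For the first part I would write $\DHSsq{\rho}{\sigma} = \tr(\rho^2) - 2\tr(\rho\sigma) + \tr(\sigma^2)$ and invoke \Cref{prop:trace-formula}. The orbit averages $\cO_{(\rho\,\rho)}$, $\cO_{(\sigma\,\sigma)}$, $\cO_{(\rho\,\sigma)}$ each consist of a single transposition together with fixed points, so (using $\tr(\rho)=\tr(\sigma)=1$ to collapse the $1$-cycle factors) their expectations are $\tr(\rho^2)$, $\tr(\sigma^2)$, $\tr(\rho\sigma)$ respectively. Hence $\E_\varrho[\cO_{(\rho\,\rho)} + \cO_{(\sigma\,\sigma)} - 2\cO_{(\rho\,\sigma)}] = \DHSsq{\rho}{\sigma}$, confirming that this element is an estimator.

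For uniqueness, let $X = \sum_{\mathfrak{s}} a_{\mathfrak{s}} \cO_{\mathfrak{s}}$ be any estimator of the form in \Cref{prop:orbit-estimators}. By \Cref{prop:trace-formula} each $\E_\varrho[\cO_{\mathfrak{s}}]$ is a product of traces of words in $\rho,\sigma$, one factor per cycle of the signature; as a polynomial in the entries of $(\rho,\sigma)$ it is bihomogeneous of bidegree $(m,n)$ — the same bidegree for \emph{every} orbit, since every index lies in exactly one cycle. The condition $\E_\varrho[X] = \DHSsq{\rho}{\sigma}$ holds on the slice $\tr(\rho)=\tr(\sigma)=1$; homogenizing the right-hand side by inserting the appropriate powers of $\tr(\rho)$ and $\tr(\sigma)$ to reach bidegree $(m,n)$, and then running the scaling argument from the proof of \Cref{prop:estimator-uniqueness} separately in each of the two variables, promotes this to an identity of bihomogeneous polynomials on all pairs of positive operators.

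The remaining step is to compare coefficients, and this is where the main difficulty lies. The crux is the two-state analogue of the linear independence of power sums used in \Cref{prop:estimator-uniqueness}: for $d$ large enough, the functions $\{\E_\varrho[\cO_{\mathfrak{s}}]\}$ indexed by distinct orbit signatures are linearly independent. This is a genuinely finer statement than in \Cref{prop:estimator-uniqueness}, because mixed factors such as $\tr(\rho\sigma)$ depend on the relative eigenbases of $\rho$ and $\sigma$, so the ``evaluate at a single spectrum'' shortcut no longer suffices and one must instead establish independence of the word-trace monomials themselves. Granting this, one notes that the only orbits whose nontrivial factor is genuinely \emph{quadratic} (rather than higher degree, coming from longer cycles, or constant, coming from the identity) are precisely the three single-transposition orbits, and that the homogenized target is exactly $\E_\varrho[\cO_{(\rho\,\rho)}] + \E_\varrho[\cO_{(\sigma\,\sigma)}] - 2\,\E_\varrho[\cO_{(\rho\,\sigma)}]$. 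Matching coefficients then forces $a_{(\rho\,\rho)} = a_{(\sigma\,\sigma)} = 1$, $a_{(\rho\,\sigma)} = -2$, and $a_{\mathfrak{s}} = 0$ for all other $\mathfrak{s}$, so $X = \cO_{(\rho\,\rho)} + \cO_{(\sigma\,\sigma)} - 2\cO_{(\rho\,\sigma)}$.
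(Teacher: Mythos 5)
Your plan is structurally reasonable, and the first half (checking that $\cO_{(\rho\,\rho)} + \cO_{(\sigma\,\sigma)} - 2\cO_{(\rho\,\sigma)}$ has expectation $\DHSsq{\rho}{\sigma}$, via \Cref{prop:trace-formula} and $\tr(\rho)=\tr(\sigma)=1$) is correct. The problem is that the entire uniqueness half hinges on a lemma you name and then explicitly grant: the linear independence, as functions of the pair $(\rho,\sigma)$, of the orbit expectations $\E_\varrho[\cO_{\mathfrak{s}}]$ over distinct signatures $\mathfrak{s}$. You yourself call this ``where the main difficulty lies,'' and it is: distinct orbits can yield trace monomials that agree on \emph{every commuting pair} --- e.g.\ the two $4$-cycle signatures $(\rho\,\rho\,\sigma\,\sigma)$ and $(\rho\,\sigma\,\rho\,\sigma)$ produce $\tr(\rho^2\sigma^2)$ and $\tr(\rho\sigma\rho\sigma)$, which coincide whenever $\rho$ and $\sigma$ commute --- so the single-spectrum evaluation that powers \Cref{prop:estimator-uniqueness} cannot separate them. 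One needs either explicit non-commuting witnesses or the theory of trace identities for generic matrices, and the latter only gives independence when $d$ is sufficiently large relative to $m+n$, a hypothesis you never impose. Since this lemma is the entire mathematical content of the uniqueness claim in your formulation, granting it leaves the proof incomplete.

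The paper takes a route that largely avoids carrying this load. Its argument (spelled out for the preceding $\tr(\rho\sigma)$ proposition and invoked here by analogy) first specializes to $\rho=\sigma$: every orbit expectation then collapses to the power sum $p_{\kappa}(\alpha)$ of its underlying cycle type, so the already-established \Cref{prop:estimator-uniqueness} kills all cycle types other than a single transposition; the three transposition orbits are then separated by the general condition $\E_\varrho[X]=\tr(\rho^2)+\tr(\sigma^2)-2\tr(\rho\sigma)$, whose three constituent functions are independent already on commuting pairs. This reduces the independence burden to ordinary power sums plus an elementary three-term check. (To be fully rigorous even that route must rule out cancellation among distinct orbits sharing a cycle type, so the difficulty you isolated does not vanish entirely --- but the specialization confines it to a much smaller and more tractable comparison.) Either adopt that specialization, or actually prove the independence of the word-trace monomials; as written, your argument does not close.
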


Thus, the estimators obtained for $\tr(\rho \sigma)$ and
$\DHSsq{\rho}{\sigma}$ are efficient:

\begin{corollary}\label{cor:efficient-estimator-trace}
  $\cO_{(\rho \, \sigma)}$ is an efficient estimator for
  $f(\varrho) = \tr(\rho \sigma)$.
\end{corollary}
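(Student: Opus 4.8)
The plan is to mirror the proof of \Cref{cor:efficient-estimator} from the single-state case, where efficiency was deduced by combining a reduction-to-invariants theorem with a uniqueness statement. Here the role of $Z(\C\gS_n)$ is played by the span of the orbit indicators $\phi_1, \dots, \phi_\ell$, and the role of \Cref{prop:estimator-uniqueness} is played by the uniqueness proposition just established for $\tr(\rho\sigma)$.

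First I would recall that, by definition, efficiency means having minimal variance among \emph{all} estimators $\cO' \in \End(V)$ for $f$. By \Cref{prop:orbit-estimators}, it suffices to search for an efficient estimator among those of the form $\RepP(X)$ with $X = a_1 \phi_1 + \dots + a_\ell \phi_\ell$; concretely, this guarantees that at least one efficient estimator $\RepP(X^\star)$ of this $\Gamma$-invariant form exists. Next I would invoke the immediately preceding proposition, which shows that any estimator for $f(\varrho) = \tr(\rho\sigma)$ of this form must equal $\cO_{(\rho \, \sigma)}$. Applying it to $X^\star$ forces $X^\star = \cO_{(\rho \, \sigma)}$ as elements of $\C\gS_{m+n}$, so $\RepP(\cO_{(\rho \, \sigma)}) = \RepP(X^\star)$ is an efficient estimator, which is the claim. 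Along the way one should note that $\cO_{(\rho \, \sigma)}$ is a legitimate observable — each element of the $(\rho \, \sigma)$-orbit is a transposition, hence an involution, so $\cO_{(\rho \, \sigma)}^\dag = \cO_{(\rho \, \sigma)}$ — and that it is indeed an estimator for $\tr(\rho\sigma)$ by \Cref{prop:trace-formula}.

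I expect no real obstacle at the level of the corollary itself: once the reduction theorem and the uniqueness proposition are in hand, the argument is the same formal two-step combination used for \Cref{cor:efficient-estimator}. The substantive work lies upstream, in the uniqueness proposition, whose only delicate point is eliminating the competing orbits $\cO_{(\rho \, \rho)}$ and $\cO_{(\sigma \, \sigma)}$; that is handled by first specializing to $\rho = \sigma$ to reduce to the single-state power-sum argument of \Cref{prop:estimator-uniqueness}, forcing all coefficients except that of the $(2)$-cycle class to vanish, and then using that $\rho$ and $\sigma$ are arbitrary to separate the three transposition orbits and isolate $\cO_{(\rho \, \sigma)}$. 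Since that proposition is already available, the corollary is immediate.
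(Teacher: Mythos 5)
Your proposal is correct and matches the paper's (implicit) argument exactly: the paper derives \Cref{cor:efficient-estimator-trace} by combining \Cref{prop:orbit-estimators} with the preceding uniqueness proposition, precisely mirroring how \Cref{cor:efficient-estimator} follows from \Cref{prop:estimator-sufficiency-center} and \Cref{prop:estimator-uniqueness}. Your added observations that $\cO_{(\rho\,\sigma)}$ is self-adjoint and is an estimator via \Cref{prop:trace-formula} are correct and consistent with what the paper establishes earlier.
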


\begin{corollary}\label{cor:efficient-estimator-hilbert-schmidt}
  $\cO_{(\rho \, \rho)} + \cO_{(\sigma \, \sigma)} - 2\cO_{(\rho \,
    \sigma)}$ is an efficient estimator for
  $f(\varrho) = \DHSsq{\rho}{\sigma}$.
\end{corollary}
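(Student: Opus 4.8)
The plan is to deduce efficiency in exactly the same way that \Cref{cor:efficient-estimator} was deduced in the $\varrho = \rho^\on$ case: by combining a variance-nonincreasing symmetrization with a uniqueness statement. Write $\cO_{\mathrm{HS}} = \cO_{(\rho \, \rho)} + \cO_{(\sigma \, \sigma)} - 2\cO_{(\rho \, \sigma)}$. First I would record that $\cO_{\mathrm{HS}}$ is genuinely an estimator for $f(\varrho) = \DHSsq{\rho}{\sigma}$: since $\DHSsq{\rho}{\sigma} = \tr(\rho^2) + \tr(\sigma^2) - 2\tr(\rho \sigma)$, this follows by linearity of $\E_\varrho$ from the fact that $\cO_{(\rho \, \rho)}$, $\cO_{(\sigma \, \sigma)}$, and $\cO_{(\rho \, \sigma)}$ estimate $\tr(\rho^2)$, $\tr(\sigma^2)$, and $\tr(\rho \sigma)$ respectively, each via \Cref{prop:trace-formula}.

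Next I would show that $\cO_{\mathrm{HS}}$ has minimum variance among \emph{all} estimators, simultaneously for every $\varrho \in S$. Fix an arbitrary estimator $\cO'$ for $f$. Applying the Haar-averaging map $\Phi$ produces an estimator with $\Var_\varrho[\Phi(\cO')] \le \Var_\varrho[\cO']$ (the proposition preceding \Cref{prop:projection-rep}), and by \Cref{prop:projection-rep} we may write $\Phi(\cO') = \RepP(Y)$ for some $Y \in \C\gS_{m+n}$. Applying in turn the $\Gamma$-averaging $Y \mapsto Y^\Gamma$ yields a further estimator with $\Var_\varrho[\RepP(Y^\Gamma)] \le \Var_\varrho[\RepP(Y)]$, and $Y^\Gamma$ is now $\Gamma$-invariant, hence constant on the orbits $O_1, \dots, O_\ell$ — that is, of the orbit-sum form of \Cref{prop:orbit-estimators}. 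Chaining the two inequalities gives $\Var_\varrho[\RepP(Y^\Gamma)] \le \Var_\varrho[\cO']$ for all $\varrho \in S$.

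Finally I would invoke uniqueness. The proposition immediately preceding this corollary states that the \emph{only} estimator of the orbit-sum form for $\DHSsq{\rho}{\sigma}$ is $\cO_{\mathrm{HS}}$; since $\RepP(Y^\Gamma)$ is such an estimator, it must equal $\RepP(\cO_{\mathrm{HS}})$. Hence $\Var_\varrho[\cO_{\mathrm{HS}}] \le \Var_\varrho[\cO']$ for every estimator $\cO'$ and every $\varrho \in S$, which is precisely efficiency.

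I expect the only genuine content to be the uniqueness input, which is already supplied. The reason it cannot be shortcut — as in the $\rho^\on$ case, where one reduces all the way to the center $Z(\C\gS_n)$ and reads off coefficients from power sums — is that here $\Gamma \cong \gS_m \times \gS_n$ is a \emph{proper} subgroup of $\gS_{m+n}$, so $\E_\varrho$ is not tracial (\Cref{prop:expectation-commutative} fails) and one is forced to argue uniqueness in the larger space of $\Gamma$-invariants. The remaining steps are routine bookkeeping; the only care needed is to verify that both averaging operations return an estimator of the claimed orbit-sum form, so that the uniqueness proposition applies.
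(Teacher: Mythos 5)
Your proposal is correct and follows essentially the same route as the paper: the corollary there is obtained by combining \Cref{prop:orbit-estimators} (itself the result of the Haar-averaging and $\Gamma$-averaging reductions you chain together) with the uniqueness proposition immediately preceding it, exactly as you do. The paper leaves this chaining implicit ("Thus, the estimators obtained \dots are efficient"), whereas you spell it out; no new ideas are needed and none are missing.
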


\section{Hilbert--Schmidt distance and related estimation}  \label{sec:hs}

\subsection{Purity, and testing identity to the maximally mixed state}\label{sec:purity}

Let $\rho$ be a quantum state on $\C^d$, let
$\varrho = \rho^{\tensor n}$, and define $f(\varrho) = \tr(\rho^2)$. The
quantity $\tr(\rho^2)$ is called the \emph{purity} of $\rho$. One can also easily compute that the purity is the same as the squared Hilbert--Schmidt distance to the maximally mixed state, up to an additive constant: $\DHSsq{\rho}{\Id/d} = \tr(\rho^2) - 1/d$.

By \Cref{ex:power-sum-estimator}, the observable $\cO_{(2)}$ is an
efficient estimator for the statistic $f$. The following result gives an
explicit formula for the variance of $\cO_{(2)}$.

\begin{lemma}\label{lem:purity-variance}
  $\displaystyle
    \Var_\varrho[\cO_{(2)}]
    = \frac{1}{\binom{n}{2}}(1-p_2(\alpha)^2) + \frac{2(n-2)}{\binom{n}{2}}(p_3(\alpha) - p_2(\alpha)^2).$
\end{lemma}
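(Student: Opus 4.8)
The plan is to compute $\Var_\varrho[\cO_{(2)}]$ directly by expanding $\cO_{(2)}$ as an average over transpositions and applying the variance formula together with \Cref{prop:trace-power-sum}. Recall $\cO_{(2)} = \frac{1}{\binom{n}{2}} \sum_{\{i,j\}} (i\ j)$, the normalized sum over all $\binom{n}{2}$ transpositions. Since $\Var_\varrho[\cO_{(2)}] = \E_\varrho[\cO_{(2)}^2] - \E_\varrho[\cO_{(2)}]^2$ and we already know $\E_\varrho[\cO_{(2)}] = p_2(\alpha)$ by \Cref{prop:trace-power-sum}, the entire task reduces to evaluating $\E_\varrho[\cO_{(2)}^2]$.

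**First I would** write $\cO_{(2)}^2 = \frac{1}{\binom{n}{2}^2} \sum_{\{i,j\}} \sum_{\{k,\ell\}} (i\ j)(k\ \ell)$ and classify the pairs of transpositions by how their supports overlap, since by \Cref{prop:trace-power-sum} the expectation $\E_\varrho[(i\ j)(k\ \ell)]$ depends only on the cycle type of the product. There are exactly three cases. When $\{i,j\} = \{k,\ell\}$ (the same transposition, occurring $\binom{n}{2}$ times), the product is the identity, whose cycle type is $(1^n)$, contributing $p_{(1^n)}(\alpha) = 1$. When the supports are disjoint (cardinality-$2$ intersection empty), the product is a permutation of cycle type $(2,2)$, and its expectation is $p_2(\alpha)^2$; the number of such ordered pairs is $\binom{n}{2}\binom{n-2}{2}$. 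When the two transpositions share exactly one point, the product is a $3$-cycle of type $(3)$, with expectation $p_3(\alpha)$; the count here is $\binom{n}{2} \cdot 2(n-2)$, since after fixing $\{i,j\}$ there are $2$ choices for which endpoint is shared and $(n-2)$ choices for the new point.

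**Then I would** assemble these three contributions, divide by $\binom{n}{2}^2$, and subtract $p_2(\alpha)^2$. The disjoint-support term contributes $\frac{\binom{n-2}{2}}{\binom{n}{2}} p_2(\alpha)^2$, and using $\binom{n-2}{2}/\binom{n}{2} = 1 - \frac{2(2n-3)}{n(n-1)}$ one checks that combining it with the $-p_2(\alpha)^2$ subtraction produces coefficients $1/\binom{n}{2}$ and $2(n-2)/\binom{n}{2}$ matching the claimed formula. Specifically, the identity term yields $\frac{1}{\binom{n}{2}}(1) $ and the $3$-cycle term yields $\frac{2(n-2)}{\binom{n}{2}} p_3(\alpha)$, while the leftover $p_2(\alpha)^2$ pieces regroup as $-\frac{1}{\binom{n}{2}} p_2(\alpha)^2 - \frac{2(n-2)}{\binom{n}{2}} p_2(\alpha)^2$, giving exactly $\frac{1}{\binom{n}{2}}(1 - p_2(\alpha)^2) + \frac{2(n-2)}{\binom{n}{2}}(p_3(\alpha) - p_2(\alpha)^2)$.

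**The main obstacle** I anticipate is purely bookkeeping: getting the combinatorial counts of overlapping transposition pairs exactly right and then verifying that the algebraic recombination of the $p_2(\alpha)^2$ terms telescopes cleanly into the stated coefficients. A subtle point worth double-checking is that $\cO_{(2)}$, though self-adjoint, need not have $\RepP(\cO_{(2)})$ squaring to $\RepP(\cO_{(2)}^2)$ in any naive sense beyond the algebra homomorphism property—but since $\RepP$ is a genuine $*$-representation of $\C\gS_n$, we have $\RepP(\cO_{(2)})^2 = \RepP(\cO_{(2)}^2)$, so expectations of products in the group algebra correctly compute $\E_\varrho[\RepP(\cO_{(2)})^2]$. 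This lets me work entirely inside $\C\gS_n$ and invoke \Cref{prop:trace-power-sum} on each product of transpositions, which is the crux that makes the computation tractable.
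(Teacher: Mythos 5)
Your proposal is correct and follows essentially the same route as the paper: both expand $\cO_{(2)}^2$ by classifying products of two transpositions according to whether they coincide, overlap in one point, or are disjoint (yielding cycle types $(1^n)$, $(3)$, $(2,2)$ with the same counts $1$, $2(n-2)$, $\binom{n-2}{2}$ out of $\binom{n}{2}$), and then apply \Cref{prop:trace-power-sum}. Your combinatorial counts and the algebraic recombination of the $p_2(\alpha)^2$ terms check out, so this is a sound reconstruction of the paper's argument.
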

\begin{proof}
  We may compute
  \begin{align*}
    \cO_{(2)}^2
    &= \frac{1}{\binom{n}{2}} \Id +  \frac{2(n-2)}{\binom{n}{2}}\cO_{(3)} +  \frac{\binom{n-2}{2}}{\binom{n}{2}} \cO_{(2, 2)};
  \end{align*}
  this follows from the fact that if two transpositions are chosen uniformly at random from~$\Sym{n}$, their product is the identity with probability $\frac{1}{\binom{n}{2}}$, has cycle type $(3)$ with probability $\frac{2(n-2)}{\binom{n}{2}}$, and has cycle type  $(2, 2)$ with probability  $\frac{\binom{n-2}{2}}{\binom{n}{2}}$.  Now
  \[
    \E_\varrho[\cO_{(2)}^2] =  \frac{1}{\binom{n}{2}}  +  \frac{2(n-2)}{\binom{n}{2}}p_3(\alpha) +  \frac{\binom{n-2}{2}}{\binom{n}{2}} p_{(2,2)}(\alpha) = \frac{1}{\binom{n}{2}}  +  \frac{2(n-2)}{\binom{n}{2}}p_3(\alpha) +  \parens*{1- \frac{2(n-2) + 1}{\binom{n}{2}}}p_2(\alpha)^2,
  \]
    and the lemma follows.
\end{proof}

At this point, we show how to prove our \Cref{thm:robust-HS} in the special case that $\sigma$ is known to be the maximally mixed state.  (This result was originally proven, in a slightly more opaque way, in~\cite[Theorem~4.1]{OW15}.)
\begin{proposition} (Special case of \Cref{thm:robust-HS}.)
    There is an algorithm that, given $n = O(1/\eps^2)$ copies of the state $\rho \in \C^{d \times d}$, (whp) outputs ``close'' if $\DHS{\rho}{\Id/d} \leq .99 \eps$ and outputs ``far'' if $\DHS{\rho}{\Id/d} > \eps$.
\end{proposition}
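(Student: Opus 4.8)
The plan is to reduce to the one-parameter testing framework of \Cref{lem:chebyshev} using the efficient purity estimator $\cO_{(2)}$. Since $\DHSsq{\rho}{\Id/d} = \tr(\rho^2) - 1/d = p_2(\alpha) - 1/d$, and since $\cO_{(2)}$ is an unbiased estimator for $\tr(\rho^2) = p_2(\alpha)$ by \Cref{ex:power-sum-estimator}, the observable $\bX^{(n)} = \cO_{(2)} - \tfrac1d \Id$ is an unbiased estimator for the target quantity $\mu = \DHSsq{\rho}{\Id/d}$. The algorithm measures $\bX^{(n)}$ on $\rho^{\otimes n}$ (via the spectral decomposition of $\cO_{(2)}$) and thresholds the reported outcome. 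Subtracting the constant $\tfrac1d\Id$ does not change the variance, so $\Var_\varrho[\bX^{(n)}] = \Var_\varrho[\cO_{(2)}]$ is given exactly by \Cref{lem:purity-variance}.

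The core of the argument is to massage that variance formula into the shape~\eqref{eqn:variance} demanded by \Cref{lem:chebyshev}. Using $\binom{n}{2} = \Theta(n^2)$ and $\frac{2(n-2)}{\binom n2} = \Theta(1/n)$, \Cref{lem:purity-variance} gives
\begin{equation*}
  \Var_\varrho[\cO_{(2)}] \le O\!\left(\frac{1 - p_2(\alpha)^2}{n^2} + \frac{p_3(\alpha) - p_2(\alpha)^2}{n}\right).
\end{equation*}
For the first summand I would simply bound $1 - p_2(\alpha)^2 \le 1$, giving a constant $b(\mu) = O(1)$. The main obstacle is controlling the second summand: I must show $p_3(\alpha) - p_2(\alpha)^2 \le O(\mu)$ so as to take $v(\mu) = O(\mu)$. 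First note this quantity is nonnegative, since Cauchy--Schwarz gives $p_2(\alpha)^2 = (\sum_i \alpha_i^2)^2 \le (\sum_i \alpha_i)(\sum_i \alpha_i^3) = p_3(\alpha)$, so the coefficient of $1/n$ is indeed a valid (positive) contribution to the variance bound. For the upper bound I write $\alpha_i = \tfrac1d + \delta_i$ with $\sum_i \delta_i = 0$ and $\mu = \sum_i \delta_i^2$; expanding the power sums yields $p_3(\alpha) - p_2(\alpha)^2 = \tfrac\mu d + \sum_i \delta_i^3 - \mu^2$, and since $|\delta_i| \le 1$ we have $\sum_i \delta_i^3 \le \sum_i \delta_i^2 = \mu$, whence $p_3(\alpha) - p_2(\alpha)^2 \le 2\mu$. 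This establishes $\Var_\varrho[\bX^{(n)}] \le O(1/n^2 + \mu/n)$ with $b(\mu) = O(1)$ and $v(\mu) = O(\mu)$; these, together with $\mu^2/b(\mu) = \Theta(\mu^2)$ and $\mu^2/v(\mu) = \Theta(\mu)$, are all increasing in $\mu$, verifying hypothesis~\eqref{eqn:increasing}.

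Finally I would invoke \Cref{lem:chebyshev} with $\theta = \eps^2$. Condition~\eqref{eqn:how-many} becomes $n \ge C\max\{\sqrt{O(1)/\eps^4},\, O(\eps^2)/\eps^4\} = O(1/\eps^2)$, as claimed. It remains only to check that the thresholds line up: the ``close'' hypothesis $\DHS{\rho}{\Id/d} \le .99\eps$ means $\mu \le (.99)^2\eps^2 = .9801\,\theta < .99\,\theta$, while the ``far'' hypothesis $\DHS{\rho}{\Id/d} > \eps$ means $\mu > \theta$. These are precisely the two cases distinguished by \Cref{lem:chebyshev}, so the thresholded estimator reports ``close'' and ``far'' correctly whp.
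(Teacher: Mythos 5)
Your proposal is correct and follows essentially the same route as the paper: the unbiased estimator $\cO_{(2)} - \tfrac1d\Id$, the variance formula of \Cref{lem:purity-variance}, the identity $p_3(\alpha) - p_2(\alpha)^2 = \tfrac{p_2(\Delta)}{d} + p_3(\Delta) - p_2(\Delta)^2$ with $\Delta_i = \alpha_i - 1/d$, a bound of $O(\mu)$ on that quantity, and \Cref{lem:chebyshev} with $\theta = \eps^2$. The only differences are cosmetic (you obtain $2\mu$ where the paper squeezes out $\mu$, and you spell out the monotonicity and threshold checks that the paper leaves implicit), so no further comment is needed.
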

\begin{proof}
  Since $\DHSsq{\rho}{\Id/d} = \tr(\rho^2) - 1/d$, the observable
  $\cO_{(2)} - \Id/d$ is an unbiased estimator of
  $\DHSsq{\rho}{\Id/d}$. Let $\alpha \in \R^d$ denote the spectrum of
  $\rho$ and let $\Delta_i = \alpha_i - 1/d$ for all $i \in [d]$. Thus,
  \begin{align*}
    p_3(\alpha) - p_2(\alpha)^2
    &= \frac{p_2(\Delta)}{d} + p_3(\Delta) - p_2(\Delta)^2
      \le p_2(\Delta) = \DHSsq{\rho}{\Id/d}.
  \end{align*}
  Hence, by \Cref{lem:purity-variance},
  \begin{align*}
    \Var_\varrho \bracks*{\cO_{(2)} - \frac{\Id}{d}}
    &= \Var_\varrho \bracks*{\cO_{(2)}}
      \leq O\parens*{\frac{1}{n^2} + \frac{p_2(\Delta)}{n}}.
  \end{align*}
  The result now follows from \Cref{lem:chebyshev}.
\end{proof}

\subsection{Linear fidelity}

Let $\rho$ and $\sigma$ be quantum states on $\C^d$, let
$\varrho = \rho^{\tensor m} \tensor \sigma^{\tensor n}$, and define
$f(\varrho) = \tr(\rho \sigma)$. The quantity $\tr(\rho \sigma)$ is
sometimes called the \emph{overlap} or \emph{linear fidelity} between $\rho$ and $\sigma$.  By
\Cref{cor:efficient-estimator-trace}, $\cO_{(\rho \, \sigma)}$ is an
efficient estimator for the statistic $f$. The following result gives an
explicit formula for the variance of $\cO_{(\rho \, \sigma)}$.

\begin{proposition}\label{prop:variance-linear-fidelity}
  $\displaystyle
    \Var_\varrho[\cO_{(\rho \, \sigma)}]
    = \frac{1}{mn}
      + \frac{1 - m - n}{m n} \tr(\rho \sigma)^2
      + \frac1n\parens*{1-\frac{1}{m}} \tr(\rho^2 \sigma)
      + \frac1m\parens*{1-\frac{1}{n}} \tr(\rho \sigma^2).
  $
\end{proposition}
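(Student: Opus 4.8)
The plan is to expand the square of $\cO_{(\rho\,\sigma)}$ directly and evaluate each resulting term using the trace formula of \Cref{prop:trace-formula}. Recall that $\cO_{(\rho\,\sigma)} = \frac{1}{mn}\sum_{i\le m<j}(i\ j)$ is the average of the $mn$ transpositions swapping a $\rho$-index $i \in \{1,\dots,m\}$ with a $\sigma$-index $j \in \{m+1,\dots,m+n\}$. Each such transposition is self-adjoint, hence so is $\cO_{(\rho\,\sigma)}$, and therefore $\Var_\varrho[\cO_{(\rho\,\sigma)}] = \E_\varrho[\cO_{(\rho\,\sigma)}^2] - \E_\varrho[\cO_{(\rho\,\sigma)}]^2$. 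Since the second term equals $\tr(\rho\sigma)^2$, the whole task reduces to computing $\E_\varrho[\cO_{(\rho\,\sigma)}^2]$.

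First I would write $\cO_{(\rho\,\sigma)}^2 = \frac{1}{(mn)^2}\sum_{i\le m<j}\sum_{i'\le m<j'}(i\ j)(i'\ j')$ and classify the $(mn)^2$ ordered pairs of cross-transpositions according to how $\{i,j\}$ and $\{i',j'\}$ intersect. There are four cases: (A) $i=i'$ and $j=j'$, giving the identity; (B) $i=i'$, $j\ne j'$, giving a $3$-cycle on $\{i,j,j'\}$; (C) $i\ne i'$, $j=j'$, giving a $3$-cycle on $\{i,i',j\}$; and (D) all four indices distinct, giving a product of two disjoint transpositions of cycle type $(2,2)$. In each case I would read off $\E_\varrho[\cdot]$ from \Cref{prop:trace-formula}, tracking which factors are $\rho$ (indices $\le m$) and which are $\sigma$ (indices $>m$): case A contributes $1$; case B contributes $\tr(\rho\sigma^2)$ (the $3$-cycle carries one $\rho$ and two $\sigma$'s); case C contributes $\tr(\rho^2\sigma)$; and case D contributes $\tr(\rho\sigma)^2$.

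Next I would count the terms: $mn$ in case A, $mn(n-1)$ in case B, $nm(m-1)$ in case C, and $m(m-1)n(n-1)$ in case D; as a sanity check these sum to $(mn)^2$. Dividing by $(mn)^2$ yields $\E_\varrho[\cO_{(\rho\,\sigma)}^2] = \frac{1}{mn} + \frac{n-1}{mn}\tr(\rho\sigma^2) + \frac{m-1}{mn}\tr(\rho^2\sigma) + \frac{(m-1)(n-1)}{mn}\tr(\rho\sigma)^2$. Subtracting $\tr(\rho\sigma)^2$ collapses the last coefficient to $\frac{(m-1)(n-1)-mn}{mn} = \frac{1-m-n}{mn}$, and rewriting $\frac{m-1}{mn} = \frac1n\bigl(1-\frac1m\bigr)$ together with $\frac{n-1}{mn} = \frac1m\bigl(1-\frac1n\bigr)$ produces exactly the claimed formula.

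The only delicate point is the bookkeeping in Step~2: one must correctly identify the cycle type of each product of two transpositions and the assignment of $\rho$ versus $\sigma$ to its cycles. A convenient simplification is that orientation does not matter here — in cases B and C the $3$-cycle repeats one of the two operators, so its trace is unchanged under inversion, while cases A and D are self-inverse. Consequently the computation is insensitive to whether one evaluates $\E_\varrho[\pi]$ or $\E_\varrho[\pi^{-1}]$, and the convention in \Cref{prop:trace-formula} causes no trouble.
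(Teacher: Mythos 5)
Your proposal is correct and is essentially the paper's own argument: the paper likewise expands $\cO_{(\rho\,\sigma)}^2$ by classifying the product of two uniformly random $(\rho\,\sigma)$-transpositions into the identity, the two $3$-cycle types, and the disjoint-pair type, with exactly the probabilities you obtain by counting ordered pairs. Your case-by-case bookkeeping, the evaluation via \Cref{prop:trace-formula}, and the final algebra all match.
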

\begin{proof}
  The result follows straightforwardly from
  \begin{align*}
    \cO_{(\rho \, \sigma)}^2
    &= \frac{1}{mn} \Id
      + \parens*{1-\frac{1}{m}}\parens*{1-\frac{1}{n}} \cO_{(\rho \, \sigma) (\rho \, \sigma)}
      + \frac1n\parens*{1-\frac{1}{m}} \cO _{(\rho \, \rho \, \sigma)}
      + \frac1m\parens*{1-\frac{1}{n}} \cO_{(\rho \, \sigma \, \sigma)},
  \end{align*}
    which corresponds to the fact that product of two uniformly transpositions of type $(\rho \, \sigma)$ is: the identity probability $\frac{1}{mn}$; of type $(\rho \, \sigma) (\rho \, \sigma)$ with probability $\parens*{1-\frac{1}{m}}\parens*{1-\frac{1}{n}}$; of type  $(\rho \, \rho \, \sigma)$ with probability $\frac1n(1-\frac{1}{m})$; and of type $(\rho \, \sigma \, \sigma)$ with probability $\frac1m(1-\frac{1}{n})$.
\end{proof}

\subsection{Squared Hilbert--Schmidt distance}

Let $\rho$ and $\sigma$ be quantum states on $\C^d$, let
$\varrho = \rho^{\tensor m} \tensor \sigma^{\tensor n}$, and define
$f(\varrho) = \DHSsq{\rho}{\sigma} = \tr(\rho^2) + \tr(\sigma^2) - 2
\tr(\rho \sigma)$. By \Cref{cor:efficient-estimator-hilbert-schmidt},
$\cO_{(\rho \, \rho)} + \cO_{(\sigma \, \sigma)} - 2\cO_{(\rho \,
  \sigma)}$ is an efficient estimator for the statistic $f$.

\begin{lemma}
  $\displaystyle
    \Cov_\varrho[\cO_{(\rho \, \rho)}, \cO_{(\sigma \, \sigma)}]
    = 0.$
\end{lemma}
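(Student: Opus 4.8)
The plan is to exploit the fact that $\cO_{(\rho\,\rho)}$ and $\cO_{(\sigma\,\sigma)}$ act on disjoint blocks of tensor factors, so that their covariance vanishes by the ``independence'' identity \eqref{eqn:indep-analogue}. Concretely, $\cO_{(\rho\,\rho)}$ is the average of all transpositions $(i\,j)$ with $i,j \in \{1,\dots,m\}$, and each such transposition, under $\RepP$, swaps two of the first $m$ tensor factors while acting as the identity on the remaining $n$. Hence $\RepP(\cO_{(\rho\,\rho)})$ has the factorized form $A \tensor \Id$, where $A \in \End((\C^d)^{\tensor m})$ and $\Id$ acts on $(\C^d)^{\tensor n}$. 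Symmetrically, $\RepP(\cO_{(\sigma\,\sigma)}) = \Id \tensor B$ with $B \in \End((\C^d)^{\tensor n})$.

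First I would make this block structure precise, observing that averaging over the orbit preserves the factorized form since every transposition in the orbit lives on the same block of registers. Then, regarding $\varrho = \rho^{\tensor m} \tensor \sigma^{\tensor n}$ as the tensor product of the state $\rho^{\tensor m}$ on the first block with the state $\sigma^{\tensor n}$ on the second, I would apply \eqref{eqn:indep-analogue} directly with $X_1 = A$ and $X_2 = B$ to conclude that $\Cov_\varrho[A \tensor \Id, \Id \tensor B] = 0$.

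As an independent check (and an alternative route, should one prefer a direct computation), note that both operators are self-adjoint, so that $\Cov_\varrho[\cO_{(\rho\,\rho)}, \cO_{(\sigma\,\sigma)}] = \E_\varrho[\cO_{(\rho\,\rho)}\cO_{(\sigma\,\sigma)}] - \E_\varrho[\cO_{(\rho\,\rho)}]\,\E_\varrho[\cO_{(\sigma\,\sigma)}]$. The product $(i\,j)(k\,l)$ of a $\rho$-block transposition and a $\sigma$-block transposition is an involution of cycle type $(2,2)$ with one $2$-cycle in each block, so \Cref{prop:trace-formula} gives $\E_\varrho[(i\,j)(k\,l)] = \tr(\rho^2)\tr(\sigma^2)$ for every such pair. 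Averaging over the two orbits yields $\E_\varrho[\cO_{(\rho\,\rho)}\cO_{(\sigma\,\sigma)}] = \tr(\rho^2)\tr(\sigma^2)$, which exactly matches $\E_\varrho[\cO_{(\rho\,\rho)}]\,\E_\varrho[\cO_{(\sigma\,\sigma)}] = \tr(\rho^2)\tr(\sigma^2)$; hence the covariance is $0$.

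The argument has no genuinely hard step. The only point requiring care is the bookkeeping that the two orbit-averages really do occupy disjoint registers --- equivalently, that $(i\,j)$ and $(k\,l)$ are disjoint transpositions, so that they commute and their product has the claimed split cycle type across the two blocks. Once this block structure is recorded, the result is an immediate instance of the quantum analogue that the covariance of ``independent'' observables is zero.
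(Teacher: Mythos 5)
Your main argument is exactly the paper's proof: identify $\cO_{(\rho\,\rho)} = \cO_{(2)}\tensor\Id$ and $\cO_{(\sigma\,\sigma)} = \Id\tensor\cO_{(2)}$ as living on disjoint tensor blocks and invoke the tensorization identity \eqref{eqn:indep-analogue}. The additional direct computation via \Cref{prop:trace-formula} is a correct sanity check but not needed; the proposal is correct and matches the paper's approach.
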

\begin{proof}
  Note that $\cO_{(\rho \, \rho)} = \cO_{(2)} \tensor \Id$, where
  $\cO_{(2)}$ is defined on the first $m$ components of the tensor
  product. Similarly, $\cO_{(\sigma \, \sigma)} = \Id \tensor \cO_{(2)}$,
  where $\cO_{(2)}$ is defined on the last $n$ components of the tensor
  product. Hence (recalling \Cref{eqn:indep-analogue})
  \begin{align*}
    \Cov_\varrho[\cO_{(\rho \, \rho)}, \cO_{(\sigma \, \sigma)}]
    &= \Cov_\varrho[\cO_{(2)} \tensor \Id, \Id \tensor \cO_{(2)}]
      = 0. \qedhere
  \end{align*}
\end{proof}

\begin{lemma}
  $\displaystyle
    \Cov_\varrho[\cO_{(\rho \, \rho)}, \cO_{(\rho \, \sigma)}]
    = \frac{2}{m}
      \parens*{\tr(\rho^2 \sigma) - \tr(\rho^2) \tr(\rho \sigma)}.
  $
\end{lemma}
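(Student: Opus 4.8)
The plan is to reduce everything to the definition of covariance and then to a counting problem in the group algebra $\C\gS_{m+n}$, exactly in the spirit of the proofs of \Cref{lem:purity-variance} and \Cref{prop:variance-linear-fidelity}. Since $\cO_{(\rho\,\rho)}$ and $\cO_{(\rho\,\sigma)}$ are self-adjoint, the covariance unwinds to $\Cov_\varrho[\cO_{(\rho\,\rho)},\cO_{(\rho\,\sigma)}] = \E_\varrho[\cO_{(\rho\,\rho)}\cO_{(\rho\,\sigma)}] - \E_\varrho[\cO_{(\rho\,\rho)}]\,\E_\varrho[\cO_{(\rho\,\sigma)}]$, and the two factors in the last term are already known to equal $\tr(\rho^2)$ and $\tr(\rho\sigma)$ respectively, since these are the statistics for which $\cO_{(\rho\,\rho)}$ and $\cO_{(\rho\,\sigma)}$ are estimators. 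So the entire task is to evaluate $\E_\varrho[\cO_{(\rho\,\rho)}\cO_{(\rho\,\sigma)}]$.

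First I would write the product as an average over pairs of transpositions, $\cO_{(\rho\,\rho)}\cO_{(\rho\,\sigma)} = \frac{1}{\binom{m}{2}mn}\sum_{(i\,j)}\sum_{(a\,b)}(i\,j)(a\,b)$, where $(i\,j)$ ranges over the $\binom{m}{2}$ transpositions inside $\{1,\dots,m\}$ and $(a\,b)$ ranges over the $mn$ transpositions with $a\in\{1,\dots,m\}$ and $b\in\{m+1,\dots,m+n\}$. The key step is to classify each product $(i\,j)(a\,b)$ by its $\Gamma$-orbit signature according to whether the two transpositions share the index $a$. If $a\in\{i,j\}$, the product is a $3$-cycle on $\{i,j,b\}$, i.e.\ of type $(\rho\,\rho\,\sigma)$; for each fixed $(i\,j)$ there are $2n$ such choices of $(a\,b)$. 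If $a\notin\{i,j\}$, the two transpositions are disjoint and the product has cycle type $(2,2)$ of signature $(\rho\,\rho)(\rho\,\sigma)$; for each fixed $(i\,j)$ there are $(m-2)n$ such choices. As a sanity check, these counts sum to $2n+(m-2)n = mn$, matching the total number of $(a\,b)$.

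Next I would apply the trace formula \Cref{prop:trace-formula} to each type. A $(\rho\,\rho\,\sigma)$ $3$-cycle has expectation $\tr(\rho^2\sigma)$, where both orientations of the cycle give this value by cyclicity of the trace, while a disjoint product of a $(\rho\,\rho)$ transposition and a $(\rho\,\sigma)$ transposition has expectation $\tr(\rho^2)\tr(\rho\sigma)$ by the multiplicativity of $\E_\varrho$ over disjoint cycles. Substituting the two multiplicities gives $\E_\varrho[\cO_{(\rho\,\rho)}\cO_{(\rho\,\sigma)}] = \frac{2}{m}\tr(\rho^2\sigma) + \frac{m-2}{m}\tr(\rho^2)\tr(\rho\sigma)$. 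Subtracting $\tr(\rho^2)\tr(\rho\sigma)$ collapses the second term to $-\frac{2}{m}\tr(\rho^2)\tr(\rho\sigma)$, yielding the claimed formula.

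I expect the only delicate point to be the combinatorial bookkeeping in the classification step: confirming that the $a\in\{i,j\}$ case genuinely produces $(\rho\,\rho\,\sigma)$ $3$-cycles rather than degenerating, and that the per-$(i\,j)$ multiplicities $2n$ and $(m-2)n$ are correct, since these feed directly into the final arithmetic. The trace-formula evaluations are routine once the orbit types are pinned down, and one should simply note that the orientation of the $3$-cycle is immaterial because $\tr(\rho\sigma\rho)=\tr(\rho^2\sigma)$.
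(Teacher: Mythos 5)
Your proposal is correct and follows essentially the same route as the paper: decompose $\cO_{(\rho\,\rho)}\cO_{(\rho\,\sigma)}$ into the orbit averages $\frac{2}{m}\cO_{(\rho\,\rho\,\sigma)} + (1-\frac{2}{m})\cO_{(\rho\,\rho)(\rho\,\sigma)}$ via the case split on whether the two transpositions share an index, then apply the trace formula and subtract the product of means. Your explicit count of $2n$ versus $(m-2)n$ choices per fixed $(i\,j)$ is just a more detailed derivation of the same probabilities the paper states directly.
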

\begin{proof}
  A permutation of type $(\rho \, \rho)(\rho \, \sigma)$ or
  $(\rho \, \rho \, \sigma)$ is uniquely determined by a product of two
  transpositions of types $(\rho \, \rho)$ and $(\rho \, \sigma)$. Hence,
  \begin{align*}
    \cO_{(\rho \, \rho)} \cO_{(\rho \, \sigma)}
    &= \frac{2}{m} \cO_{(\rho \, \rho \, \sigma)} + \parens*{1 - \frac{2}{m}}\cO_{(\rho \, \rho)(\rho \, \sigma)}.
  \end{align*}
  Therefore,
  \begin{align*}
    \Cov_\varrho[\cO_{(\rho \, \rho)}, \cO_{(\rho \, \sigma)}]
    &= \E_\varrho[\cO_{(\rho \, \rho)} \cO_{(\rho \, \sigma)}] -
      \E_\varrho[\cO_{(\rho \, \rho)}] \E_\varrho[\cO_{(\rho \, \sigma)}]
    \\
    &= \frac{2}{m} \tr(\rho^2 \sigma) + \parens*{1 - \frac{2}{m}}\tr(\rho^2)\tr(\rho \sigma) - \tr(\rho^2) \tr(\rho \sigma) \\
    &= \frac{2}{m}
      \tr(\rho^2 \sigma) - \frac{2}{m} \tr(\rho^2) \tr(\rho \sigma). \qedhere
  \end{align*}
\end{proof}

\begin{proposition}\label{prop:variance-hilbert-schmidt-estimator}
  When $m = n$,
  $\displaystyle
   \Var_\varrho[\cO_{(\rho \, \rho)} + \cO_{(\sigma \, \sigma)} - 2 \cO_{(\rho
    \, \sigma)}]
    = O \parens*{\frac{1}{n^2} + \frac{\DHSsq{\rho}{\sigma}}{n}}.
  $
\end{proposition}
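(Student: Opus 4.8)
The plan is to expand the variance by bilinearity of $\Cov_\varrho$ and assemble the six resulting terms from formulas already established. Writing $X_1 = \cO_{(\rho \, \rho)}$, $X_2 = \cO_{(\sigma \, \sigma)}$, and $X_3 = \cO_{(\rho \, \sigma)}$, we have
\[
  \Var_\varrho[X_1 + X_2 - 2X_3] = \Var_\varrho[X_1] + \Var_\varrho[X_2] + 4\Var_\varrho[X_3] + 2\Cov_\varrho[X_1,X_2] - 4\Cov_\varrho[X_1,X_3] - 4\Cov_\varrho[X_2,X_3].
\]
The first two variances come from \Cref{lem:purity-variance} (applied to the spectrum of $\rho$ on the $m$ copies, and to that of $\sigma$ on the $n$ copies); the third comes from \Cref{prop:variance-linear-fidelity}; the covariance $\Cov_\varrho[X_1,X_2]$ vanishes by the first covariance lemma of this subsection; $\Cov_\varrho[X_1,X_3]$ is the second covariance lemma; and $\Cov_\varrho[X_2,X_3]$ follows from that same lemma after swapping the roles of $\rho$ and $\sigma$ (and of $m$ and $n$), giving $\tfrac{2}{n}(\tr(\rho\sigma^2) - \tr(\sigma^2)\tr(\rho\sigma))$. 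I then set $m = n$ throughout.

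Next I would organize the resulting exact expression by order in $1/n$. Every trace that appears --- namely $\tr(\rho^2)$, $\tr(\sigma^2)$, $\tr(\rho\sigma)$, $\tr(\rho^3)$, $\tr(\sigma^3)$, $\tr(\rho^2\sigma)$, $\tr(\rho\sigma^2)$ --- is bounded in absolute value by $1$, so every contribution whose coefficient is $O(1/n^2)$ sums to $O(1/n^2)$. This includes the lower-order corrections hidden in ratios such as $\tfrac{2(n-2)}{\binom{n}{2}} = \tfrac{4}{n} + O(1/n^2)$, as well as terms like $\tfrac{4}{n^2}\tr(\rho\sigma)^2$ coming from $4\Var_\varrho[X_3]$. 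It therefore remains to control the coefficient $c$ of the leading $1/n$ term.

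A careful collection of the $1/n$ contributions yields
\[
  c = 4\bigl(\tr(\rho^3) + \tr(\sigma^3) - \tr(\rho^2\sigma) - \tr(\rho\sigma^2)\bigr) - 4\tr(\rho^2)^2 - 4\tr(\sigma^2)^2 - 8\tr(\rho\sigma)^2 + 8\tr(\rho^2)\tr(\rho\sigma) + 8\tr(\sigma^2)\tr(\rho\sigma).
\]
The key step is to rewrite this in a manifestly small form using $\Delta = \rho - \sigma$. The cubic part telescopes: $\tr(\rho^3) + \tr(\sigma^3) - \tr(\rho^2\sigma) - \tr(\rho\sigma^2) = \tr\bigl((\rho^2 - \sigma^2)\Delta\bigr) = \tr\bigl((\rho + \sigma)\Delta^2\bigr)$, where the last equality uses $\rho^2 - \sigma^2 = \rho\Delta + \Delta\sigma$ together with cyclicity of the trace. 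The quadratic part regroups into perfect squares: since $\tr(\rho^2) - \tr(\rho\sigma) = \tr(\rho\Delta)$ and $\tr(\sigma^2) - \tr(\rho\sigma) = -\tr(\sigma\Delta)$, it equals $-4\tr(\rho\Delta)^2 - 4\tr(\sigma\Delta)^2$. Hence $c = 4\tr\bigl((\rho+\sigma)\Delta^2\bigr) - 4\tr(\rho\Delta)^2 - 4\tr(\sigma\Delta)^2$.

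Finally, for the upper bound I would drop the two nonpositive square terms and use $\rho + \sigma \preceq 2\Id$ (both states have eigenvalues in $[0,1]$) together with $\Delta^2 \succeq 0$, so that $\tr\bigl((\rho+\sigma)\Delta^2\bigr) \le 2\tr(\Delta^2) = 2\DHSsq{\rho}{\sigma}$ (the trace of a product of two positive operators being nonnegative). This gives $c \le 8\DHSsq{\rho}{\sigma}$, and combining with the $O(1/n^2)$ remainder yields the claimed bound $O\bigl(1/n^2 + \DHSsq{\rho}{\sigma}/n\bigr)$. I expect the main obstacle to be the bookkeeping in the third step: correctly extracting the $1/n$ coefficient from all six (somewhat lengthy) formulas, and spotting the telescoping and perfect-square identities that convert an a priori sign-indefinite degree-three expression in $\rho$ and $\sigma$ into a quantity controlled by $\tr(\Delta^2)$. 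The order-separation in the second step is routine once one observes that every trace is bounded by $1$.
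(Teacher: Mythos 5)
Your proposal is correct and follows essentially the same route as the paper's proof: expand the variance into the same variance and covariance terms, collect the $1/n$ coefficient, rewrite it as $4\tr\bigl((\rho+\sigma)\Delta^2\bigr)$ minus two perfect squares, and bound it via $\|\rho+\sigma\|_\infty \le 2$. The exact $1/n$ coefficient you extract and the telescoping/perfect-square identities match the paper's computation term for term.
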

\begin{proof}
  Let
  $\calV = \Var_\varrho[\cO_{(\rho \, \rho)} + \cO_{(\sigma \, \sigma)} -
  2 \cO_{(\rho \, \sigma)}]$. Since $\cO_{(\rho \, \rho)}$,
  $\cO_{(\sigma \, \sigma)} \in \C \Gamma$, $\cO_{(\rho \, \rho)}$ and
  $\cO_{(\sigma \, \sigma)}$ commute with each other and with
  $\cO_{(\rho \, \sigma)}$. Hence,
  \begin{align*}
    \calV
    &= \Var_\varrho[\cO_{(\rho \, \rho)}]
      + \Var_\varrho[\cO_{(\sigma \, \sigma)}]
      + 4 \Var_\varrho[\cO_{(\rho \, \sigma)}]
      - 4 \Cov_\varrho[\cO_{(\rho \, \rho)}, \cO_{(\rho \, \sigma)}]
      - 4 \Cov_\varrho[\cO_{(\sigma \, \sigma)}, \cO_{(\rho \, \sigma)}].
  \end{align*}
  Using prior results, we have
  \begin{align*}
    \Var_\varrho[\cO_{(\rho \, \rho)}]
    + \Var_\varrho[\cO_{(\sigma \, \sigma)}]
    &\le O \parens*{\frac{1}{n^2}} + \frac{4}{n} \parens*{\tr(\rho^3) +
      \tr(\sigma^3) - \tr(\rho^2)^2 - \tr(\sigma^2)^2},
  \end{align*}
  \begin{align*}
    4 \Var_\varrho[\cO_{(\rho \, \sigma)}]
    &= \frac{4}{n^2}
      + \frac{4 - 8n}{n^2} \tr(\rho \sigma)^2
      + \frac{4n - 4}{n^2} \tr(\rho^2 \sigma)
      + \frac{4n - 4}{n^2} \tr(\rho \sigma^2) \\
    &\le O \parens*{\frac{1}{n^2}}
      + \frac{4}{n} \parens*{\tr(\rho^2 \sigma) + \tr(\rho \sigma^2) - 2
      \tr(\rho \sigma)^2},
  \end{align*}
  and
  \begin{align*}
    - 4  \Cov_\varrho[\cO_{(\rho \, \rho)}, \cO_{(\rho \, \sigma)}]
    - 4 \Cov_\varrho[\cO_{(\sigma \, \sigma)}, \cO_{(\rho \, \sigma)}]
    &= - \frac{8}{n} \parens*{
      \tr(\rho^2 \sigma) + \tr(\rho \sigma^2)
      - \parens*{\tr(\rho^2) + \tr(\sigma^2)} \tr(\rho \sigma)}.
  \end{align*}
  Therefore,
  \begin{alignat*}{2}
    \calV
    &\le O\parens*{\frac{1}{n^2}} &&+ \frac{4}{n} \parens*{
      \tr(\rho^3) + \tr(\sigma^3) - \tr(\rho^2)^2 - \tr(\sigma^2)^2
      + \tr(\rho^2 \sigma) + \tr(\rho \sigma^2) - 2 \tr(\rho \sigma)^2
                                     } \\
    & &&- \frac{4}{n} \parens*{2 \tr(\rho^2 \sigma) + 2 \tr(\rho \sigma^2)
      - 2 \parens*{\tr(\rho^2) + \tr(\sigma^2)} \tr(\rho \sigma)} \\
    &= O \parens*{\frac{1}{n^2}} &&+ \frac{4}{n} \parens*{
      \tr(\rho^3) + \tr(\sigma^3) - \tr(\rho^2)^2 - \tr(\sigma^2)^2
      - \tr(\rho^2 \sigma) - \tr(\rho \sigma^2) - 2 \tr(\rho \sigma)^2
                                     } \\
    & &&+ \frac{4}{n} \parens*{2 \parens*{\tr(\rho^2) + \tr(\sigma^2)}
      \tr(\rho \sigma)} \\
    &= O \parens*{\frac{1}{n^2}} &&+ \frac{4}{n} \parens*{
      \tr((\rho + \sigma)(\rho - \sigma)^2)
      - (\tr(\rho^2) - \tr(\rho \sigma))^2
      - (\tr(\sigma^2) - \tr(\rho \sigma))^2} \\
    &\le O \parens*{\frac{1}{n^2}} &&+ \frac{4}{n}
      \tr((\rho + \sigma)(\rho - \sigma)^2) \\
    &\le O \parens*{\frac{1}{n^2}} &&+ \frac{4}{n}
      \norm{\rho + \sigma}_{\infty} \cdot \tr\parens*{(\rho
        - \sigma)^2} \\
    &\le O \parens*{\frac{1}{n^2}} &&+ O\parens*{\frac{1}{n}}\cdot \DHSsq{\rho}{\sigma}. \tag*{\qedhere}
  \end{alignat*}
\end{proof}

\subsection{Consequences for testing}       \label{sec:test-conseq1}
\Cref{thm:robust-HS}, which uses  $O(1/\eps^2)$-copies of unknown $\rho, \sigma$ to distinguish $\DHS{\rho}{\sigma} \leq .99 \eps$ from $\DHS{\rho}{\sigma} > \eps$, is now an immediate consequence of \Cref{lem:chebyshev} and \Cref{prop:variance-hilbert-schmidt-estimator}.\\

In the remainder of this section we give the proof of \Cref{cor:pca-trace}:
\begin{proof}
    The testing algorithm does not need to know~$\delta$, nor which of $\rho$ or $\sigma$ is $\delta$-close to rank~$k$: it simply applies the robust Hilbert--Schmidt tester \Cref{thm:robust-HS} with error parameter $c \eps/\sqrt{k}$, where $c = \frac{1}{1+1/\sqrt{2}}$.    All we need to show is an elementary fact of pure matrix analysis:  assuming $\DHS{\rho}{\sigma} \leq c\eps/\sqrt{k}$, it holds that $\Dtr{\rho}{\sigma} \leq \delta+\eps$.  Since the Hilbert--Schmidt and trace distances are symmetric we may assume that it is $\sigma$ that is close to rank~$k$; and, since these distances are unitarily invariant, we may assume that $\sigma = \diag(\beta_1, \dots, \beta_d)$, where $\beta_1 + \cdots + \beta_k \geq 1-\delta$.

    Write $\rho_A$ for the top-left $k \times k$ block of~$\rho$, write $\rho_B$ for its bottom-right $(d-k) \times (d-k)$ block, and write $\rho_{\text{off}}$ for the ``off-diagonal'' $d \times d$ matrix given by zeroing out those two blocks.  Similarly define $\sigma_A$, $\sigma_B$, and $\sigma_{\text{off}}$, so $\sigma_A = \diag(\beta_1, \dots, \beta_k)$, $\sigma_B = \diag(\beta_{k+1}, \dots, \beta_d)$, and $\sigma_C = 0$.  Now
    \begin{equation}    \label{eqn:boundme}
        2\Dtr{\rho}{\sigma} = \|\rho - \sigma\|_1 \leq \|\rho_A - \sigma_A\|_1 + \|\rho_{\text{off}} - \sigma_{\text{off}}\|_1 + \|\rho_B - \sigma_B\|_1,
    \end{equation}
    by the triangle inequality.  The matrix $\rho_A - \sigma_A$ of course has rank at most~$k$, and the matrix $\rho_{\text{off}} - \sigma_{\text{off}}$ has rank at most~$2k$ (being the sum of a $k \times (d-k)$ matrix and a $(d-k) \times k$ matrix).  Thus we use Cauchy--Schwarz to bound the first two terms on the right of~\eqref{eqn:boundme} by
    \[
        \sqrt{k} \|\rho_A - \sigma_A\|_{\textrm{HS}} + \sqrt{2k} \|\rho_{\text{off}} - \sigma_{\text{off}}\|_{\textrm{HS}} \leq \sqrt{k} \DHS{\rho}{\sigma} + \sqrt{2k}\DHS{\rho}{\sigma} \leq (1+\sqrt{2}) c \eps.
    \]
    Now if we can show
    \begin{equation}    \label{eqn:showme}
         \|\rho_B - \sigma_B\|_1 \leq 2\delta + c\eps,
    \end{equation}
    we will have bounded $2\Dtr{\rho}{\sigma}$ by $2\delta + (2+\sqrt{2})c\eps = 2\delta + 2\eps$, as needed.

    To show~\eqref{eqn:showme}, we begin with the triangle inequality:
    \[
        \|\rho_B - \sigma_B\|_1 \leq \|\rho_B\|_1 + \|\sigma_B\|_1 = \tr(\rho_B)+ \tr(\sigma_B) = (1-\tr(\rho_A)) + (1-\tr(\sigma_A)),
    \]
    where the first equality used that $\rho_B$ and $\sigma_B$ are positive, and the second used that $\rho$ and $\sigma$ have trace~$1$.  Continuing,
    \[
        (1-\tr(\rho_A)) + (1-\tr(\sigma_A)) = 2 - 2\tr(\sigma_A) + \tr(\sigma_A - \rho_A) \leq 2\delta + \|\sigma_A - \rho_A\|_1 \leq 2\delta + \sqrt{k} \|\sigma_A - \rho_A\|_{\textrm{HS}},
    \]
    where we used $1 - \tr(\sigma_A) = 1 - (\beta_1 + \cdots + \beta_k) \leq \delta$, and also Cauchy--Schwarz again.  Now~\eqref{eqn:showme} follows since $\|\sigma_A - \rho_A\|_{\textrm{HS}} \leq \DHS{\rho}{\sigma} \leq c\eps/\sqrt{k}$.
\end{proof}

\section{Quantum chi-squared estimation}    \label{sec:chisq}

\subsection{A chi-squared observable}

In this section, $\sigma$ will denote a fixed full-rank $d$-dimensional density matrix, and we will develop a natural unbiased estimator for the Bures $\chi^2$-divergence $\tr((\rho-\sigma) \cdot \Omega_\sigma (\rho - \sigma))$.  This formula suggests a natural bilinear form:
\begin{definition}
    For matrices $S,T \in \C^{d \times d}$, define the bilinear form
    \[
        \trtwo{S}{T} = \tr(S \cdot \Omega_\sigma T).
    \]
\end{definition}
This bilinear form has the following ``contraction'' property:
\begin{proposition} \label{prop:trtwo-contract}
    For any $S \in \C^{d \times d}$ it holds that
    $
        \trtwo{S}{\sigma} = \tr(S) = \trtwo{\sigma}{S}.
    $
\end{proposition}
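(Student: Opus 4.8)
The plan is to reason directly from the implicit characterization of $\Omega_\sigma$ — namely that it is the inverse of the map $A \mapsto \frac12(\sigma A + A\sigma)$ — rather than attempting to write $\Omega_\sigma$ out explicitly. Both equalities then reduce to short trace manipulations, and the two halves are handled by slightly different tricks.

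For the first equality $\trtwo{S}{\sigma} = \tr(S)$, I would begin by identifying the single matrix $\Omega_\sigma \sigma$. Since the defining relation gives $\Omega_\sigma^{-1}(\Id) = \frac12(\sigma \Id + \Id \sigma) = \sigma$, applying $\Omega_\sigma$ to both sides yields $\Omega_\sigma \sigma = \Id$. Substituting into the definition of the bilinear form then gives $\trtwo{S}{\sigma} = \tr(S \cdot \Omega_\sigma \sigma) = \tr(S \cdot \Id) = \tr(S)$, which disposes of this case immediately.

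For the second equality $\trtwo{\sigma}{S} = \tr(S)$, I would set $B = \Omega_\sigma S$, so that by definition $\frac12(\sigma B + B\sigma) = S$. The key is to symmetrize using cyclicity of the trace: since $\tr(\sigma B) = \tr(B\sigma)$, I can write $\tr(\sigma B) = \frac12\bigl(\tr(\sigma B) + \tr(B\sigma)\bigr) = \tr\bigl(\tfrac12(\sigma B + B\sigma)\bigr) = \tr(S)$. Hence $\trtwo{\sigma}{S} = \tr(\sigma \cdot \Omega_\sigma S) = \tr(S)$, as required.

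The only place where noncommutativity plays any role — and thus the closest thing to an obstacle — is this second equality, where I cannot simply cancel $\sigma$ against $\Omega_\sigma$ as I did in the first case. The resolution is that the trace is insensitive to the ordering of $\sigma$ and $B$, which is exactly what allows me to reconstruct the symmetric combination $\frac12(\sigma B + B\sigma) = \Omega_\sigma^{-1}(B)$ inside the trace. Once that observation is in place, everything else is routine.
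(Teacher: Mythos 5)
Your proof is correct and follows essentially the same route as the paper: the first equality via $\Omega_\sigma\sigma = \Id$, and the second via the identity $S = \tfrac12(\sigma\cdot\Omega_\sigma S + \Omega_\sigma S\cdot\sigma)$ combined with cyclicity of the trace. You have simply spelled out the trace manipulation that the paper leaves implicit.
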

\begin{proof}
    Both identities are direct from the definition of the $\Omega_\sigma$: the first uses $\Omega_\sigma \sigma = \Id$; the second uses $S = \tfrac12 \sigma \cdot \Omega_\sigma S + \tfrac12 \Omega_\sigma S \cdot \sigma$.
\end{proof}
It follows that
\begin{align*}
    \DBchi{\rho}{\sigma} &= \trtwo{\rho-\sigma}{\rho-\sigma} \\
    &= \trtwo{\rho}{\rho} -\trtwo{\sigma}{\rho} -\trtwo{\rho}{\sigma}+\trtwo{\sigma}{\sigma} = \trtwo{\rho}{\rho} - \tr(\rho) - \tr(\rho) + \tr(\sigma),
\end{align*}
and from this we arrive at another standard formula for the Bures $\chi^2$-divergence:
\begin{proposition}                                     \label{prop:alt-chisq-formula}
    If $\rho$ is a $d$-dimensional density matrix, then
    \[
        \DBchi{\rho}{\sigma} = \trtwo{\rho}{\rho} - 1 =\tr(\rho \cdot \Omega_\sigma \rho) - 1.
    \]
    If $\sigma = \diag(\beta_1, \dots, \beta_d)$, then $\Omega_\sigma$ acts by multiplying the $ij$-th entry by $\frac{2}{\beta_i + \beta_j} = \avg\{\beta_i,\beta_j\}^{-1} $; thus in this case,
    \[
        \DBchi{\rho}{\sigma} = \parens*{\sum_{i,j=1}^d \frac{|\rho_{ij}|^2}{\avg\{\beta_i,\beta_j\}}} - 1.
    \]
\end{proposition}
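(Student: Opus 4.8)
The plan is to read off the first identity directly from the computation already carried out in the display immediately preceding the statement, and then to verify the explicit diagonal formula by an entrywise calculation.

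For the first claim, I would simply record that the preceding display, together with \Cref{prop:trtwo-contract} and the normalizations $\tr(\rho) = \tr(\sigma) = 1$, already gives
\[
  \DBchi{\rho}{\sigma} = \trtwo{\rho}{\rho} - \tr(\rho) - \tr(\rho) + \tr(\sigma) = \trtwo{\rho}{\rho} - 1,
\]
and by definition $\trtwo{\rho}{\rho} = \tr(\rho \cdot \Omega_\sigma \rho)$. So nothing further is needed for the first part beyond invoking the contraction property $\trtwo{\sigma}{\rho} = \trtwo{\rho}{\sigma} = \tr(\rho)$ and $\trtwo{\sigma}{\sigma} = \tr(\sigma)$.

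For the diagonal case, I would first pin down the entrywise action of $\Omega_\sigma$. Starting from the defining relation $\Omega_\sigma^{-1}(A) = \tfrac12(\sigma A + A\sigma)$ and using that $\sigma = \diag(\beta_1, \dots, \beta_d)$ acts by $(\sigma A)_{ij} = \beta_i A_{ij}$ and $(A\sigma)_{ij} = \beta_j A_{ij}$, I get $\Omega_\sigma^{-1}(A)_{ij} = \tfrac12(\beta_i + \beta_j)A_{ij}$. Inverting, $\Omega_\sigma$ multiplies the $ij$-entry by $\frac{2}{\beta_i + \beta_j} = \avg\{\beta_i,\beta_j\}^{-1}$, as claimed.

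Finally I would expand $\tr(\rho \cdot \Omega_\sigma \rho)$ in coordinates: writing $(\Omega_\sigma \rho)_{ij} = \frac{2}{\beta_i + \beta_j}\rho_{ij}$ and using $\tr(AB) = \sum_{i,j} A_{ij} B_{ji}$ yields $\sum_{i,j} \frac{2}{\beta_i + \beta_j}\rho_{ij}\rho_{ji}$. The one point requiring (minor) care — and the closest thing to an obstacle here — is to invoke the Hermiticity of $\rho$, so that $\rho_{ji} = \overline{\rho_{ij}}$ and hence $\rho_{ij}\rho_{ji} = |\rho_{ij}|^2$; this turns the sum into $\sum_{i,j} \frac{|\rho_{ij}|^2}{\avg\{\beta_i,\beta_j\}}$. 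Subtracting the $1$ from the first part then gives the stated formula. Overall this proposition is essentially a bookkeeping consequence of \Cref{prop:trtwo-contract} and the entrywise action of $\Omega_\sigma$, with no substantial difficulty.
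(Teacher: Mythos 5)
Your proposal is correct and matches the paper's (implicit) argument exactly: the first identity is read off from the display preceding the proposition together with \Cref{prop:trtwo-contract} and $\tr(\rho)=\tr(\sigma)=1$, and the diagonal formula follows from inverting $\Omega_\sigma^{-1}(A) = \tfrac12(\sigma A + A\sigma)$ entrywise and using Hermiticity of $\rho$ to write $\rho_{ij}\rho_{ji} = |\rho_{ij}|^2$. Nothing is missing.
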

In light of the above, it is natural to define the following observable.
\begin{definition}
Assume henceforth that $\sigma = \diag(\beta_1, \dots, \beta_d)$ is diagonal.  We define the associated \emph{$\chi^2$ observable}, operating on $(\C^d)^{\otimes 2}$, as follows:\noteryan{Note that $\chiobs$ is almost the ``natural matrix representation'' of operator $\Omega_\sigma$; it's just the $(ij)(i'j')$ entry is swapped with $(ii')(jj')$.  This fact doesn't rely on $\sigma$ being diagonal.}
\begin{equation*}
    \chiobs= \sum_{i, j=1}^d \frac{\ket{ji} \bra{ij}}{\avg\{\beta_i,\beta_j\}}.
\end{equation*}
Evidently, $\E_{\rho \otimes \rho}[\chiobs] = \DBchi{\rho}{\sigma} + 1$.
\end{definition}

\begin{definition}
    Given distinct $s, t \in [n]$, we write $\chiobsij{s}{t}$ for the operator which acts on $(\C^{d})^{\otimes n}$ by applying $\chiobs$ to the $s$-th and the $t$-th tensor copies of $\C^d$ and acting as the identity on the remaining copies.  (The dependence on~$n$ in the notation is implicit.)
\end{definition}
\begin{observation} \label{obs:twisted-P}
    Observe that $\chiobsij{s}{t}$ is rather similar to the observable $\Srep((s\,t))$; however, when it swaps letters $i$ and $j$, it picks up a scalar factor of $\frac{2}{\beta_i + \beta_j}$.  Thus in comparison with
    \[
        \Srep((1\,2)) \cdot \Srep((2\,3)) = \Srep((1\,2\,3)) = \sum_{i,j,k=1}^d \ket{ijk}\bra{jki}
    \]
    we have
    \[
        \chiobsij{1}{2} \cdot \chiobsij{2}{3} = \sum_{i,j,k=1}^d \frac{\ket{ijk}\bra{jki}}{\avg\{\beta_i,\beta_j\} \cdot \avg\{\beta_i,\beta_k\}},
    \]
    the  scalar factors in the denominator arising because letters $i$ and $k$ are swapped, and then letters $i$ and $j$ are swapped.  As a consequence, rather than the matrix trilinear form mapping $(R,S,T)$ to
    \[
        \tr(\Srep((1\,2)) \cdot \Srep((2\,3)) \cdot R \otimes S \otimes T) = \tr(\Srep((1\,2\,3))\cdot R \otimes S \otimes T)
        = \sum_{i,j,k=1}^d T_{ij}S_{jk}R_{ki} = \tr(TSR)
    \]
    as in \Cref{prop:trace-formula}, we obtain the trilinear form given in the subsequent definition.
\end{observation}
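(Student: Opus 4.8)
The plan is to verify each of the three displayed identities by expanding every operator in the standard matrix-unit basis and collapsing the resulting sums with orthonormality, $\langle x | y \rangle = \delta_{xy}$; all of them are routine coordinate computations once one recognizes $\chiobs$ as a \emph{reweighted swap}. Concretely, $\Srep((s\,t))$ acts on two tensor copies as the swap $\sum_{i,j}\ket{ji}\bra{ij}$, so the defining formula exhibits $\chiobs = \sum_{i,j}\frac{\ket{ji}\bra{ij}}{\avg\{\beta_i,\beta_j\}}$ as exactly this swap with the matrix unit $\ket{ji}\bra{ij}$ rescaled by $1/\avg\{\beta_i,\beta_j\}$, the reciprocal average of the two eigenvalues being exchanged. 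Lifting to $(\C^d)^\on$, the operator $\chiobsij{s}{t}$ is $\Srep((s\,t))$ carrying this same per-term weight and tensored with the identity on the remaining copies.

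For the composition I would write out
\[
    \chiobsij{1}{2} = \sum_{p,q,r} \frac{\ket{qpr}\bra{pqr}}{\avg\{\beta_p,\beta_q\}}, \qquad
    \chiobsij{2}{3} = \sum_{s,t,u} \frac{\ket{sut}\bra{stu}}{\avg\{\beta_t,\beta_u\}},
\]
multiply, and use $\langle pqr | sut \rangle = \delta_{ps}\delta_{qu}\delta_{rt}$ to force $s=p$, $u=q$, $t=r$; after relabeling $(i,j,k)=(q,p,r)$ the surviving single sum is exactly $\sum_{i,j,k}\frac{\ket{ijk}\bra{jki}}{\avg\{\beta_i,\beta_j\}\avg\{\beta_i,\beta_k\}}$. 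Here the two denominator factors record that, reading the product right-to-left, $\chiobsij{2}{3}$ first exchanges the letters now named $i,k$ (contributing $1/\avg\{\beta_i,\beta_k\}$) and $\chiobsij{1}{2}$ then exchanges $i,j$ (contributing $1/\avg\{\beta_i,\beta_j\}$). Setting every weight to $1$ in the identical computation gives $\Srep((1\,2))\cdot\Srep((2\,3)) = \sum_{i,j,k}\ket{ijk}\bra{jki}$; alternatively one uses that $\Srep$ is a homomorphism, computes $(1\,2)(2\,3)=(1\,2\,3)$ in $\gS_3$, and expands $\Srep((1\,2\,3))$ from the rule $\Srep(\pi)\ket{x_1 x_2 x_3} = \ket{x_{\pi^{-1}(1)} x_{\pi^{-1}(2)} x_{\pi^{-1}(3)}}$.

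For the trace formula I would pair $\Srep((1\,2\,3))\cdot(R\otimes S\otimes T)$ against $\sum_{i,j,k}\ket{ijk}\bra{jki}$, apply each tensor factor of $R\otimes S\otimes T$ to the corresponding slot of $\bra{jki}$, and take the trace; the cyclic index pattern of the $3$-cycle then reads off as $\sum_{i,j,k}T_{ij}S_{jk}R_{ki}=\tr(TSR)$, which is just the $3$-cycle case of \Cref{prop:trace-formula}. The only genuine obstacle is the index and weight bookkeeping in the composition step: because each factor $1/\avg\{\beta_a,\beta_b\}$ is pinned to the \emph{letters} actually being transposed rather than to fixed tensor positions, one must track how the letters move through the two swaps. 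The clean way to state this --- and the conceptual payoff of the observation --- is that the weight of any product of $\chiobs$-type transpositions is the product, over the successive swaps, of $1/\avg\{\beta_a,\beta_b\}$ for the pair $\{a,b\}$ exchanged at that step, so that everything reduces to the unweighted Schur--Weyl computation with these scalars attached; inserting them into the trace pairing is exactly what produces the trilinear form of the following definition.
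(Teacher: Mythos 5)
Your proposal is correct and matches the paper's own (inline, informal) justification: the paper likewise treats $\chiobs$ as a reweighted swap and tracks which pair of letters is exchanged at each step, which is exactly your coordinate computation with the Kronecker deltas $\delta_{ps}\delta_{qu}\delta_{rt}$ made explicit. Your index bookkeeping checks out --- the surviving weights $\avg\{\beta_i,\beta_k\}^{-1}$ (from $\chiobsij{2}{3}$, acting first) and $\avg\{\beta_i,\beta_j\}^{-1}$ (from $\chiobsij{1}{2}$) are precisely those asserted in the observation, and the trace pairing correctly yields $\sum_{i,j,k}T_{ij}S_{jk}R_{ki}=\tr(TSR)$ in the unweighted case.
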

\begin{definition}  \label{def:trthree}
    For matrices $R,S,T \in \C^{d \times d}$, define the trilinear form
    \[
        \trthree{R}{S}{T} = \tr(\chiobsij{1}{2} \cdot \chiobsij{2}{3} \cdot R \otimes S \otimes T) = \sum_{i,j,k=1}^d \frac{T_{ij}S_{jk}R_{ki}}{\avg\{\beta_i,\beta_j\} \cdot \avg\{\beta_i,\beta_k\}}.
    \]
\end{definition}
We again get a certain ``contraction'' property:
\begin{proposition} \label{prop:trthree-contract}
    For any $S, T \in \C^{d \times d}$ it holds that
    $
        \trthree{S}{T}{\sigma} = \trtwo{S}{T} = \trthree{\sigma}{S}{T}.
    $
\end{proposition}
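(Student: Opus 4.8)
The plan is to reduce both equalities to a single operator-level ``contraction'' fact --- the trilinear analogue of the identity $\Omega_\sigma \sigma = \Id$ that powered \Cref{prop:trtwo-contract}. Writing $\tr_1, \tr_3$ for the partial traces over the indicated tensor factors, the fact I would first record is
\[
    \tr_3\bigl[\chiobsij{2}{3}\,(T \tensor \sigma)\bigr] = T, \qquad \tr_1\bigl[\chiobsij{1}{2}\,(\sigma \tensor S)\bigr] = S .
\]
Each is a one-line check directly from $\chiobs = \sum_{i,j} \ket{ji}\bra{ij}/\avg\{\beta_i,\beta_j\}$: because $\sigma = \diag(\beta_1,\dots,\beta_d)$ is diagonal, contracting it against one leg forces the two indices on that leg to coincide, producing a factor $\beta_i$; the weight $\avg\{\beta_i,\beta_i\} = \beta_i$ then cancels it exactly, and the remaining sum reassembles $T$ (respectively $S$). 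This is precisely where the special choice of the denominators $\avg\{\beta_i,\beta_j\}$ pays off.

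Given this, the first identity falls out by peeling off the third copy. Since $\chiobsij{1}{2}$ acts as the identity on the third factor, I can move $\tr_3$ inside $\trthree{S}{T}{\sigma} = \tr\bigl(\chiobsij{1}{2}\chiobsij{2}{3}\,(S \tensor T \tensor \sigma)\bigr)$ to obtain $\tr\bigl(\chiobsij{1}{2}\,(S \tensor \tr_3[\chiobsij{2}{3}(T\tensor\sigma)])\bigr) = \tr\bigl(\chiobsij{1}{2}\,(S \tensor T)\bigr)$, and this last quantity equals $\trtwo{S}{T}$ (both are $\sum_{i,j} S_{ij}T_{ji}/\avg\{\beta_i,\beta_j\}$, matching the reading $\E_{\rho\tensor\rho}[\chiobs] = \DBchi{\rho}{\sigma}+1$). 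The second identity $\trthree{\sigma}{S}{T} = \trtwo{S}{T}$ is symmetric: one commutes $\sigma$ in the first copy past $\chiobsij{2}{3}$, so that the partial trace over the first copy interacts only with $\chiobsij{1}{2}$, and then applies the second half of the contraction fact to collapse that copy.

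In practice the cleanest thing to actually write is probably the brute-force version, since \Cref{def:trthree} already hands us the entrywise formula $\trthree{R}{S}{T} = \sum_{i,j,k} T_{ij}S_{jk}R_{ki}/(\avg\{\beta_i,\beta_j\}\avg\{\beta_i,\beta_k\})$. Substituting $\sigma_{ij} = \beta_i \delta_{ij}$ into the appropriate slot collapses one summation index and cancels a $\beta_i$ against $\avg\{\beta_i,\beta_i\}$, leaving exactly the entrywise form $\sum_{i,k} S_{ki}T_{ik}/\avg\{\beta_i,\beta_k\}$ of $\trtwo{S}{T} = \tr(S\,\Omega_\sigma T)$ after relabeling, with the computation for $\trthree{\sigma}{S}{T}$ identical up to interchanging the first and third positions. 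I do not expect a genuine obstacle here; the only thing to watch is the bookkeeping --- the positional correspondence between the arguments $(R,S,T)$ and the cyclic index pattern $T_{ij}S_{jk}R_{ki}$, together with the convention $\avg\{\beta_i,\beta_i\} = \beta_i$ --- since a careless relabeling could spuriously transpose $S$ or $T$.
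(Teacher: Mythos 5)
Your proposal is correct, and the ``brute-force'' entrywise version you settle on as the cleanest thing to write is exactly the paper's proof: substitute $\sigma_{ki} = \beta_k \delta_{ki}$ (resp.\ $\sigma_{ij} = \beta_i \delta_{ij}$) into the formula of \Cref{def:trthree}, collapse the repeated index, and cancel the resulting $\beta_i$ against $\avg\{\beta_i,\beta_i\} = \beta_i$, leaving $\sum_{i,j} T_{ij}S_{ji}/\avg\{\beta_i,\beta_j\} = \trtwo{S}{T}$. The operator-level partial-trace packaging you sketch first is also valid, but it is just a coordinate-free restatement of the same cancellation, so there is no substantive difference from the paper's argument.
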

\begin{proof}
    We prove the second identity, the first being similar.  When we substitute $R = \sigma$ into \Cref{def:trthree} we obtain
     \[
         \trthree{\sigma}{S}{T} = \sum_{i,j,k=1}^d  \frac{T_{ij}S_{jk}\sigma_{ki}}{\avg\{\beta_i,\beta_j\} \cdot \avg\{\beta_i,\beta_k\}}
     \]
     Since $\sigma$ is diagonal, the summands with $i \neq k$ vanish.  When $i = k$ we have $\sigma_{kk} = \beta_k$, which cancels the factor of $\avg\{\beta_i,\beta_k\}$.  We are left with
     \[
         \trthree{\sigma}{S}{T} = \sum_{j,k=1}^d \frac{T_{ij}S_{ji}}{\avg\{\beta_i,\beta_j\}},
     \]
     which is indeed $\trtwo{S}{T}$.
\end{proof}
We also observe that unlike
\[
    P((1\,2)) P((1\,2)) = \sum_{i,j=1}^d \ket{ij}\bra{ij} = \Id,
\]
we have
\begin{equation}        \label{eqn:chiobs-squared}
    \chiobs \chiobs = \sum_{i,j=1}^d \frac{\ket{ij}\bra{ij}}{\avg\{\beta_i,\beta_j\}^2},
\end{equation}
a diagonal operator, but not the identity.  Finally:

\begin{definition}
    For a given $n \geq 2$, we define the  \emph{averaged $\chi^2$ observable} on $(\C^d)^\on$ to be $\chiobsavg = \avg_{s \neq t} \{\chiobsij{s}{t}\} - \Id$, where the average is over all distinct ordered pairs $s,t \in [n]$.
\end{definition}
Evidently:
\begin{proposition}                                     \label{prop:mean-var-chi}
    $\E_{\rho^\on}[\chiobsavg] = \DBchi{\rho}{\sigma}$ and $\Var_{\rho^\on}[\chiobsavg] = \Var_{\rho^\on}[\avg_{s \neq t} \{\chiobsij{s}{t}\}]$.
\end{proposition}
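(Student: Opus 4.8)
The plan is to dispatch the two claims separately, both as short consequences of structural facts already established in \Cref{sec:quantum-probability}.

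For the expectation, I would first use linearity of $\E_{\rho^\on}$ to reduce to computing $\E_{\rho^\on}[\chiobsij{s}{t}]$ for a single ordered pair $s \neq t$. By definition $\chiobsij{s}{t}$ factors as $\chiobs$ acting on the $s$-th and $t$-th tensor copies and as $\Id$ on each of the remaining $n-2$ copies. Invoking the tensorization property of $\E_\varrho$ (namely $\E_{\varrho \tensor \varrho'}[\cO \tensor \cO'] = \E_\varrho[\cO]\cdot\E_{\varrho'}[\cO']$) together with $\E_\rho[\Id] = \tr(\rho) = 1$ on each of those spectator copies, I get $\E_{\rho^\on}[\chiobsij{s}{t}] = \E_{\rho \otimes \rho}[\chiobs] = \DBchi{\rho}{\sigma} + 1$, the last equality being exactly the identity recorded when $\chiobs$ was defined. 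Since this value is the same for every pair, averaging over $s \neq t$ leaves it unchanged, and subtracting $\E_{\rho^\on}[\Id] = 1$ yields $\E_{\rho^\on}[\chiobsavg] = \DBchi{\rho}{\sigma}$, as desired.

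For the variance, the point is simply that $\chiobsavg$ differs from $\avg_{s \neq t}\{\chiobsij{s}{t}\}$ only by the additive multiple $-\Id$ of the identity operator. By the translation-invariance of the covariance form noted in \Cref{sec:quantum-probability}---$\Cov_\varrho[\cO_1 + a\Id, \cO_2 + b\Id] = \Cov_\varrho[\cO_1, \cO_2]$---and the fact that $\Var_\varrho[X] = \Cov_\varrho[X,X]$, taking $a = b = -1$ gives $\Var_{\rho^\on}[\avg_{s\neq t}\{\chiobsij{s}{t}\} - \Id] = \Var_{\rho^\on}[\avg_{s\neq t}\{\chiobsij{s}{t}\}]$ immediately.

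I do not expect a genuine obstacle here: both parts are direct applications of tensorization and translation-invariance. The only step warranting a moment of care is the factorization underlying the expectation computation, i.e.\ verifying that $\chiobsij{s}{t}$ is genuinely $\chiobs$ tensored with identities on the spectator copies so that the product formula applies and the $\tr(\rho) = 1$ factors drop out cleanly. The substantive work---bounding $\Var_{\rho^\on}[\avg_{s \neq t}\{\chiobsij{s}{t}\}]$ itself---is deferred by this proposition to the subsequent analysis, which is precisely why reducing to that (constant-free) quantity is all that is claimed here.
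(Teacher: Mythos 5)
Your proposal is correct and matches the paper's (implicit) reasoning: the paper simply asserts this proposition as ``evident,'' relying on exactly the facts you spell out --- that $\E_{\rho\otimes\rho}[\chiobs] = \DBchi{\rho}{\sigma}+1$ from the definition of $\chiobs$, tensorization with $\tr(\rho)=1$ on the spectator copies, and translation-invariance of the variance under adding a multiple of $\Id$. Your write-up just makes these steps explicit.
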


\subsection{Analyzing the variance of the average chi-squared observable}

The calculation of the variance of the averaged $\chi^2$ observable, $\Var_{\rho^\on}[\avg_{s \neq t} \{\chiobsij{s}{t}\}]$, proceeds exactly as does the calculation of the variance of the purity observable in \Cref{lem:purity-variance}. We obtain:

\begin{proposition}\label{prop:chi-sq-var}
    The averaged $\chi^2$-observable has variance
    \[
        \frac{1}{\binom{n}{2}} \parens*{\tr(\chiobs^2 \rho^{\otimes 2})  - \trtwo{\rho}{\rho}^2} + \frac{2(n-2)}{\binom{n}{2}} \parens*{\trthree{\rho}{\rho}{\rho} - \trtwo{\rho}{\rho}^2}.
    \]
\end{proposition}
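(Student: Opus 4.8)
The plan is to mirror the combinatorial bookkeeping of \Cref{lem:purity-variance} line for line, with the scaled swap $\chiobsij{s}{t}$ playing the role of the transposition $\Srep((s\,t))$. First I would note that $\chiobs$ is symmetric under exchanging its two tensor legs (because $\avg\{\beta_i,\beta_j\}$ is symmetric), so $\chiobsij{s}{t} = \chiobsij{t}{s}$ and the quantity $Y := \avg_{s \neq t}\{\chiobsij{s}{t}\}$ is genuinely an average over the $\binom{n}{2}$ \emph{unordered} pairs. Squaring gives
\[
    Y^2 = \frac{1}{\binom{n}{2}^2}\sum_{\{s,t\},\{s',t'\}}\chiobsij{s}{t}\,\chiobsij{s'}{t'},
\]
and I would split this sum according to the overlap pattern of the two pairs, exactly as in the purity proof: the pairs coincide (multiplicity $\binom{n}{2}$), share exactly one index (multiplicity $\binom{n}{2}\cdot 2(n-2)$), or are disjoint (multiplicity $\binom{n}{2}\binom{n-2}{2}$). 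Dividing by $\binom{n}{2}^2$ recovers the three weights $\tfrac{1}{\binom{n}{2}}$, $\tfrac{2(n-2)}{\binom{n}{2}}$, $\tfrac{\binom{n-2}{2}}{\binom{n}{2}}$ that appear in \Cref{lem:purity-variance}.

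Next I would compute $\E_{\rho^\on}$ of each overlap type. For coinciding pairs, $\chiobsij{s}{t}\chiobsij{s}{t}$ is just $\chiobs^2$ acting on legs $s,t$, so its expectation is $\tr(\chiobs^2 \rho^{\otimes 2})$ by \eqref{eqn:chiobs-squared} (the remaining $n-2$ legs contribute $\tr(\rho)=1$). For disjoint pairs the operator is a tensor product of two copies of $\chiobs$ on separate legs, so by the tensorization of $\E_{\rho^\on}$ the expectation factors as $\E_{\rho \otimes \rho}[\chiobs]^2 = \trtwo{\rho}{\rho}^2$, using $\E_{\rho \otimes \rho}[\chiobs] = \DBchi{\rho}{\sigma}+1 = \trtwo{\rho}{\rho}$ (by the definition of $\chiobs$ together with \Cref{prop:alt-chisq-formula}). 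For a pair sharing one index the product has the form $\chiobsij{a}{b}\chiobsij{b}{c}$ on three distinct legs, and I claim its expectation is $\trthree{\rho}{\rho}{\rho}$ as in \Cref{def:trthree}.

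Verifying this last claim is the single point requiring care beyond the purity argument, and I expect it to be the main (mild) obstacle. In the purity case every product of two transpositions sharing an index is a $3$-cycle, whose expectation is $\tr(\rho^3)$ regardless of the configuration; here $\chiobs$ is not symmetric under arbitrary relabelings, so one must check that \emph{all} shared-index configurations --- which index is shared, and in which order the two factors multiply --- yield the same value $\trthree{\rho}{\rho}{\rho}$. I would handle this by tracking the single ``through-line'' label that passes through both scaled swaps: that label appears in both denominators and plays the role of the center index $i$ in \Cref{def:trthree}, while the cyclic invariance of the numerator $\rho_{ij}\rho_{jk}\rho_{ki}$ absorbs the remaining relabeling. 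A direct computation of, say, $\E_{\rho^\on}[\chiobsij{1}{2}\chiobsij{2}{3}]$ and $\E_{\rho^\on}[\chiobsij{1}{2}\chiobsij{1}{3}]$ confirms that both equal $\trthree{\rho}{\rho}{\rho}$.

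Finally I would assemble $\Var_{\rho^\on}[Y] = \E_{\rho^\on}[Y^2] - \E_{\rho^\on}[Y]^2$. Since $\E_{\rho^\on}[Y] = \trtwo{\rho}{\rho}$, so that $\E_{\rho^\on}[Y]^2 = \trtwo{\rho}{\rho}^2$, and since the three overlap weights sum to $1$, the disjoint contribution $\tfrac{\binom{n-2}{2}}{\binom{n}{2}}\trtwo{\rho}{\rho}^2$ combines with $-\trtwo{\rho}{\rho}^2$ to leave precisely
\[
    \frac{1}{\binom{n}{2}}\parens*{\tr(\chiobs^2 \rho^{\otimes 2}) - \trtwo{\rho}{\rho}^2} + \frac{2(n-2)}{\binom{n}{2}}\parens*{\trthree{\rho}{\rho}{\rho} - \trtwo{\rho}{\rho}^2},
\]
which is the stated formula. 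By \Cref{prop:mean-var-chi} this equals $\Var_{\rho^\on}[\chiobsavg]$, completing the proof.
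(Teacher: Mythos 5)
Your proposal is correct and follows essentially the same route as the paper, which simply asserts that the calculation ``proceeds exactly as'' the purity-variance computation in \Cref{lem:purity-variance}; your overlap-pattern decomposition with weights $\tfrac{1}{\binom{n}{2}}$, $\tfrac{2(n-2)}{\binom{n}{2}}$, $\tfrac{\binom{n-2}{2}}{\binom{n}{2}}$ is exactly that argument. You correctly flag, and resolve, the one subtlety the paper leaves implicit --- that every shared-index configuration and multiplication order of the two scaled swaps yields the same value $\trthree{\rho}{\rho}{\rho}$, thanks to the shared letter appearing in both denominators and the cyclic invariance of $\rho_{ij}\rho_{jk}\rho_{ki}$.
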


Introducing the shorthand $\Delta = \rho - \sigma$, we analyze the terms in \Cref{prop:chi-sq-var}.
\begin{proposition}                                     \label{prop:term1}
    $\trthree{\rho}{\rho}{\rho} - \trtwo{\rho}{\rho}^2 = \trthree{\Delta}{\Delta}{\Delta} + \trthree{\Delta}{\sigma}{\Delta} - \DBchi{\rho}{\sigma}^2$.
\end{proposition}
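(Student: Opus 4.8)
The plan is to expand everything around $\Delta = \rho - \sigma$ by substituting $\rho = \sigma + \Delta$ and invoking the multilinearity of the forms $\trthree{\cdot}{\cdot}{\cdot}$ and $\trtwo{\cdot}{\cdot}$. First I would expand $\trthree{\rho}{\rho}{\rho}$ into the $2^3 = 8$ terms obtained by choosing $\sigma$ or $\Delta$ in each of the three slots. The goal is then to evaluate each term using the contraction identity \Cref{prop:trthree-contract}, namely $\trthree{\sigma}{S}{T} = \trtwo{S}{T} = \trthree{S}{T}{\sigma}$, together with \Cref{prop:trtwo-contract} in the form $\trtwo{S}{\sigma} = \tr(S) = \trtwo{\sigma}{S}$ and the elementary fact that $\tr(\Delta) = \tr(\rho) - \tr(\sigma) = 0$.

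Carrying this out, whenever $\sigma$ occupies the first or third slot the term collapses to a bilinear form; combined with $\tr(\Delta) = 0$, this makes the three terms containing a single $\Delta$ vanish and gives $\trthree{\sigma}{\sigma}{\sigma} = \trtwo{\sigma}{\sigma} = 1$. The two terms $\trthree{\Delta}{\Delta}{\sigma}$ and $\trthree{\sigma}{\Delta}{\Delta}$ each contract to $\trtwo{\Delta}{\Delta} = \DBchi{\rho}{\sigma}$ (recall $\DBchi{\rho}{\sigma} = \trtwo{\Delta}{\Delta}$ from the expansion preceding \Cref{prop:alt-chisq-formula}), while $\trthree{\Delta}{\Delta}{\Delta}$ is left untouched. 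The one genuinely delicate term is $\trthree{\Delta}{\sigma}{\Delta}$, in which $\sigma$ sits in the \emph{middle} slot: here neither contraction in \Cref{prop:trthree-contract} applies, so this term does not simplify and must survive into the final answer --- which is precisely why it appears explicitly in the statement. Collecting the surviving pieces yields $\trthree{\rho}{\rho}{\rho} = 1 + 2\DBchi{\rho}{\sigma} + \trthree{\Delta}{\sigma}{\Delta} + \trthree{\Delta}{\Delta}{\Delta}$.

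Finally I would subtract $\trtwo{\rho}{\rho}^2$. Since \Cref{prop:alt-chisq-formula} gives $\trtwo{\rho}{\rho} = \DBchi{\rho}{\sigma} + 1$, we have $\trtwo{\rho}{\rho}^2 = \DBchi{\rho}{\sigma}^2 + 2\DBchi{\rho}{\sigma} + 1$; subtracting this from the expression above cancels the constant $1$ and the two copies of $\DBchi{\rho}{\sigma}$, leaving exactly $\trthree{\Delta}{\Delta}{\Delta} + \trthree{\Delta}{\sigma}{\Delta} - \DBchi{\rho}{\sigma}^2$, as claimed. The only real obstacle is careful bookkeeping: one must respect the asymmetry of $\trthree{\cdot}{\cdot}{\cdot}$, since its contraction property collapses only the two outer slots, whereas the middle-$\sigma$ term behaves differently and is the lone nontrivial survivor.
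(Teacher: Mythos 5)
Your proposal is correct and is exactly the paper's argument: the paper's proof consists of the single sentence that the identity is immediate from writing $\rho = \Delta + \sigma$, multilinearity, the contraction properties of \Cref{prop:trtwo-contract,prop:trthree-contract}, $\tr(\rho)=\tr(\sigma)=1$, and $\DBchi{\rho}{\sigma}=\trtwo{\Delta}{\Delta}$ --- precisely the expansion you carry out. Your bookkeeping of the eight terms (in particular, noting that only the middle-slot $\sigma$ term $\trthree{\Delta}{\sigma}{\Delta}$ escapes contraction) is accurate.
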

\begin{proof}
    This is immediate from writing $\rho = \Delta + \sigma$ and using: multilinearity of $\trthree{\cdot}{\cdot}{\cdot}$; the contraction properties \Cref{prop:trtwo-contract,prop:trthree-contract}; $\tr(\rho) = \tr(\sigma) = 1$; and, $\DBchi{\rho}{\sigma} = \trtwo{\Delta}{\Delta}$.
\end{proof}
We will ignore the subtracted $\DBchi{\rho}{\sigma}^2$ and use the following simple bound for $\trthree{\Delta}{\sigma}{\Delta}$:
\begin{proposition}                                     \label{prop:term2}
    $\trthree{\Delta}{\sigma}{\Delta} \leq 2\DBchi{\rho}{\sigma}$.
\end{proposition}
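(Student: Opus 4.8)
The plan is to expand both sides explicitly in the entries of $\Delta = \rho - \sigma$ and then compare them coefficient-by-coefficient. First I would substitute $R = T = \Delta$ and $S = \sigma$ into \Cref{def:trthree}. Because $\sigma = \diag(\beta_1,\dots,\beta_d)$ is diagonal, the middle factor $\sigma_{jk} = \beta_j\delta_{jk}$ forces $k = j$ and collapses the triple sum to a double sum. Moreover the two denominators $\avg\{\beta_i,\beta_j\}$ and $\avg\{\beta_i,\beta_k\}$ coincide once $k=j$, leaving
\[
    \trthree{\Delta}{\sigma}{\Delta} = \sum_{i,j=1}^d \frac{\beta_j\, \Delta_{ij}\Delta_{ji}}{\avg\{\beta_i,\beta_j\}^2}.
\]
Next I would use that $\Delta$ is Hermitian (being the difference of two density matrices), so $\Delta_{ji} = \overline{\Delta_{ij}}$ and hence $\Delta_{ij}\Delta_{ji} = |\Delta_{ij}|^2 \ge 0$, turning the above into a sum of nonnegative terms.

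For the right-hand side, I would recall from \Cref{prop:alt-chisq-formula} (together with $\DBchi{\rho}{\sigma} = \trtwo{\Delta}{\Delta}$) the explicit diagonal formula
\[
    \DBchi{\rho}{\sigma} = \sum_{i,j=1}^d \frac{|\Delta_{ij}|^2}{\avg\{\beta_i,\beta_j\}}.
\]
The claimed inequality then reduces to a term-by-term comparison against the common nonnegative weights $|\Delta_{ij}|^2$: it suffices to verify, for every pair $(i,j)$, that
\[
    \frac{\beta_j}{\avg\{\beta_i,\beta_j\}^2} \le \frac{2}{\avg\{\beta_i,\beta_j\}}.
\]
Since $\avg\{\beta_i,\beta_j\} = (\beta_i+\beta_j)/2 > 0$, clearing one factor of the denominator turns this into $\beta_j \le \beta_i + \beta_j$, which holds because $\sigma$ is a density matrix and so $\beta_i \ge 0$. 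Summing the inequality against $|\Delta_{ij}|^2$ yields $\trthree{\Delta}{\sigma}{\Delta} \le 2\DBchi{\rho}{\sigma}$.

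The only point requiring care is the bookkeeping of the $\avg\{\cdot,\cdot\}$ normalizations: the contraction at $S=\sigma$ produces a factor $\beta_j$ that cancels exactly one of the two denominators, so that the target has a single surviving factor $\avg\{\beta_i,\beta_j\}$, which is precisely what lets the coefficient comparison close with the constant $2$. There is no genuine obstacle here; the bound is entirely term-by-term, and the constant $2$ is not tight (it is saturated only in the degenerate direction $\beta_i = 0$). If one prefers to avoid the asymmetry in $i$ and $j$, one could instead symmetrize and use $\tfrac12(\beta_i+\beta_j) = \avg\{\beta_i,\beta_j\}$ directly, but the one-sided estimate already suffices.
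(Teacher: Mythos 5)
Your proof is correct and follows essentially the same route as the paper: substitute $S=\sigma$ into \Cref{def:trthree}, use diagonality to collapse to a double sum, use Hermiticity of $\Delta$, and bound $\beta_j/\avg\{\beta_i,\beta_j\} \leq 2$ termwise. The only cosmetic difference is that you spell out the coefficient comparison that the paper states in one line.
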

\begin{proof}
    Recalling \Cref{def:trthree} and using $\sigma = \diag(\beta_1, \dots, \beta_d)$ we get
    \[
        \trthree{\Delta}{\sigma}{\Delta} = \sum_{i,j=1}^d \frac{\Delta_{ij}\beta_j\Delta_{ji}}{\avg\{\beta_i,\beta_j\}^2} \leq 2 \sum_{i,j=1}^d \frac{|\Delta_{ij}|^2}{\avg\{\beta_i,\beta_j\}} = 2\DBchi{\rho}{\sigma},
    \]
    where the inequality used  $\frac{\beta_j}{\avg\{\beta_i,\beta_j\}}  \leq 2$.
\end{proof}
We now come to the main term in \Cref{prop:term1}:
\begin{proposition}                                     \label{prop:term3}
    Assume the smallest eigenvalue of $\sigma$ is at least $\delta$.  Then
    \[
        \trthree{\Delta}{\Delta}{\Delta} \leq \sqrt{2d/\delta}\cdot \DBchi{\rho}{\sigma}^{3/2}.
    \]
\end{proposition}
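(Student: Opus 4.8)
The plan is to symmetrize the weights so that $\DBchi{\rho}{\sigma}$ becomes a plain Hilbert--Schmidt norm, then isolate and bound a scalar ``defect'' factor appearing on each term. Write $w_{ij} = \avg\{\beta_i,\beta_j\}^{-1}$ and set $M_{ij} = \sqrt{w_{ij}}\,\Delta_{ij}$. Because $w$ is symmetric and $\Delta$ is Hermitian, $M$ is Hermitian, and the diagonal formula for the Bures $\chi^2$-divergence gives $\sum_{i,j}|M_{ij}|^2 = \sum_{i,j} w_{ij}|\Delta_{ij}|^2 = \DBchi{\rho}{\sigma}$. Substituting $\Delta_{ij} = M_{ij}/\sqrt{w_{ij}}$ into \Cref{def:trthree} and using $w_{ik}=w_{ki}$, I would rewrite
\[
    \trthree{\Delta}{\Delta}{\Delta} = \sum_{i,j,k=1}^d P_{ijk}\,M_{ij}M_{jk}M_{ki}, \qquad P_{ijk} = \sqrt{\frac{w_{ij}\,w_{ik}}{w_{jk}}}.
\]

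The key step is the pointwise bound $P_{ijk}\le 1/\sqrt{\delta}$ for all $i,j,k$. Squaring and clearing denominators, this is the elementary inequality $(\beta_i+\beta_j)(\beta_i+\beta_k)\ge 2\delta(\beta_j+\beta_k)$ under the hypothesis that every eigenvalue of $\sigma$ is at least $\delta$. Since the left-hand product is increasing in $\beta_i$ and $\beta_i\ge\delta$, it suffices to treat $\beta_i=\delta$, where one simply observes
\[
    (\delta+\beta_j)(\delta+\beta_k) - 2\delta(\beta_j+\beta_k) = (\delta-\beta_j)(\delta-\beta_k) \ge 0,
\]
the last inequality holding because both factors are nonpositive when $\beta_j,\beta_k\ge\delta$.

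With the factor bound in hand, I would finish by the triangle inequality followed by a single operator-norm estimate. Factoring out the uniform bound on $P_{ijk}$,
\[
    \bigl|\trthree{\Delta}{\Delta}{\Delta}\bigr| \le \frac{1}{\sqrt{\delta}} \sum_{i,j,k=1}^d |M_{ij}|\,|M_{jk}|\,|M_{ki}| = \frac{1}{\sqrt{\delta}}\,\tr(N^3),
\]
where $N$ is the entrywise absolute value of $M$; since $M$ is Hermitian, $N$ is real symmetric and entrywise nonnegative with $\|N\|_{\mathrm{HS}}^2 = \sum_{i,j}|M_{ij}|^2 = \DBchi{\rho}{\sigma}$. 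Writing $\tr(N^3) = \sum_\ell \lambda_\ell(N)^3 \le \|N\|_\infty \|N\|_{\mathrm{HS}}^2 \le \|N\|_{\mathrm{HS}}^3$ then yields $\trthree{\Delta}{\Delta}{\Delta}\le \delta^{-1/2}\,\DBchi{\rho}{\sigma}^{3/2}$, which already implies the claim (indeed the factor $\sqrt{2d}$ is slack).

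I expect the pointwise inequality $P_{ijk}\le\delta^{-1/2}$ to be the main point rather than an obstacle: it is precisely what makes the ``weight exponent'' come out to $1/2$ instead of the $1$ produced by the naive bound $w_{ij}\le 1/\delta$ applied to both weights, which only gives the weaker $\delta^{-1}\DBchi{\rho}{\sigma}^{3/2}$ (insufficient once $\delta<1/(2d)$). The only other delicate point is passing to the entrywise absolute value $N$, which converts the sign-indefinite trilinear form into something controllable by the symmetric-matrix inequality $\tr(N^3)\le\|N\|_\infty\|N\|_{\mathrm{HS}}^2$.
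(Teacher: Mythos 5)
Your proof is correct, and it takes a genuinely different route from the paper's. The paper applies Cauchy--Schwarz directly to the triple sum, pairing $|\Delta_{ij}||\Delta_{ki}|$ against $|\Delta_{jk}|$: the first square root collapses to $\DBchi{\rho}{\sigma}^2$, while the second picks up a factor of $d$ from a free sum over the index $i$, and the weight mismatch is handled by the inequality $\avg\{\beta_i,\beta_j\}\cdot\avg\{\beta_i,\beta_k\} \geq (\delta/2)\cdot\avg\{\beta_j,\beta_k\}$ --- yielding exactly $\sqrt{2d/\delta}\cdot\DBchi{\rho}{\sigma}^{3/2}$. You instead absorb the weights symmetrically into a Hermitian matrix $M$ with $\|M\|_{\mathrm{HS}}^2 = \DBchi{\rho}{\sigma}$, prove the pointwise bound $P_{ijk} \leq \delta^{-1/2}$ (your verification via $(\delta-\beta_j)(\delta-\beta_k)\geq 0$ is clean and in fact sharpens the paper's weight inequality by a factor of $2$), and then control the remaining trilinear sum spectrally via $\tr(N^3) \leq \|N\|_\infty\|N\|_{\mathrm{HS}}^2 \leq \|N\|_{\mathrm{HS}}^3$. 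The payoff is a dimension-free bound $\delta^{-1/2}\DBchi{\rho}{\sigma}^{3/2}$, stronger than the stated claim by a factor of $\sqrt{2d}$; the paper's approach is more pedestrian but suffices because the $2d^2/\binom{n}{2}$ term in \Cref{thm:chi-var} dominates the copy complexity either way, so your improvement, while real, does not change the final $n = O(d/\eps^2)$ in \Cref{thm:robust-chi-sq}. One small point worth making explicit if you write this up: passing from $M$ to its entrywise absolute value $N$ preserves symmetry but not positive semidefiniteness, so the step $\tr(N^3) = \sum_\ell \lambda_\ell(N)^3 \leq \max_\ell|\lambda_\ell(N)|\cdot\sum_\ell\lambda_\ell(N)^2$ should be justified exactly as you indicate, by bounding $\sum_\ell\lambda_\ell^3 \leq \sum_\ell|\lambda_\ell|^3$ first; as stated it is fine.
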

\begin{proof}
    Applying Cauchy--Schwarz to the formula in \Cref{def:trthree} gives
    \[
        \trthree{\Delta}{\Delta}{\Delta}
        \leq \sqrt{\sum_{i, j, k=1}^d \frac{|\Delta_{i j}|^2 |\Delta_{k i}|^2}{\avg\{\beta_i,\beta_j\} \cdot \avg\{\beta_i,\beta_k\}}} \cdot
        \sqrt{\sum_{i, j, k=1}^d \frac{|\Delta_{j k}|^2}{\avg\{\beta_i,\beta_j\} \cdot \avg\{\beta_i,\beta_k\}}}.
    \]
The sum inside the first square-root above is
\begin{equation*}
    \sum_{i=1}^d \left(\sum_{j=1}^d \frac{|\Delta_{i j}|^2}{\avg\{\beta_i,\beta_j\}}\right)^2
\leq \parens*{\sum_{i, j=1}^d \frac{|\Delta_{i j}|^2}{\avg\{\beta_i,\beta_j\}}}^2
= \DBchi{\rho}{\sigma}^2.
\end{equation*}
For the sum inside the second square-root above, we use the elementary fact that
\[
    \avg\{\beta_i,\beta_j\}\cdot\avg\{\beta_i,\beta_k\} \geq (\delta/2) \cdot \avg\{\beta_j,\beta_k\}
\]
when $\delta \leq \beta_i, \beta_j, \beta_k \leq 1$.  Thus this second sum is at most
\[
     d \cdot (2/\delta) \cdot \sum_{j, k} \frac{|\Delta_{j k}|^2}{\avg\{\beta_j,\beta_k\}} = (2d/\delta)\cdot \DBchi{\rho}{\sigma}.
\]
Combining the two bounds above completes the proof.
\end{proof}

We now analyze the first term in \Cref{prop:chi-sq-var}, ignoring the subtracted $\trtwo{\rho}{\rho}^2$:
\begin{proposition}                                     \label{prop:term4}
    Assume the smallest eigenvalue of $\sigma$ is at least $\delta$.  Then
    \[
        \tr(\chiobs^2 \rho^{\otimes 2}) \leq 2d^2 + (2d/\delta)\cdot\DBchi{\rho}{\sigma}.
    \]
\end{proposition}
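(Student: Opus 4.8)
The plan is to exploit the fact, recorded in \eqref{eqn:chiobs-squared}, that $\chiobs^2$ is a \emph{diagonal} operator on $(\C^d)^{\otimes 2}$. Since $\tr(\chiobs^2 \rho^{\otimes 2})$ then depends only on the diagonal entries of $\rho^{\otimes 2}$, and $\bra{ij}\rho^{\otimes 2}\ket{ij} = \rho_{ii}\rho_{jj}$, I would first establish the clean starting formula
\[
    \tr(\chiobs^2 \rho^{\otimes 2}) = \sum_{i,j=1}^d \frac{\rho_{ii}\rho_{jj}}{\avg\{\beta_i,\beta_j\}^2}.
\]
The key point is that only the \emph{nonnegative} diagonal entries $\rho_{ii} = \bra{i}\rho\ket{i}$ appear, which is what makes the subsequent estimates possible; the off-diagonal structure of $\rho$ plays no role here.

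Next I would decouple the $i$ and $j$ sums. Using the AM--GM bound $\avg\{\beta_i,\beta_j\}^2 = \parens*{\tfrac{\beta_i+\beta_j}{2}}^2 \geq \beta_i\beta_j$ in the denominator (legitimate since each $\rho_{ii}\geq 0$), the double sum factors as
\[
    \tr(\chiobs^2 \rho^{\otimes 2}) \leq \sum_{i,j=1}^d \frac{\rho_{ii}\rho_{jj}}{\beta_i\beta_j} = \parens*{\sum_{i=1}^d \frac{\rho_{ii}}{\beta_i}}^2.
\]
Writing $\Delta = \rho - \sigma$, so that $\rho_{ii}/\beta_i = 1 + \Delta_{ii}/\beta_i$, the right-hand side becomes $(d + T)^2$ where $T = \sum_i \Delta_{ii}/\beta_i$. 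To control $T$, I would apply Cauchy--Schwarz as $|T| \leq \bigl(\sum_i \Delta_{ii}^2/\beta_i\bigr)^{1/2}\bigl(\sum_i 1/\beta_i\bigr)^{1/2}$, then bound the first factor by $\DBchi{\rho}{\sigma}$ (it consists of the $i=j$ diagonal terms of the nonnegative sum $\DBchi{\rho}{\sigma} = \sum_{i,j}|\Delta_{ij}|^2/\avg\{\beta_i,\beta_j\}$, using $\avg\{\beta_i,\beta_i\} = \beta_i$) and the second factor by $d/\delta$ (since every $\beta_i \geq \delta$). This yields $|T| \leq \sqrt{d\,\DBchi{\rho}{\sigma}/\delta}$.

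The delicate final step, where the advertised constants must be made to match exactly, is the expansion $(d+T)^2 = d^2 + 2dT + T^2$. I would bound the cross term by a \emph{second} application of AM--GM, $2d|T| = 2\sqrt{d^2 \cdot (d\,\DBchi{\rho}{\sigma}/\delta)} \leq d^2 + d\,\DBchi{\rho}{\sigma}/\delta$, and the quadratic term by $T^2 \leq d\,\DBchi{\rho}{\sigma}/\delta$, so that altogether $(d+T)^2 \leq 2d^2 + (2d/\delta)\,\DBchi{\rho}{\sigma}$, as required. The main thing to watch is precisely this cross term: a cruder treatment of $2dT$ (e.g.\ keeping it as a $d^{3/2}$-type quantity) would fail to collapse into the clean form, and it is the two AM--GM steps together — one on the denominator, one on $2dT$ — that convert the factorized estimate into the stated $2d^2 + (2d/\delta)\,\DBchi{\rho}{\sigma}$.
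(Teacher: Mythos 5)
Your proposal is correct and follows essentially the same route as the paper: the same diagonal formula for $\tr(\chiobs^2\rho^{\otimes 2})$, the same AM--GM step $\avg\{\beta_i,\beta_j\}^2 \geq \beta_i\beta_j$ to factor the double sum, the same decomposition $\rho_{ii}/\beta_i = 1 + \Delta_{ii}/\beta_i$, and the same Cauchy--Schwarz bound $T^2 \leq (d/\delta)\,\DBchi{\rho}{\sigma}$. The only cosmetic difference is that the paper applies $(d+T)^2 \leq 2d^2 + 2T^2$ directly where you expand the square and handle the cross term with a second AM--GM; these are the same estimate.
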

\begin{proof}
    Using $\avg\{\beta_i,\beta_j\} \geq \sqrt{\beta_i \beta_j}$, we may bound $\tr(\chiobs^2 \rho^{\otimes 2})$ as
    \[
        \sum_{i, j=1}^d \frac{\rho_{i i} \rho_{j j}}{\avg\{\beta_i,\beta_j\}^2}
        \leq \sum_{i, j=1}^d \frac{\rho_{i i} \rho_{j j}}{\beta_i\beta_j} = \parens*{\sum_{i=1}^d\frac{\rho_{ii}}{\beta_i}}^2
      = \parens*{d+\sum_{i=1}^d \frac{\Delta_{i i}}{\beta_i}}^2 \leq 2d^2 + 2\parens*{\sum_{i=1}^d \frac{|\Delta_{i i}|}{\beta_i}}^2.
    \]
    Now using $\sqrt{\beta_i} \geq \sqrt{\delta}$ and then Cauchy--Schwarz,
    \begin{align*}
        \parens*{\sum_{i=1}^d \frac{|\Delta_{i i}|}{\beta_i}}^2 \leq (1/\delta) \cdot \parens*{\sum_{i=1}^d \frac{|\Delta_{i i}|}{\sqrt{\beta_i}}}^2 &\leq (d/\delta) \cdot \sum_{i=1}^d \frac{|\Delta_{i i}|^2}{\beta_i} \\ &\leq (d/\delta) \cdot \sum_{i,j=1}^d \frac{|\Delta_{i j}|^2}{\avg\{\beta_i,\beta_j\}} = (d/\delta) \cdot \DBchi{\rho}{\sigma}. \qedhere
    \end{align*}
\end{proof}
Combining all propositions in this section, we have established the following:
\begin{theorem}                                     \label{thm:chi-var}
    Assume the smallest eigenvalue of $\sigma$ is at least $\delta$.   Then
    \[
        \Var_{\rho^\on}[\chiobsavg] \leq \frac{1}{\binom{n}{2}} \cdot \Bigl(2d^2 + (2d/\delta)\cdot \DBchi{\rho}{\sigma}\Bigr) + \frac{2(n-2)}{\binom{n}{2}} \cdot \Bigl(\sqrt{2d/\delta} \cdot \DBchi{\rho}{\sigma}^{3/2} + 2\DBchi{\rho}{\sigma}\Bigr).
    \]
\end{theorem}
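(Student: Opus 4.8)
The plan is to assemble the bound directly from the exact variance formula in \Cref{prop:chi-sq-var} together with the four term-by-term estimates proved immediately beforehand, treating the two summands of the variance separately and using monotonicity (dropping nonnegative subtracted quantities) to pass to an upper bound.

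First I would address the leading $1/\binom{n}{2}$ term, whose content in \Cref{prop:chi-sq-var} is $\tr(\chiobs^2 \rho^{\otimes 2}) - \trtwo{\rho}{\rho}^2$. Since \Cref{prop:alt-chisq-formula} gives $\trtwo{\rho}{\rho} = \DBchi{\rho}{\sigma} + 1 \geq 1$, the subtracted square is nonnegative, so discarding it only increases the expression. Applying \Cref{prop:term4} — which is exactly where the hypothesis $\lambda_{\min}(\sigma) \geq \delta$ enters — then bounds the surviving $\tr(\chiobs^2 \rho^{\otimes 2})$ by $2d^2 + (2d/\delta)\DBchi{\rho}{\sigma}$, producing the first parenthesized factor of the theorem verbatim.

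Next I would handle the $2(n-2)/\binom{n}{2}$ term, whose content is $\trthree{\rho}{\rho}{\rho} - \trtwo{\rho}{\rho}^2$. Here I would first rewrite this via \Cref{prop:term1} as $\trthree{\Delta}{\Delta}{\Delta} + \trthree{\Delta}{\sigma}{\Delta} - \DBchi{\rho}{\sigma}^2$, then discard the nonnegative $\DBchi{\rho}{\sigma}^2$ (again yielding an upper bound). The two surviving pieces are controlled by \Cref{prop:term3} (giving $\trthree{\Delta}{\Delta}{\Delta} \leq \sqrt{2d/\delta}\cdot\DBchi{\rho}{\sigma}^{3/2}$, which likewise needs $\lambda_{\min}(\sigma) \geq \delta$) and \Cref{prop:term2} (giving $\trthree{\Delta}{\sigma}{\Delta} \leq 2\DBchi{\rho}{\sigma}$). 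Their sum is exactly the second parenthesized factor, and adding the two bounded summands reproduces the claimed inequality.

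Since every ingredient is already in place, there is no genuine obstacle; the proof is a mechanical combination. The only point requiring care is the sign bookkeeping: both discarded quantities, $\trtwo{\rho}{\rho}^2$ and $\DBchi{\rho}{\sigma}^2$, are manifestly nonnegative, so dropping them is legitimate for an upper bound, and the single standing hypothesis $\lambda_{\min}(\sigma) \geq \delta$ is precisely what \Cref{prop:term3} and \Cref{prop:term4} demand.
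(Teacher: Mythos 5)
Your proposal is correct and is exactly the paper's argument: the paper derives \Cref{thm:chi-var} by ``combining all propositions in this section,'' i.e., substituting \Cref{prop:term1} into the second summand of \Cref{prop:chi-sq-var}, discarding the nonnegative subtracted squares $\trtwo{\rho}{\rho}^2$ and $\DBchi{\rho}{\sigma}^2$, and invoking \Cref{prop:term2,prop:term3,prop:term4} exactly as you describe. Your sign bookkeeping and the observation that the eigenvalue hypothesis enters only through \Cref{prop:term3,prop:term4} are both accurate.
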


\subsection{Consequences for testing}   \label{sec:chi2-consequences}
Assume $\sigma$ is a fixed known density matrix, and we wish to estimate $\DBchi{\rho}{\sigma}$ given copies of an unknown density matrix~$\rho$.  Since we may first conjugate each copy of $\rho$ by a unitary that diagonalizes $\sigma$, we may assume without loss of generality that $\sigma$ is diagonal.  Now the average $\chi^2$ observable is an unbiased estimator for $\DBchi{\rho}{\sigma}$, and \Cref{thm:chi-var} bounds its variance provided $\sigma$'s eigenvalues are not too small.  Then from \Cref{lem:chebyshev} we immediately obtain \Cref{thm:robust-chi-sq}.

As mentioned, a corollary of \Cref{thm:robust-chi-sq} is our main \Cref{thm:maybe-main}, a robust ``far-in-fidelity vs.\ close in $\chi^2$-divergence'' tester with \emph{no} assumption about $\sigma$'s eigenvalues.  For convenience we restate and prove this theorem in the contrapositive and in terms of the squared Bures distance (which, recall, is exactly half the infidelity and is upper-bounded by the $\chi^2$-divergence):
\begin{corollary}[Equivalent to \Cref{thm:maybe-main}]                                     \label{cor:robust-bures}
    Fix a $d$-dimensional mixed state~$\sigma$.  Then there is an algorithm that, given $n = O(d/\eps)$ copies of~$\rho$, (whp) outputs ``close'' if $\DBchi{\rho}{\sigma} \leq .49 \eps$ and outputs ``far'' if $\DBsq{\rho}{\sigma} > .5\eps$.
\end{corollary}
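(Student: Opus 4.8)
The plan is to reduce the general (possibly ill-conditioned)~$\sigma$ to the well-conditioned setting of \Cref{thm:robust-chi-sq} by \emph{mixing with the maximally mixed state}. Fix a small mixing parameter $\gamma = \Theta(\eps)$ and set $\rho' = (1-\gamma)\rho + \gamma \Id/d$ and $\sigma' = (1-\gamma)\sigma + \gamma \Id/d$. Given the $n$ copies of~$\rho$ we can simulate copies of~$\rho'$ for free: for each copy, with probability $1-\gamma$ we keep it, and with probability~$\gamma$ we discard it and substitute a freshly prepared~$\Id/d$. The smoothed target~$\sigma'$ now has smallest eigenvalue at least $\gamma/d = \Theta(\eps/d)$, so we may run the tester of \Cref{thm:robust-chi-sq} on~$\rho'$ against~$\sigma'$, with its internal parameter~$\eta$ chosen so that $.99\,\eta^2 \ge .49\eps$ while $\eta^2 < .5\eps$ (possible since $.49/.99 < .5$). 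This meets the eigenvalue hypothesis and uses $O(d/\eta^2) = O(d/\eps)$ copies, as desired; our algorithm simply reports whatever this inner tester reports.

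Completeness is immediate from the data processing inequality. The depolarizing map $\xi \mapsto (1-\gamma)\xi + \gamma \Id/d$ is a quantum channel, and the Bures $\chi^2$-divergence is a measured $f$-divergence, hence monotone under channels; so if $\DBchi{\rho}{\sigma} \le .49\eps$ then $\DBchi{\rho'}{\sigma'} \le .49\eps \le .99\,\eta^2$, and the inner tester outputs ``close'' whp.

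The soundness direction is the crux. Here we are told only that $\DBsq{\rho}{\sigma} > .5\eps$, and we must guarantee $\DBchi{\rho'}{\sigma'} > \eta^2$ so that the inner tester outputs ``far''. The first step is the universal ordering $\DBchi{\rho'}{\sigma'} \ge \DBsq{\rho'}{\sigma'}$, which reduces the task to lower-bounding the \emph{smoothed} Bures distance. The difficulty is that the obvious tools push the wrong way: data processing gives $\DBsq{\rho'}{\sigma'} \le \DBsq{\rho}{\sigma}$, and joint concavity of the fidelity likewise yields only the \emph{upper} bound $1 - \Fid{\rho'}{\sigma'} \le (1-\gamma)(1-\Fid{\rho}{\sigma})$ on the smoothed infidelity. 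What is actually needed is a matching \emph{lower} bound, i.e.\ a two-sided estimate showing that mixing with $\Id/d$ shrinks the infidelity by at most a $(1 - O(\gamma))$ factor, at least in the relevant regime $1 - \Fid{\rho}{\sigma} = \Theta(\eps)$. I expect this to follow from an explicit estimate on $\Fid{\rho'}{\sigma'} = \|\sqrt{\rho'}\sqrt{\sigma'}\|_1$ in terms of the spectra of $\rho$ and~$\sigma$, exploiting that $\gamma$ is small. This sharp behavior of the infidelity under depolarizing smoothing is the single genuinely technical ingredient, and I expect it to be the main obstacle.

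Granting it, the argument closes: since $\gamma = \Theta(\eps)$ may be taken with a small constant, the multiplicative loss costs only $O(\eps^2)$ in absolute terms, which comfortably fits inside the gap between $.49\eps$ and $.5\eps$. Hence $\DBsq{\rho'}{\sigma'} \ge (1 - O(\eps)) \cdot .5\eps > \eta^2$, and the inner tester outputs ``far'' whp. Combining the two cases proves the corollary, and with it \Cref{thm:maybe-main}.
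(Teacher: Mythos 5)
Your reduction is exactly the paper's: depolarize both states by $\Theta(\eps)$ to force the eigenvalue condition of \Cref{thm:robust-chi-sq}, use data processing for completeness, and use $\DBchi{\cdot}{\cdot} \geq \DBsq{\cdot}{\cdot}$ to reduce soundness to lower-bounding the smoothed Bures distance. But the step you yourself flag as ``the single genuinely technical ingredient'' --- a lower bound showing depolarization shrinks the infidelity by at most a $(1-O(\gamma))$ multiplicative factor --- is left entirely unproven, and that is a genuine gap: without it the soundness case does not close. It is also the wrong thing to aim for. A multiplicative bound on the infidelity under smoothing is not needed, and it is not clear it holds uniformly (the natural perturbation bounds on fidelity are additive, of order $\sqrt{\gamma}$ in the Bures \emph{distance}, not of order $\gamma$ in the infidelity).

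The paper closes this step with an elementary additive argument that you should adopt. Since $\DB{\cdot}{\cdot}$ is a metric, the triangle inequality gives $\DB{\rho}{\sigma} \leq \DB{\rho}{\rho'} + \DB{\rho'}{\sigma'} + \DB{\sigma'}{\sigma}$. Each perturbation term is controlled by $\DBsq{\rho}{\rho'} \leq 2\Dtr{\rho}{\rho'} = \|\gamma(\Id/d - \rho)\|_1 \leq 2\gamma = 2c\eps$, so $\DB{\rho}{\rho'} \leq \sqrt{2c\eps}$ and likewise for $\sigma$. The point is that both the perturbation ($O(\sqrt{c\eps})$) and the quantity being perturbed ($\DB{\rho}{\sigma} > \sqrt{.5\eps}$) live at the $\sqrt{\eps}$ scale, so taking the constant $c$ small makes the loss an arbitrarily small \emph{fraction}: $\DB{\rho'}{\sigma'} > \sqrt{.5\eps} - 2\sqrt{2c\eps} > \sqrt{.495\eps}$, hence $\DBchi{\rho'}{\sigma'} \geq \DBsq{\rho'}{\sigma'} > .495\eps > \eta^2$. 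This replaces your conjectured fidelity estimate with two standard inequalities and finishes the proof; everything else in your write-up (the channel simulation, the completeness via data processing, the choice of $\eta$, the copy count $O(d/\eps)$) is correct and matches the paper.
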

\begin{proof}
    Let $\Phi_\eta$ denote the depolarizing channel, which maps a state $\nu \in \C^{d \times d}$ to the state $\Phi_\eta(\nu) = (1-\eta)\nu + \eta \Id/d$.  Define $\rho' = \Phi_{c\eps}(\rho)$ and $\sigma' = \Phi_{c\eps}(\sigma)$, where $c > 0$ is a small absolute constant to be chosen later.

    If $\DBchi{\rho}{\sigma} \leq .49\eps$ then $\DBchi{\rho'}{\sigma'} \leq .49\eps$ by the quantum data processing inequality.  On the other hand, in case $\DBsq{\rho}{\sigma} > .5 \eps$,
    \begin{equation}\label{eq:bures-is-a-metric-yknow}
        \sqrt{.5\epsilon} < \DB{\rho}{\sigma} \leq \DB{\rho}{\rho'} + \DB{\rho'}{\sigma'} + \DB{\sigma'}{\sigma}
    \end{equation}
    by the triangle inequality. We can bound the first of these terms by
    \begin{equation*}
        \DBsq{\rho}{\rho'}  \leq 2 \Dtr{\rho}{\rho'} = \norm*{\rho - \rho'}_1 = \norm*{c \eps \rho + c\eps \Id/d}_1 \leq 2c\eps,
    \end{equation*}
    where at the end we used the triangle inequality and $\|\rho\|_1 = \|\Id/d\|_1 = 1$. A similar argument shows that $\DBsq{\sigma}{\sigma'} \leq 2c\eps$; i.e., $\DB{\sigma}{\sigma'} \leq \sqrt{2c\eps}$.  Now taking $c$ sufficiently small, \eqref{eq:bures-is-a-metric-yknow}~implies $\DB{\rho'}{\sigma'} > \sqrt{.495\eps}$ and hence $\DBchi{\rho'}{\sigma'} \geq \DBsq{\rho'}{\sigma'} \geq .495\eps$.

    In summary, if $\DBchi{\rho}{\sigma} \leq .49\eps$ then $\DBchi{\rho'}{\sigma'} \leq .49\eps$, if $\DBsq{\rho}{\sigma} > .5\eps$ then $\DBchi{\rho'}{\sigma'} > .495\eps$, and all the eigenvalues of $\sigma'$ are at least $c \eps/d$.  Thus we can obtain the desired tester by first applying the depolarizing channel $\Phi_{c\eps}$ to the $n$ copies of $\rho$, producing $n$ copies of $\rho'$, and then using the tester from \Cref{thm:robust-chi-sq} with $\sigma'$ in place of $\sigma$ and $.5\eps$ in place of $\eps^2$.
\end{proof}

We can also use this corollary to test if an unknown state is diagonal:
\begin{theorem}\label{thm:diag-test}
    Given $n = O(d/\eps)$ copies of a $d$-dimensional mixed state~$\rho$, one can distinguish (whp) the case that $\rho$ is diagonal (in the standard basis) from the case that $\rho$ has infidelity more than~$\eps$ with every diagonal state.
\end{theorem}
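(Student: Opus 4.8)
The plan is to reduce the diagonality test to the robust fidelity-versus-$\chi^2$ tester of \Cref{cor:robust-bures} (equivalently \Cref{thm:maybe-main}), run against a carefully chosen diagonal reference state. The natural reference is $\diag(\rho)$, the diagonal part of~$\rho$: if $\rho$ is already diagonal then $\rho = \diag(\rho)$, while if $\rho$ is far in infidelity from \emph{every} diagonal state then it is in particular far from any diagonal reference we pick. The only difficulty is that $\diag(\rho)$ is unknown, so I would first spend one batch of copies estimating it, then run the tester against the estimate on a fresh, independent batch.

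\textbf{Step 1 (estimating the diagonal).} Measuring a single copy of~$\rho$ in the standard basis yields outcome~$i$ with probability $\rho_{ii}$, i.e.\ a sample from the distribution $p = (\rho_{11}, \dots, \rho_{dd})$ on~$[d]$; note this holds whether or not $\rho$ is diagonal. Using $n_1 = O(d/\eps)$ such measurements together with the classical ``add-one'' estimator, I would produce a full-rank diagonal density matrix $\sigma = \diag(\wh{q})$, where $\wh{q}$ is the add-one estimate of~$p$. By the bound~\eqref{eqn:classical-chisq} we have $\E[\dchisq{p}{\wh{q}}] \leq d/n_1$, so for a suitable constant hidden in $n_1 = O(d/\eps)$, Markov's inequality gives $\dchisq{p}{\wh{q}} \leq .49\eps$ with high probability.

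\textbf{Step 2 (running the tester).} On a fresh independent batch of $n_2 = O(d/\eps)$ copies of~$\rho$, I would run the \Cref{cor:robust-bures} tester with the fixed reference $\sigma = \diag(\wh{q})$ (fixed because it depends only on the first batch). If $\rho$ is diagonal, then $\rho = \diag(p)$, and since Bures $\chi^2$-divergence reduces to classical $\chi^2$-divergence on diagonal states (\Cref{prop:alt-chisq-formula}, which gives $\DBchi{\diag(p)}{\diag(\wh{q})} = \sum_i p_i^2/\wh{q}_i - 1 = \dchisq{p}{\wh{q}}$), we get $\DBchi{\rho}{\sigma} \leq .49\eps$, so the tester reports ``close'' whp. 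If instead $\rho$ has infidelity greater than~$\eps$ with every diagonal state, then in particular $1 - \Fid{\rho}{\sigma} > \eps$ since $\sigma$ is diagonal, whence $\DBsq{\rho}{\sigma} = 2(1-\Fid{\rho}{\sigma}) > 2\eps > .5\eps$ and the tester reports ``far'' whp. A union bound over the two batches completes the argument, with total copy complexity $n_1 + n_2 = O(d/\eps)$.

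The main obstacle is really just the correct choice of reference together with the observation that the diagonal part can be learned cheaply enough: the point is that $\chi^2$-estimation of a distribution on~$[d]$ costs only $O(d/\eps)$ samples (the add-one estimator of~\eqref{eqn:classical-chisq}), exactly matching the budget of \Cref{cor:robust-bures}, and that the ``far'' direction is automatic for \emph{any} diagonal reference precisely because the promise is taken with respect to all diagonal states simultaneously.
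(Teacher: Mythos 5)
Your proposal is correct and follows essentially the same route as the paper's own proof: measure one batch in the standard basis to learn the diagonal distribution $p$ via the add-one estimator with $\dchisq{p}{\wh{p}} \leq .49\eps$, then run the tester of \Cref{cor:robust-bures} against $\sigma = \diag(\wh{p})$ on a fresh batch, using that $\DBchi{\cdot}{\cdot}$ reduces to the classical $\chi^2$-divergence for diagonal states and that the ``far'' case is automatic for any diagonal reference. The only differences are presentational (you make the two-batch independence and the Markov step explicit, which the paper leaves implicit).
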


\begin{proof}
    Let $p = (\rho_{11}, \dots, \rho_{dd})$ denote the diagonal of~$\rho$, a probability distribution.  We can obtain a sample from $p$ given a copy of~$\rho$ simply by measuring $\rho$ in the standard basis.  As mentioned near \Cref{eqn:classical-chisq}, $O(d/\eps)$ samples suffice produce an estimate $\wh{p}$ of~$p$ that satisfies $\dchisq{p}{\widehat{p}_{\mathrm{diag}}} \leq .49\eps$ (whp).  The tester now applies \Cref{cor:robust-bures} with $\sigma = \diag(\wh{p})$, using another $O(d/\eps)$ samples.  If $\rho$ is diagonal, then $\DBchi{\rho}{\sigma} = \dchisq{p}{\wh{p}} \leq .49\eps$ and the tester outputs ``close'' (whp).  If~$\rho$ has infidelity more than~$\eps$ with every diagonal state, then in particular $1-\Fid{\rho}{\sigma} > \eps$; i.e., $\DBsq{\rho}{\sigma} > .5 \eps$, and the tester outputs ``far'' (whp).
\end{proof}

\section{Implementing the observables}\label{sec:implementing}

In this section, we give efficient algorithms implementing some of our observables.
In \Cref{sec:purity-implement}, we implement the purity observable from \Cref{sec:purity},
in \Cref{sec:implement-hilbert-schmidt}, we implement the Hilbert--Schmidt observable from \Cref{sec:hs},
and in \Cref{sec:alternative}, we implement a different, though related, observable for the Hilbert--Schmidt distance.

Our main tool is \emph{Schur--Weyl duality} from the representation theory of the symmetric and general linear groups. We assume familiarity with representation theory; see~\cite{GW09}.

\begin{notation}
Given a partition $\lambda \vdash n$,
we write $\SYT{\lambda}$ for the set of \emph{standard Young tableaus} of shape~$\lambda$
and $\SSYT{\lambda}{d}$ for the set of \emph{semistandard Young tableaus} of shape~$\lambda$ and alphabet~$[d]$.
\end{notation}

\begin{notation}\label{def:schur-weyl}
Recall the representations $\RepP(\pi)$ and $\RepQ(M)$ of the symmetric and general linear groups, respectively, which act on the vector space $(\C^{d})^{\otimes n}$.
Because these two commute with each other, $\RepP(\pi)\cdot\RepQ(M)$ is a representation of the product group $\Sym{n}\times \GL(d)$.
\emph{Schur--Weyl duality} describes how $(\C^{d})^{\otimes n}$ decomposes under this group action:
\begin{equation}\label{eq:schur-weyl}
(\C^{d})^{\otimes n} \cong \bigoplus_{\substack{\lambda \vdash n\\ \ell(\lambda) \leq d}} \Specht{\lambda} \otimes \Weyl{\lambda}{d},
\end{equation}
where $\Specht{\lambda}$ and $\Weyl{\lambda}{d}$ are the irreducible representations of the symmetric and general linear groups, respectively, corresponding to~$\lambda$.
We write $\sirrep{\lambda}(\pi)$
for the matrix
associated with the symmetric group irreducible representation
at the permutation $\pi \in \Sym{n}$.
\end{notation}

\subsection{Implementing the purity observable}\label{sec:purity-implement}

In this section, we describe how to compute the~$\cO_{(2)}$ observable for estimating the purity,
which we used in \Cref{sec:purity} to test whether a state is maximally mixed.
We begin by deriving the eigendecomposition for \emph{all}~$\cO_{\mu}$ observables.

\begin{notation}
Given a partition $\lambda \vdash n$, we write $\Pi_\lambda$ for the projector onto the $\lambda$-isotypic subspace in \Cref{eq:schur-weyl}.
If $\ell(\lambda) > d$, then $\Pi_\lambda$ is just the all-zeros matrix.
\end{notation}

\begin{proposition}\label{prop:eigendecomposition}
For any partition $\mu \vdash k \leq n$,
\begin{equation*}
\cO_{\mu}=
\sum_{\lambda} \frac{\chi_\lambda(\mu\cup 1^{n-k})}{\dim(\lambda)} \cdot \Pi_\lambda.
\end{equation*}
\end{proposition}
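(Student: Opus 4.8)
The plan is to recognize $\cO_\mu$ as the image under the representation $\RepP$ of a central element of $\C\gS_n$ and then combine Schur's lemma with Schur--Weyl duality. Following the $1$-cycle-omission convention of \Cref{rem:okay}, the symbol $\cO_\mu$ stands for $\RepP(\cO_\kappa)$, where $\kappa = \mu \cup 1^{n-k} \vdash n$ is the genuine $\gS_n$ cycle type and $\cO_\kappa = \avg_{\cyc(\pi)=\kappa}\{\pi\}$ is the normalized indicator of the conjugacy class $C_\kappa$. By \Cref{prop:real-basis-center} this element lies in $Z(\C\gS_n)$, so by Schur's lemma it acts as a scalar on each irreducible $\gS_n$-module $\Specht\lambda$ appearing in the Schur--Weyl decomposition \Cref{eq:schur-weyl}.

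Next I would identify that scalar, call it $\omega_\lambda(\kappa)$, by a trace computation on $\Specht\lambda$. Since $\cO_\kappa = \frac{1}{|C_\kappa|}\sum_{\pi\in C_\kappa}\pi$ and the character $\chi_\lambda$ is constant on $C_\kappa$, on one hand $\tr(\sirrep\lambda(\cO_\kappa)) = \frac{1}{|C_\kappa|}\cdot|C_\kappa|\,\chi_\lambda(\kappa) = \chi_\lambda(\kappa)$, while on the other hand $\tr(\omega_\lambda(\kappa)\,\Id) = \omega_\lambda(\kappa)\dim(\lambda)$. Equating the two gives $\omega_\lambda(\kappa) = \chi_\lambda(\kappa)/\dim(\lambda)$. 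The only bookkeeping point here is the class-\emph{average} normalization built into $\cO_\kappa$: had I instead used the unnormalized class sum, an extra factor of $|C_\kappa|$ would survive.

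Finally I would reassemble globally. Under the isomorphism \Cref{eq:schur-weyl}, $\RepP(\pi)$ acts as $\sirrep\lambda(\pi)\otimes\Id$ on the block $\Specht\lambda\otimes\Weyl\lambda d$; hence $\RepP(\cO_\kappa)$ acts on that block as $\omega_\lambda(\kappa)\,(\Id\otimes\Id) = \frac{\chi_\lambda(\kappa)}{\dim(\lambda)}\Id$, which is precisely $\frac{\chi_\lambda(\kappa)}{\dim(\lambda)}\Pi_\lambda$ when viewed on all of $(\C^d)^\on$. Summing over all $\lambda$ (equivalently, over all $\lambda \vdash n$ with $\ell(\lambda)\le d$, since $\Pi_\lambda = 0$ whenever $\ell(\lambda) > d$) yields the stated formula with $\kappa = \mu\cup 1^{n-k}$.

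I do not expect a genuine obstacle: the substance is the standard representation-theoretic fact that a class sum acts on an irrep by the scalar $|C|\chi_\lambda/\dim$, and the only care needed is (i) the class-average versus class-sum normalization noted above and (ii) confirming that padding $\mu$ with $n-k$ fixed points produces the correct $\gS_n$ conjugacy class, so that $\chi_\lambda$ is evaluated at $\mu\cup 1^{n-k}$ rather than at $\mu$ as a partition of $k$.
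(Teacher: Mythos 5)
Your proof is correct and takes essentially the same route as the paper: both identify $\cO_\mu$ as the image under $\RepP$ of the (normalized) class average for the cycle type $\mu \cup 1^{n-k}$, apply Schur's lemma blockwise in the Schur--Weyl decomposition, and fix the scalar on each $\Specht{\lambda}$ by the trace computation $\chi_\lambda(\mu\cup 1^{n-k})/\dim(\lambda)$. Your explicit attention to the class-average versus class-sum normalization and to the padding by fixed points matches the conventions the paper relies on implicitly.
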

\begin{proof}
By definition of $\cO_{\mu}$,
\begin{equation*}
\cO_{\mu}
= \avg_{\substack{\pi \in \Sym{n}\\\cyc(\pi) = \mu}}\{\RepP(\pi)\}
\cong \bigoplus_{\substack{\lambda \vdash n\\ \ell(\lambda) \leq d}} \avg_{\substack{\pi \in \Sym{n}\\\cyc(\pi) = \mu}}\{\sirrep{\lambda}(\pi)\} \otimes I_{\dim(\Weyl{\lambda}{d})}
= \bigoplus_{\substack{\lambda \vdash n\\ \ell(\lambda) \leq d}}
	\frac{\chi_\lambda(\mu\cup 1^{n-k})}{\dim(\lambda)} \left(I_{\dim(\lambda)} \otimes I_{\dim(\Weyl{\lambda}{d})}\right),
\end{equation*}
where the last step is by Schur's lemma and the fact that $\tr(\sirrep{\lambda}(\pi)) = \chi(\mu \cup 1^{n-k})$ if $\cyc(\pi) = \mu$.
The right-hand side equals the expression in the proposition, as the $I_{\dim(\lambda)} \otimes I_{\dim(\Weyl{\lambda}{d})}$ term
just projects into the $\lambda$-isotypic subspace.
\end{proof}

Hence, to implement the $\cO_{\mu}$ observable,
we  measure according to the $\Pi_\lambda$ projectors and output $\chi_\lambda(\mu \cup 1^{n-k})/\dim(\lambda)$.
As we will see, this can be done efficiently for $\mu = (2)$.

\begin{definition}\label{def:schur-projector}
\emph{Weak Schur sampling} refers to performing the projective measurement $\{\Pi_\lambda\}_\lambda$ on the space $(\C^d)^{\otimes n}$.
It can be implemented in time $\poly(n,d)$; see, for example, \cite{MdW16}.
\end{definition}

\begin{definition}\label{def:purityestimator}
Given a partition $\lambda \vdash n$, the \emph{second moment estimator} is defined as
\begin{equation*}
\purityestimator{\lambda}:= \frac{\chi_{\lambda}(2 \cup 1^{n-2})}{\dim(\lambda)}.
\end{equation*}
In general, computing the characters of the symmetric group is $\#\PTIME$-hard~\cite{Hep94}
(in fact, even deciding whether a character is nonzero is $\NP$-hard~\cite{PP17}).
However, Frobenius~\cite{Fro00} gives an explicit formula for the character ratio~$\purityestimator{\lambda}$
(see Ingram~\cite{Ing50} for a simple proof of this formula).
The following equivalent expression is found, for example, in~\cite{IO02}:
\begin{equation}\label{eq:second-moment-polynomial}
\purityestimator{\lambda} = \frac{1}{n(n-1)} \sum_{i=1}^d \left((\lambda_i - i + \tfrac{1}{2})^2 - (-i + \tfrac{1}{2})^2\right).
\end{equation}
\end{definition}

As a result, because weak Schur sampling
and computing $\purityestimator{\lambda}$
are both efficient operations, we can conclude with the following theorem.

\begin{theorem}
The $\cO_{(2)}$ observable can be computed in time $\poly(n,d)$.
\end{theorem}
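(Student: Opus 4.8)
The plan is to read the implementation straight off the eigendecomposition established in \Cref{prop:eigendecomposition}. First I would specialize that proposition to $\mu = (2)$ (so $k = 2$), obtaining
\[
    \cO_{(2)} = \sum_\lambda \frac{\chi_\lambda(2 \cup 1^{n-2})}{\dim(\lambda)}\, \Pi_\lambda = \sum_\lambda \purityestimator{\lambda}\, \Pi_\lambda.
\]
This exhibits $\cO_{(2)}$ as a self-adjoint operator whose spectral projectors are exactly the $\lambda$-isotypic projectors $\{\Pi_\lambda\}_\lambda$ from Schur--Weyl duality, with associated eigenvalue $\purityestimator{\lambda}$ on the $\lambda$-block. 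Measuring the observable therefore amounts to a two-stage procedure: perform the projective measurement $\{\Pi_\lambda\}_\lambda$, and on outcome $\lambda$ report the real number $\purityestimator{\lambda}$.

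The proof then reduces to checking that both stages run in time $\poly(n,d)$. The first stage is precisely \emph{weak Schur sampling}, which by \Cref{def:schur-projector} can be carried out in time $\poly(n,d)$. It remains to argue that the classical post-processing---computing $\purityestimator{\lambda}$ from the output partition $\lambda$---is also efficient, after which composing the two efficient stages gives the claimed bound.

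This last point is where the only real subtlety lies. A priori one might worry that evaluating the character ratio $\chi_\lambda(2 \cup 1^{n-2})/\dim(\lambda)$ is intractable, since computing symmetric-group characters is $\#\PTIME$-hard in general (and even testing nonvanishing is $\NP$-hard). The escape is that for the specific cycle type $(2)$ there is the closed-form Frobenius expression \eqref{eq:second-moment-polynomial}, namely a single sum of $d$ terms, each a quadratic in the parts $\lambda_i$. Evaluating it requires only $O(d)$ elementary arithmetic operations on numbers of size $\poly(n)$, hence is manifestly $\poly(n,d)$. Combining the efficient weak Schur sampling with this efficient evaluation of $\purityestimator{\lambda}$ yields the desired $\poly(n,d)$-time implementation of the $\cO_{(2)}$ observable, completing the argument.
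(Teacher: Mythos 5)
Your proposal is correct and follows the paper's own argument essentially verbatim: specialize \Cref{prop:eigendecomposition} to $\mu=(2)$, implement the spectral measurement as weak Schur sampling per \Cref{def:schur-projector}, and evaluate the eigenvalue $\purityestimator{\lambda}$ via the closed-form Frobenius expression \eqref{eq:second-moment-polynomial} rather than a general (and $\#\PTIME$-hard) character computation. Nothing is missing.
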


\noindent
We note that this is the same algorithm as~\cite{OW15} used for testing whether a state is maximally mixed,
and it was previously used by~\cite{CHW07} to distinguish the maximally mixed state from states which are maximally mixed on a subspace of dimension~$d/2$.
For a more intuitive view of this algorithm,
suppose we perform weak Schur sampling on $\rho^{\otimes n}$,
where~$\rho$ is a density matrix with sorted eigenvalues $\alpha = (\alpha_1, \ldots, \alpha_d)$.
A long line of work~\cite{ARS88, KW01, HM02, CM06, OW16, OW17} has shown that the random measurement outcome~$\blambda$,
when rescaled as $\blambda/n := (\blambda_1/n, \ldots, \blambda_d/n)$,
is a good approximation to~$\alpha$.
To estimate the purity~$p_2(\alpha)$ of~$\alpha$, then,
it is natural to output a statistic close to~$p_2(\blambda/n)$,
and $\purityestimator{\blambda}$ is the apparent appropriate statistic.

\begin{remark}
The $\cO_{\mu}$ observables are related to the \emph{central characters},
defined for any $\lambda \vdash n$ and $\mu \vdash k$ as
\begin{equation*}
        p^\#_\mu(\lambda) = \begin{cases}
                            n^{\downarrow k}\cdot \frac{\chi_\lambda(\mu \cup 1^{n-k})}{\dim(\lambda)} & \text{if }n \geq k,\\
                            0 & \text{if }n < k,
                                 \end{cases}
\end{equation*}
where $n^{\downarrow k} = n (n-1) \cdots (n-k+1)$.
For~$\mu$ fixed, these are polynomials which are \emph{shifted-symmetric} in the $\lambda_i$'s, in the sense of~\cite{OO98b},
of which \Cref{eq:second-moment-polynomial} is a special case; see~\cite{IO02} for a particularly thorough treatment of these polynomials.
Our rule for multiplying the $\cO_{\mu}$'s can be viewed as deriving from the multiplication rule for $p^\#_\mu$ polynomials due to~\cite{IK01}.
\end{remark}

\subsection{Implementing the Hilbert--Schmidt observable}\label{sec:implement-hilbert-schmidt}

In this section, we describe how to compute the~$\cO_{(\rho \rho)} + \cO_{(\sigma \sigma)} - 2 \cO_{(\rho \sigma)}$
observable for estimating the squared Hilbert--Schmidt distance between~$\rho$ and~$\sigma$,
which we used in~\Cref{sec:hs} to test whether~$\rho$ and~$\sigma$ are equal.
There, we considered the general case of states $\rho^{\otimes m} \otimes \sigma^{\otimes n}$, for $m$ possibly not equal to~$n$.
For simplicity, we will restrict ourselves to the case $m = n$, though our argument easily extends to the general case.
In this section, and this section only, we will write the observable $\cO_{(2)} \in \C \gS_k$, for a given integer $k$, as  $\cO_{(2)}^k$, so as to make the~$k$ dependence explicit.
Given this, we can rewrite our Hilbert--Schmidt observable in the following manner.

\begin{proposition}\label{prop:weird-coeffs}
\begin{equation*}
\cO_{(\rho \rho)} + \cO_{(\sigma \sigma)} - 2 \cO_{(\rho \sigma)}
= \left(\frac{2n-1}{n}\right) \cdot \cO_{(\rho \rho)}
+ \left(\frac{2n-1}{n}\right) \cdot \cO_{(\sigma \sigma)}
 - \left(\frac{4n-2}{n}\right) \cdot \cO_{(2)}^{2n}.
\end{equation*}
\end{proposition}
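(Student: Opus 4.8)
The plan is to express the fully–averaged transposition observable $\cO_{(2)}^{2n}$ as a convex combination of the three ``typed'' averages $\cO_{(\rho \rho)}$, $\cO_{(\sigma \sigma)}$, $\cO_{(\rho \sigma)}$, and then solve that relation for $\cO_{(\rho \sigma)}$. Recall that $\cO_{(2)}^{2n}$ is by definition the average of $\RepP(\pi)$ over all transpositions $\pi \in \gS_{2n}$, of which there are $\binom{2n}{2} = n(2n-1)$. Each transposition $(s\,t)$ falls into exactly one of three classes according to whether both of $s,t$ lie in the $\rho$-block $\{1, \dotsc, n\}$, both lie in the $\sigma$-block $\{n+1, \dotsc, 2n\}$, or one lies in each; these are precisely the orbit classes underlying $\cO_{(\rho \rho)}$, $\cO_{(\sigma \sigma)}$, and $\cO_{(\rho \sigma)}$, which contain $\binom{n}{2}$, $\binom{n}{2}$, and $n^2$ transpositions respectively. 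As a consistency check, $\binom{n}{2} + \binom{n}{2} + n^2 = n(n-1) + n^2 = n(2n-1)$, as it must.

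Since each of the four observables is a uniform average over its class, splitting the sum defining $\cO_{(2)}^{2n}$ according to these classes gives
\begin{equation*}
\cO_{(2)}^{2n} = \frac{1}{n(2n-1)} \parens*{\binom{n}{2} \cO_{(\rho \rho)} + \binom{n}{2} \cO_{(\sigma \sigma)} + n^2 \cO_{(\rho \sigma)}}.
\end{equation*}
Substituting $\binom{n}{2} = \tfrac{n(n-1)}{2}$ and clearing the common factor of $n$, this rearranges to
\begin{equation*}
n \, \cO_{(\rho \sigma)} = (2n-1) \cO_{(2)}^{2n} - \tfrac{n-1}{2} \cO_{(\rho \rho)} - \tfrac{n-1}{2} \cO_{(\sigma \sigma)}.
\end{equation*}

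Finally I would substitute the resulting expression $\cO_{(\rho \sigma)} = \tfrac{2n-1}{n} \cO_{(2)}^{2n} - \tfrac{n-1}{2n} \cO_{(\rho \rho)} - \tfrac{n-1}{2n} \cO_{(\sigma \sigma)}$ into the left-hand side $\cO_{(\rho \rho)} + \cO_{(\sigma \sigma)} - 2 \cO_{(\rho \sigma)}$ and collect terms. The coefficient of $\cO_{(\rho \rho)}$ becomes $1 + \tfrac{n-1}{n} = \tfrac{2n-1}{n}$ (and likewise for $\cO_{(\sigma \sigma)}$), while the coefficient of $\cO_{(2)}^{2n}$ becomes $-\tfrac{2(2n-1)}{n} = -\tfrac{4n-2}{n}$, which is exactly the claimed identity. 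The argument is purely combinatorial: its only content is the transposition count in each of the three blocks, so there is no genuine obstacle beyond doing the bookkeeping carefully — in particular, keeping in mind that the $\cO$'s are \emph{normalized} averages rather than sums, so that the weights appearing in the convex combination are the ratios of the class sizes to $\binom{2n}{2}$ rather than the raw counts.
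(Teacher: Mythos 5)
Your proof is correct and follows essentially the same route as the paper: you decompose $\cO_{(2)}^{2n}$ as the convex combination $\tfrac{n-1}{4n-2}\cO_{(\rho\rho)} + \tfrac{n-1}{4n-2}\cO_{(\sigma\sigma)} + \tfrac{2n}{4n-2}\cO_{(\rho\sigma)}$ via the transposition count in each block, and then substitute. The only cosmetic difference is that you solve explicitly for $\cO_{(\rho\sigma)}$ before substituting, whereas the paper leaves that step to the reader; the bookkeeping checks out.
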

\begin{proof}
The observable $\cO_{(2)}^{2n}$ decomposes as
\begin{equation*}
\cO_{(2)}^{2n}
= \left(\frac{n-1}{4n-2}\right) \cdot \cO_{(\rho \rho)} + \left(\frac{n-1}{4n-2}\right) \cdot \cO_{(\sigma \sigma)} + \left(\frac{2n}{4n-2}\right) \cO_{(\rho \sigma)},
\end{equation*}
where the weights correspond to the probabilities that a random $2$-cycle from $\gS_n$ either falls in the first half of $[2n]$, the second half, or falls in both halves.
The proposition follows by substitution.
\end{proof}

If we note that $\cO_{(\rho \rho)} = \cO_{(2)}^{n} \otimes \Id$ and $\cO_{(\sigma \sigma)} = \Id \otimes \cO_{(2)}^{n}$,
where $I$ is the identity matrix acting on $(\C^d)^{\otimes n}$,
then
by \Cref{prop:eigendecomposition} and \Cref{def:purityestimator},
we can rewrite the first two terms in \Cref{prop:weird-coeffs} as
\begin{equation}\label{eq:simultaneously-diagonalizable}
\left(\frac{2n-1}{n}\right) \cdot \cO_{(\rho \rho)}
+ \left(\frac{2n-1}{n}\right) \cdot \cO_{(\sigma \sigma)}
= \left(\frac{2n-1}{n}\right) \sum_{\lambda, \mu \vdash n} (\purityestimator{\lambda} + \purityestimator{\mu}) \cdot \Pi_\lambda \otimes \Pi_\mu.
\end{equation}
We can also rewrite the third term in \Cref{prop:weird-coeffs} as
\begin{equation}\label{eq:thing-its-simultaneously-diagonalizable-with}
\left(\frac{4n-2}{n}\right) \cdot \cO_{(2)}^{2n} = \left(\frac{4n-2}{n}\right)\sum_{\nu \vdash 2n} \purityestimator{\nu} \cdot \Pi_\nu.
\end{equation}
We note that $\cO_{(2)}^{2n}$ commutes with $\cO_{(\rho \rho)}$  and $\cO_{(\sigma \sigma)}$.
This is because both of these latter matrices are elements of $\C \gS_{2n}$,
and by \Cref{prop:real-basis-center} we know that the center of $\C \gS_{2n}$ contains $\cO_{(2)}^{2n}$.
By linearity, \eqref{eq:simultaneously-diagonalizable} therefore commutes with~\eqref{eq:thing-its-simultaneously-diagonalizable-with},
and as a result these two matrices are simultaneously diagonalizable,
with joint eigenspaces corresponding to the projectors $(\Pi_\lambda \otimes \Pi_\mu) \Pi_\nu = \Pi_\nu(\Pi_\lambda \otimes \Pi_\mu)$.
Applying \Cref{prop:weird-coeffs}, we have that
\begin{equation*}
\cO_{(\rho \rho)} + \cO_{(\sigma \sigma)} - 2 \cO_{(\rho \sigma)}
= \sum_{\substack{\lambda, \mu \vdash n\\\nu \vdash 2n}}\left(\left(\frac{2n-1}{n}\right) (\purityestimator{\lambda} + \purityestimator{\mu}) - \left(\frac{4n-2}{n}\right) \purityestimator{\nu}\right) \cdot \Pi_\nu (\Pi_\lambda \otimes \Pi_\mu).
\end{equation*}
This equation immediately gives us our algorithm.
\begin{theorem}
Given $\rho^{\otimes n}$ and $\sigma^{\otimes n}$, the Hilbert--Schmidt observable can be computed as follows:
\begin{itemize}
\item Perform weak Schur sampling on $\rho^{\otimes n}$ and $\sigma^{\otimes n}$, receiving~$\bmu, \bnu \vdash n$, respectively.
\item Perform weak Schur sampling on all $2n$ qudits, receiving~$\blambda \vdash 2n$.
\item Output
\begin{equation*}
\left(\frac{2n-1}{n}\right) \cdot \purityestimator{\bmu} + \left(\frac{2n-1}{n}\right) \cdot \purityestimator{\bnu} - \left(\frac{4n-2}{n}\right) \cdot \purityestimator{\blambda}.
\end{equation*}
\end{itemize}
As noted in \Cref{def:schur-projector}, the Hilbert--Schmidt observable can therefore be computed in time $\mathrm{poly}(n,d)$.
\end{theorem}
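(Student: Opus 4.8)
The plan is to read off the measurement directly from the spectral decomposition displayed just before the theorem. By \Cref{def:observable}, one implements an observable by performing the projective measurement given by its spectral projectors and reporting the associated eigenvalue; here the prescribed family is $\{\Pi_\nu(\Pi_\lambda \otimes \Pi_\mu)\}_{\lambda,\mu\vdash n,\ \nu\vdash 2n}$, and on the range of each such projector the observable $\cO_{(\rho\rho)}+\cO_{(\sigma\sigma)}-2\cO_{(\rho\sigma)}$ acts as the scalar $\left(\tfrac{2n-1}{n}\right)(\purityestimator{\lambda}+\purityestimator{\mu}) - \left(\tfrac{4n-2}{n}\right)\purityestimator{\nu}$. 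First I would verify that this is a legitimate complete family of pairwise-orthogonal projectors. The families $\{\Pi_\nu\}_{\nu\vdash 2n}$ and $\{\Pi_\lambda\otimes\Pi_\mu\}_{\lambda,\mu\vdash n}$ are each a resolution of the identity into orthogonal projectors, and they commute: $\Pi_\lambda\otimes\Pi_\mu$ lies in $\RepP(\C\gS_{2n})$ (it is a product of two Schur projectors supported on the disjoint halves of $[2n]$), while $\Pi_\nu$ is central in $\RepP(\C\gS_{2n})$, being a spectral projector of the central element $\cO_{(2)}^{2n}$ (cf.\ \Cref{prop:real-basis-center,prop:eigendecomposition}), and central elements commute with all of $\RepP(\C\gS_{2n})$. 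Hence the products $\Pi_\nu(\Pi_\lambda\otimes\Pi_\mu)$ are orthogonal projectors summing to $\Id$; since the eigenvalue function is constant on each, this family only refines the true spectral decomposition, so reporting the eigenvalue reproduces the random variable of the observable exactly, with no variance lost.

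Next I would match each factor to a weak Schur sampling. By \Cref{def:schur-projector}, weak Schur sampling on all $2n$ qudits is the measurement $\{\Pi_\nu\}_{\nu\vdash 2n}$, returning the outcome named $\blambda$ in the theorem. Recalling $\cO_{(\rho\rho)}=\cO_{(2)}^n\otimes\Id$ and $\cO_{(\sigma\sigma)}=\Id\otimes\cO_{(2)}^n$, weak Schur sampling on the first $n$ qudits (the $\rho$ copies) implements $\{\Pi_\lambda\}\otimes\Id$ and weak Schur sampling on the last $n$ (the $\sigma$ copies) implements $\Id\otimes\{\Pi_\mu\}$, returning $\bmu$ and $\bnu$; together these two commuting samplings realize $\{\Pi_\lambda\otimes\Pi_\mu\}$. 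Because the two families commute, performing the three samplings in any order realizes the joint projective measurement onto the $\Pi_\nu(\Pi_\lambda\otimes\Pi_\mu)$ with no mutual disturbance, and the outcome triple $(\bmu,\bnu,\blambda)$ (playing the roles of $\lambda,\mu,\nu$) names the joint eigenspace. Reporting the corresponding eigenvalue is precisely the stated output.

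Finally, efficiency is immediate: \Cref{def:schur-projector} records that each weak Schur sampling runs in time $\poly(n,d)$, and by \Cref{def:purityestimator} each value $\purityestimator{\cdot}$ is computed from the explicit shifted-symmetric polynomial~\eqref{eq:second-moment-polynomial}, so the concluding arithmetic is polynomial-time as well. I expect the step requiring the most care to be the commutation/no-disturbance argument of the first two paragraphs --- justifying that the two half-system samplings and the full-system sampling are jointly compatible and that their combined three-outcome measurement genuinely refines the observable's spectral decomposition, so that reporting the eigenvalue reproduces the observable exactly rather than merely some coarsened (higher-variance) version of it. Everything else is bookkeeping that follows from the already-established \Cref{prop:eigendecomposition,prop:weird-coeffs}.
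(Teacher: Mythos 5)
Your proposal is correct and follows essentially the same route as the paper: it reads the algorithm off the joint spectral decomposition into the commuting projectors $\Pi_\nu(\Pi_\lambda\otimes\Pi_\mu)$, justified by the centrality of $\cO_{(2)}^{2n}$ in $\C\gS_{2n}$, exactly as in the discussion surrounding \Cref{prop:weird-coeffs}. The extra care you take in arguing that the three commuting samplings jointly realize a refinement of the observable's spectral measurement (so no variance is lost) is a point the paper leaves implicit, but it is not a different argument.
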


\subsection{An alternative Hilbert--Schmidt observable}\label{sec:alternative}

In the case when the input is $\varrho = \rho^{\otimes n}$ and one already knows~$\sigma$,
one can estimate the squared Hilbert--Schmidt distance between~$\rho$ and~$\sigma$
by outputting~$m$ copies of~$\sigma$ and measuring the observable from \Cref{sec:implement-hilbert-schmidt}.
In this section, we record an alternative observable which performs the same task without first preparing copies of~$\sigma$.

\begin{definition}
For a word $w \in [d]^n$,
its \emph{type} is given by $\tau = (\tau_1, \ldots, \tau_d)$,
where $\tau_i$ is the number of $i$'s in~$w$, for each $i \in [d]$.
Write $\types{n}{d}$ for the set of types corresponding to words in~$[d]^n$;
then $(\tau_1, \ldots, \tau_d) \in \types{n}{d}$ if and only if each $\tau_i$ is a nonnegative integer and $\tau_1 + \cdots + \tau_d = n$.
The \emph{$\tau$-subspace} is the span of all vectors~$\ket{x}$ of type~$\tau$;
we write $\Pi_\tau$ for the corresponding projector.
\end{definition}

\begin{definition}
Given $\sigma = \mathrm{diag}(\beta)$, for $\beta = (\beta_1, \ldots, \beta_d)$,
we define the \emph{inner-product observable}
\begin{equation*}
\mathsf{IP} = \sum_{\tau \in \types{n}{d}} \frac{\langle \beta, \tau\rangle}{n} \cdot \Pi_\tau.
\end{equation*}
Its name refers to its expectation, $\E_{\varrho}[\mathsf{IP}]= \tr(\rho \sigma)$.
The alternative Hilbert--Schmidt observable is
\begin{equation*}
\cO_{(2)} + \tr(\sigma^2) \cdot \Id - 2 \cdot \mathsf{IP}.
\end{equation*}
By \Cref{ex:power-sum-estimator}, this has expectation $\E_{\varrho}[\cO_{(2)} + \tr(\sigma^2) \cdot \Id - 2 \cdot \mathsf{IP}] = \tr(\rho^2) + \tr(\sigma^2) - 2 \tr(\rho\sigma) =  \DHSsq{\rho}{\sigma}$.
\end{definition}

We see that this observable is an unbiased estimator for the squared Hilbert-Schmidt distance.
Its variance can be analyzed using the same techniques as for our other observables.
Doing so yields a bound that matches the variance of the normal Hilbert-Schmidt observable.

\begin{theorem}
This observable has variance
\begin{equation*}
\Var_\varrho[\cO_{(2)} + \tr(\sigma^2) \cdot \Id - 2 \cdot \mathsf{IP}]  = O\parens*{\frac{1}{n^2} + \frac{\DHSsq{\rho}{\sigma}}{n}}.
\end{equation*}
Applying \Cref{lem:chebyshev}, we rederive \Cref{thm:robust-HS} for the case of known~$\sigma$:
 $n= O(1/\epsilon^2)$ copies of~$\rho$ are sufficient to distinguish $\DHS{\rho}{\sigma} \leq .99\epsilon$ from $\DHS{\rho}{\sigma} \geq \epsilon$.
\end{theorem}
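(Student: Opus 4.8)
The plan is to mirror the variance computation of \Cref{prop:variance-hilbert-schmidt-estimator}, exploiting the two simplifications afforded by $\sigma$ being known: the observable $\cO_{(\sigma\,\sigma)}$ is replaced by the scalar $\tr(\sigma^2)\cdot\Id$ (zero variance, zero covariance with everything), and $\cO_{(\rho\,\sigma)}$ is replaced by $\mathsf{IP}$. The first move is to re-express both random observables in a form amenable to the independence relation \eqref{eqn:indep-analogue}. For $\cO_{(2)}$ this is already the pair-average $\binom{n}{2}^{-1}\sum_{s<t}\RepP((s\,t))$ of swaps. For $\mathsf{IP}$ I would verify the identity $\mathsf{IP} = \frac1n\sum_{s=1}^n \sigma^{(s)}$, where $\sigma^{(s)}$ applies the diagonal matrix $\sigma$ to the $s$-th copy and the identity to the rest: acting on a standard basis vector $\ket{x}$, the right-hand side multiplies by $\frac1n\sum_s \beta_{x_s} = \langle\beta,\tau(x)\rangle/n$, which matches the definition of $\mathsf{IP}$ through the type projectors $\Pi_\tau$.

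Given these forms, I would expand
\[
  \Var_\varrho[\cO_{(2)} + \tr(\sigma^2)\cdot\Id - 2\,\mathsf{IP}] = \Var_\varrho[\cO_{(2)}] + 4\,\Var_\varrho[\mathsf{IP}] - 4\,\Cov_\varrho[\cO_{(2)},\mathsf{IP}],
\]
using that $\tr(\sigma^2)\cdot\Id$ is constant and that all three operators are self-adjoint (so the covariances are real and symmetric). The term $\Var_\varrho[\cO_{(2)}]$ is supplied by \Cref{lem:purity-variance} and is $\le O(1/n^2) + \frac{4}{n}\bigl(\tr(\rho^3)-\tr(\rho^2)^2\bigr)$. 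For $\Var_\varrho[\mathsf{IP}]$ I would expand $\frac{1}{n^2}\sum_{s,t}\Cov_\varrho[\sigma^{(s)},\sigma^{(t)}]$; the off-diagonal terms vanish by \eqref{eqn:indep-analogue} since $\sigma^{(s)}$ and $\sigma^{(t)}$ then act on disjoint copies, leaving $\frac1n(\tr(\rho\sigma^2)-\tr(\rho\sigma)^2)$. For the covariance, only those pairs $(s,t)$ whose copies contain the position $u$ summed over in $\mathsf{IP}$ survive; reducing to the two-copy space and computing $\E_{\rho\otimes\rho}[\RepP((1\,2))\,(\sigma\otimes\Id)] = \tr(\rho^2\sigma)$ gives $\Cov_\varrho[\cO_{(2)},\mathsf{IP}] = \frac{2}{n}(\tr(\rho^2\sigma)-\tr(\rho^2)\tr(\rho\sigma))$, exactly matching the corresponding lemma in the two-unknown-state analysis.

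Collecting the $1/n$ contributions and discarding $O(1/n^2)$ terms, the surviving bracket is $\frac{4}{n}\bigl(\tr(\rho^3) + \tr(\rho\sigma^2) - 2\tr(\rho^2\sigma) - (\tr(\rho^2)-\tr(\rho\sigma))^2\bigr)$. Here I would use the trace identity $\tr(\rho^3)-2\tr(\rho^2\sigma)+\tr(\rho\sigma^2) = \tr(\rho(\rho-\sigma)^2)$ (valid under the trace despite $\rho,\sigma$ not commuting, since $\tr(\rho^2\sigma)=\tr(\rho\sigma\rho)$), so the bracket equals $\tr(\rho(\rho-\sigma)^2) - (\tr(\rho^2)-\tr(\rho\sigma))^2 \le \tr(\rho(\rho-\sigma)^2) \le \|\rho\|_\infty\cdot\tr((\rho-\sigma)^2) \le \DHSsq{\rho}{\sigma}$, using $\|\rho\|_\infty\le 1$. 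This yields $\Var_\varrho = O(1/n^2 + \DHSsq{\rho}{\sigma}/n)$. Finally, \Cref{lem:chebyshev} applies with $\mu = \DHSsq{\rho}{\sigma}$, $b(\mu)=O(1)$ and $v(\mu)=O(\mu)$, whose monotonicity hypotheses \eqref{eqn:increasing} are immediate, giving the claimed $n = O(1/\eps^2)$ tester.

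The main obstacle will be the covariance step: correctly reducing $\Cov_\varrho[\cO_{(2)},\mathsf{IP}]$ to the two-copy computation and verifying $\E_{\rho\otimes\rho}[\RepP((1\,2))(\sigma\otimes\Id)] = \tr(\rho^2\sigma)$ (for instance, from $\RepP((1\,2))(\sigma\otimes\Id) = \sum_{i,j}\beta_i\ket{ji}\bra{ij}$). Everything else is a direct transcription of the already-established purity and two-state Hilbert--Schmidt variance calculations.
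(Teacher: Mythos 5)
Your proposal is correct and is precisely the computation the paper gestures at but omits (the paper only remarks that ``the same techniques as for our other observables'' apply): you decompose the variance into $\Var[\cO_{(2)}]+4\Var[\mathsf{IP}]-4\Cov[\cO_{(2)},\mathsf{IP}]$, use the tensorization/independence identity \eqref{eqn:indep-analogue} to kill cross-copy terms, reduce the surviving covariance to the two-copy trace $\tr(\rho^2\sigma)$, and bound the resulting bracket by $\tr(\rho(\rho-\sigma)^2)\le\|\rho\|_\infty\,\DHSsq{\rho}{\sigma}$, exactly mirroring \Cref{lem:purity-variance} and \Cref{prop:variance-hilbert-schmidt-estimator}. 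All the individual identities you flag (in particular $\mathsf{IP}=\frac1n\sum_s\sigma^{(s)}$ and $\E_{\rho\otimes\rho}[\RepP((1\,2))(\sigma\otimes\Id)]=\tr(\rho^2\sigma)$) check out, so this is a faithful and complete filling-in of the paper's intended argument.
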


To implement this observable, we will need to find a common orthogonal basis for both $\cO_{(2)}$ and $\mathsf{IP}$.
This is provided by the following definition.

\begin{definition}\label{def:GZ}
Fix a Young diagram $\lambda \vdash n$.
The \emph{Young--Yamanouchi basis} of $\Specht{\lambda}$ has a vector $\ket{S}$ for each standard Young tableau $S \in \SYT{\lambda}$.
Similarly, the \emph{Gelfand--Tsetlin basis} of $\Weyl{\lambda}{d}$ has a  vector $\ket{T}$ for each semistandard Young tableau $T \in \SSYT{\lambda}{d}$.
By \Cref{def:schur-weyl}, the vectors $\ket{\lambda} \otimes\ket{S}\otimes \ket{T}$, ranging over all $\lambda\vdash n$, $S \in \SYT{\lambda}$, and $T\in\SSYT{\lambda}{d}$, therefore form a basis for the space $(\C^d)^{\otimes n}$.
Furthermore, this basis has the following property:
\begin{center}
Write $\tau = (\tau_1, \ldots, \tau_d) \in \types{n}{d}$ for the \emph{type} of $\ket{T}$,  where $\tau_i$ is the number of $i$'s in~$T$,
\\ for each $i \in [d]$.
Then $\ket{\lambda} \otimes\ket{S}\otimes \ket{T}$ is contained in the $\tau$-subspace of $(\C^d)^{\otimes n}$.
\end{center}
The unitary transformation which maps the standard basis into this basis is known as the \emph{Schur transform},
and by the work of~\cite{BCH05,Har05} it can be computed in time $\mathrm{poly}(n,d)$.
\end{definition}

Consider the \emph{$(\lambda, \tau)$-subspace} of $(\C^d)^{\otimes n}$,
i.e., the subspace spanned by those vectors of the form $\ket{\lambda}\otimes\ket{S}\otimes\ket{T}$, where $T$ has type~$\tau$.
Then by \Cref{def:GZ}, this is a subspace of both $\Pi_\lambda$ and $\Pi_\tau$ and is therefore simultaneously an eigenspace
for the~$\cO_{(2)}$ and~$\IP$ observables.
As a result, writing $\Pi_{\lambda, \tau}$ for the projector onto this subspace,
we may decompose our observable as
\begin{equation*}
\cO_{(2)} + \tr(\sigma^2) \cdot \Id - 2 \cdot \mathsf{IP}
= \sum_{\lambda, \tau} \left(\purityestimator{\lambda} + \tr(\sigma^2) - 2 \frac{\langle \beta, \tau\rangle}{n}\right) \cdot \Pi_{\lambda, \tau}.
\end{equation*}
As we have seen, we can perform the $\Pi_{\lambda, \tau}$ measurement using the Schur transform.
We can also compute it by performing the $\{\Pi_{\lambda}\}_\lambda$ measurement (i.e.,\ weak Schur sampling) followed by the $\{\Pi_{\tau}\}_\tau$ measurement,
using the fact that $\Pi_{\lambda, \tau} = \Pi_\lambda \Pi_\tau = \Pi_\tau \Pi_\lambda$.
In conclusion, we derive the following algorithm.

\begin{theorem}
Given $\rho^{\otimes n}$, the alternative Hilbert--Schmidt observable can be computed as follows:
\begin{itemize}
\item Measure $\rho^{\otimes n}$ in the Gelfand--Tsetlin basis, receiving a semistandard tableau~$\bT$ of shape~$\blambda$ and type~$\btau$.
\item Output
$
\purityestimator{\blambda} + \tr(\sigma^2) - \langle \beta, \btau\rangle/n.
$
\end{itemize}
Alternatively, we can receive~$\blambda$ and~$\btau$
by first performing weak Schur sampling and then performing the $\{\Pi_\tau\}_{\tau}$ projective measurement.
By \Cref{def:schur-projector} and \Cref{def:GZ},
both of these algorithms compute the alternative Hilbert-Schmidt observable in time $\mathrm{poly}(n,d)$.
\end{theorem}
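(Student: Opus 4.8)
The plan is to verify that each of the two described procedures realizes the spectral measurement of the alternative Hilbert--Schmidt observable, so that the reported number is exactly the random variable associated with that observable in the sense of \Cref{def:observable}; the running-time claim then reduces to citing poly-time subroutines.

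First I would take as given the eigendecomposition established in the paragraph preceding the theorem,
\[
  \cO_{(2)} + \tr(\sigma^2)\cdot\Id - 2\cdot\mathsf{IP}
  = \sum_{\lambda,\tau}\left(\purityestimator{\lambda} + \tr(\sigma^2) - 2\tfrac{\langle\beta,\tau\rangle}{n}\right)\Pi_{\lambda,\tau},
\]
where $\Pi_{\lambda,\tau}$ projects onto the $(\lambda,\tau)$-subspace of \Cref{def:GZ}. This exhibits the observable as constant on each $(\lambda,\tau)$-block, so each $\Pi_{\lambda,\tau}$ lies inside a single eigenspace and the family $\{\Pi_{\lambda,\tau}\}$ refines the spectral POVM. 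Consequently, performing $\{\Pi_{\lambda,\tau}\}$ and reporting $\purityestimator{\lambda}+\tr(\sigma^2)-2\langle\beta,\tau\rangle/n$ yields the same real-valued random variable as the spectral measurement of the observable, hence the mean and variance computed earlier.

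Next I would check that both procedures implement the measurement $\{\Pi_{\lambda,\tau}\}$. For the first, \Cref{def:GZ} states that each Gelfand--Tsetlin basis vector $\ket{\lambda}\otimes\ket{S}\otimes\ket{T}$ lies entirely in the $(\lambda,\tau)$-subspace with $\tau$ the type of $T$; hence measuring $\rho^{\otimes n}$ in this basis and recording the pair $(\blambda,\btau)$ is precisely the projective measurement $\{\Pi_{\lambda,\tau}\}$, returning $(\lambda,\tau)$ with probability $\tr(\varrho\,\Pi_{\lambda,\tau})$. For the alternative, the crucial point is that $\Pi_\lambda$ and $\Pi_\tau$ commute, with $\Pi_{\lambda,\tau}=\Pi_\lambda\Pi_\tau=\Pi_\tau\Pi_\lambda$: the type subspaces are weight spaces for the diagonal torus of $\GL(d)$, which commutes with the isotypic projector $\Pi_\lambda$ under the commuting $\Sym{n}\times\GL(d)$ action of Schur--Weyl duality (\Cref{def:schur-weyl}). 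For commuting projectors, weak Schur sampling followed by the type measurement $\{\Pi_\tau\}$ induces the same joint distribution on $(\lambda,\tau)$ as the single refined measurement, so the two routes are equivalent.

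Finally I would assemble the efficiency claim: weak Schur sampling and the Schur (Gelfand--Tsetlin) transform run in time $\poly(n,d)$ by \Cref{def:schur-projector,def:GZ}; the type measurement $\{\Pi_\tau\}$ is a standard-basis measurement followed by tallying letter multiplicities, hence $\poly(n,d)$; and the reported value is computed from the explicit shifted-symmetric formula \eqref{eq:second-moment-polynomial} for $\purityestimator{\blambda}$ together with the trivially computable scalars $\tr(\sigma^2)$ and $\langle\beta,\btau\rangle/n$. I expect the main obstacle to be the second step, namely confirming that both measurement routes genuinely implement the same POVM $\{\Pi_{\lambda,\tau}\}$; this rests on the commutativity $\Pi_\lambda\Pi_\tau=\Pi_\tau\Pi_\lambda$ and on the Gelfand--Tsetlin basis simultaneously diagonalizing both families, after which the remaining steps are bookkeeping and citation.
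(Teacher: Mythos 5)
Your proposal is correct and follows essentially the same route as the paper: the eigendecomposition of the observable over the $(\lambda,\tau)$-subspaces, the equivalence of the Gelfand--Tsetlin-basis measurement with weak Schur sampling followed by the type measurement via $\Pi_{\lambda,\tau}=\Pi_\lambda\Pi_\tau=\Pi_\tau\Pi_\lambda$, and the efficiency claims via the Schur transform and the explicit formula for $\purityestimator{\blambda}$. Note that your reported value $\purityestimator{\blambda}+\tr(\sigma^2)-2\langle\beta,\btau\rangle/n$ is the one consistent with the eigendecomposition; the theorem statement's ``$-\langle\beta,\btau\rangle/n$'' appears to have dropped a factor of $2$, so the discrepancy is a typo in the statement rather than a flaw in your argument.
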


\bibliographystyle{alpha}
\bibliography{quantum}
\end{document}